\DeclareMathOperator*{\argmax}{arg\,max}
\newtheorem{lemma}{Lemma}[section]
\newtheorem{example}{Example}[section] 
\title{Assortment Optimization for Patient-Provider Matching}
\author[1]{Naveen Raman}
\author[1]{Holly Wiberg}
\date{}
\affil[1]{Carnegie Mellon University}
\affil[1]{\texttt{\{naveenr, hwiberg\}@cmu.edu}}
\begin{document}

 \maketitle
\newif\ifshowdetails
\showdetailstrue  % or \showdetailsfalse

\begin{abstract}
    Rising provider turnover results in frequently needing to rematch patients with available providers. 
However, the rematching process is cumbersome for both patients and health systems, resulting in labor-intensive and ad hoc reassignments. 
We propose a novel patient-provider matching approach to address this issue by offering patients limited provider menus. The goal is to maximize match quality across the system while preserving patient choice. 
We frame this as a novel variant of assortment optimization, where patient-specific provider menus are offered upfront, and patients respond in a random sequence to make their selections. 
This hybrid offline-online setting is understudied in previous literature and captures system dynamics across various domains. 
We first demonstrate that a greedy baseline policy--which offers all providers to all patients--can maximize the match rate but lead to low-quality matches. 
Based on this, we construct a set of policies and demonstrate that the best policy depends on problem specifics, such as a patient's willingness to match and the ratio of patients to providers. 
On real-world data, our proposed policy improves average match quality by 13\% over a greedy solution by tailoring assortments based on patient characteristics. 
Our analysis reveals a tradeoff between menu size and system-wide match quality, highlighting the value of balancing patient choice with centralized planning. 
\end{abstract}

\section{Introduction}
\label{sec:intro}
Primary care providers (PCPs) are essential to the healthcare ecosystem because they are the first point of contact for many patients~\citep{physician_trust,physician_trust_adherence}. 
Patients rely on PCPs for routine checkups and referrals to specialists. PCPs serve as a regular touchpoint for patients and provide care continuity, which has been shown to instill trust and improve medication uptake rates and patient health~\citep{physician_trust_adherence}. Furthermore, high-quality patient-provider matches compound the benefit of primary care, where the notion of a high-quality match is heterogeneous across patients and can include factors such as geographical proximity, provider specialties, and concordance along language, race, and gender~\citep{race_concordance,gender_concordance,language_concordance}. As one such example, \citet{concordance_race_cardio} found that access to race-concordant doctors could reduce cardiovascular disease by 19\%~\citep{concordance_race_cardio}. 

Unfortunately, high provider turnover rates frequently lead patients to lose their PCP, which disrupts patient care and can lead to worse health outcomes~\citep{pcp_turnover}. 
In principle, healthcare administrators reassign unmatched patients to other providers. However, in practice, the process takes months due to provider scarcity and the administrative burden of patient coordination~\citep{finding_new_provider}.  

As an alternative to manual reassignments, algorithmically matching patients and providers can reduce logistical hassle. However, this approach requires carefully balancing patient autonomy and system-wide utility. 
While automatically assigning each patient to a provider would decrease wait times, it also eliminates patient choice~\citep{patient_autonomy,gaynor_free_2016}. 
In contrast, offering patients full autonomy to choose between all available providers could lead to suboptimal matches when considering system-wide match quality; for example, patients who choose quickly might prevent better matches for late-responding patients. In addition, granting patients full independence can be overwhelming and lead to decision delay~\citep{bate_choice_2005}. 

In this work, we formulate the patient-provider matching problem as a variant of assortment optimization~\citep{assortment_school,assortment_dating,assortment_mnl}; we illustrate this problem in Figure~\ref{fig:pull}.
Assortments are menus of options given to customers, who then make selections from this menu~\citep{assortment_optimization}. This framework is commonly seen on e-commerce platforms~\citep{e_commerce_assortment}. 
By offering limited provider menus, patients have the autonomy to select their preferred provider, while healthcare administrators can curate these assortments to maximize system-wide metrics. 

Our work studies a unique variant of assortment optimization with offline assortment and online response: assortments are created and offered simultaneously, and patients then respond sequentially in a random order to make provider selections. 
This variant is especially interesting because it captures the logistical complexities of the patient-provider matching problem. Dynamically varying assortments online are logistically taxing for healthcare administrators, yet it is difficult to control the patient response order, necessitating a combination of offline and online modeling. 
Moreover, this setting generalizes beyond patient-provider matching; similar situations can be found in other two-sided matching markets such as food delivery, where companies may propose a set of order options (menus) to drivers who respond in an online fashion, and where there is a tradeoff between driver autonomy and system utility (i.e., cost).

Our contributions answer this key question: \textit{How should healthcare administrators design provider menus for patients to optimize system-wide match rates and match quality?}
Our results show that limiting assortment sizes and offering carefully tailored menus enables a tradeoff between match rate and match quality, underscoring the balance between patient autonomy and centralized planning.

\subsection{Contributions}
Overall, we make three contributions to the patient-provider matching and assortment optimization literature: we i) model patient-provider matching using a hybrid offline-online variant of assortment optimization, ii) characterize the performance of assortment policies theoretically and empirically, and detail how the best-performing policy depends on problem characteristics such as patient match probability and patient/provider ratio, and iii) demonstrate that carefully tailoring assortment menus allows us to tradeoff between match rate and match quality.  

\paragraph{Patient-Provider Matching as Assortment Optimization}
We develop a model of patient-provider matching where administrators offer assortments upfront, and then patients respond sequentially in a random order. 
We do so to capture two critical elements: 1) patients have autonomy in selecting providers, and 2) the system has some control over the matching process in selecting assortments for each patient. 
Additionally, such a model can be broadly applicable to two-sided marketplaces where one side (users) selects online from an assortment, yet dynamic variation of the assortment is infeasible. 
We model patient decisions through a choice model that captures how the set of offered providers impacts the resultant matches. 
The selection of the choice model is key to the fidelity of our model, so we analyze various choice models, including the uniform choice model and the multinomial logit choice model.
Finally, we capture patient heterogeneity through a match quality matrix, which captures differences in match quality between pairs of patients and providers.

\paragraph{Characterizing Assortment Policies}
We study various policies for assortment optimization and characterize their performance theoretically and empirically. 
We quantify policy performance through two metrics: match rate and match quality. 
We first propose a baseline greedy policy that offers all providers to all patients, then demonstrate that this policy can perform arbitrarily poorly for match quality. 
We then build upon this by developing new policies that adaptively vary assortment sizes.
We demonstrate that tailored assortment offerings can improve match quality compared to baseline greedy solutions. 
We empirically characterize the best policy as a function of model parameters such as the ratio of patients to providers and the underlying model assumptions. 
\footnote{We include all code at \url{https://github.com/naveenr414/patient-provider/}} 

\paragraph{Real-World Matching System Insights}
We apply our policies in a semi-synthetic simulation based on a real-world healthcare system in Connecticut. 
We demonstrate that the best-performing policy here outperforms the greedy baseline by 13\% for match quality. 
We extend this to consider metrics beyond match quality and match rate, such as fairness and regret. 
These findings yield actionable recommendations and highlight the tensions between various metrics when designing real-world patient-provider matching systems. 
Our key takeaway is that the best set of assortments offered depends on problem specifics, including the ratio of patients to providers and the willingness of patients to match. 
As patients are more willing to match, assortments should be smaller and more tailored to maximize match quality. 

\begin{figure*}
    \centering 
    \includegraphics[width=\textwidth]{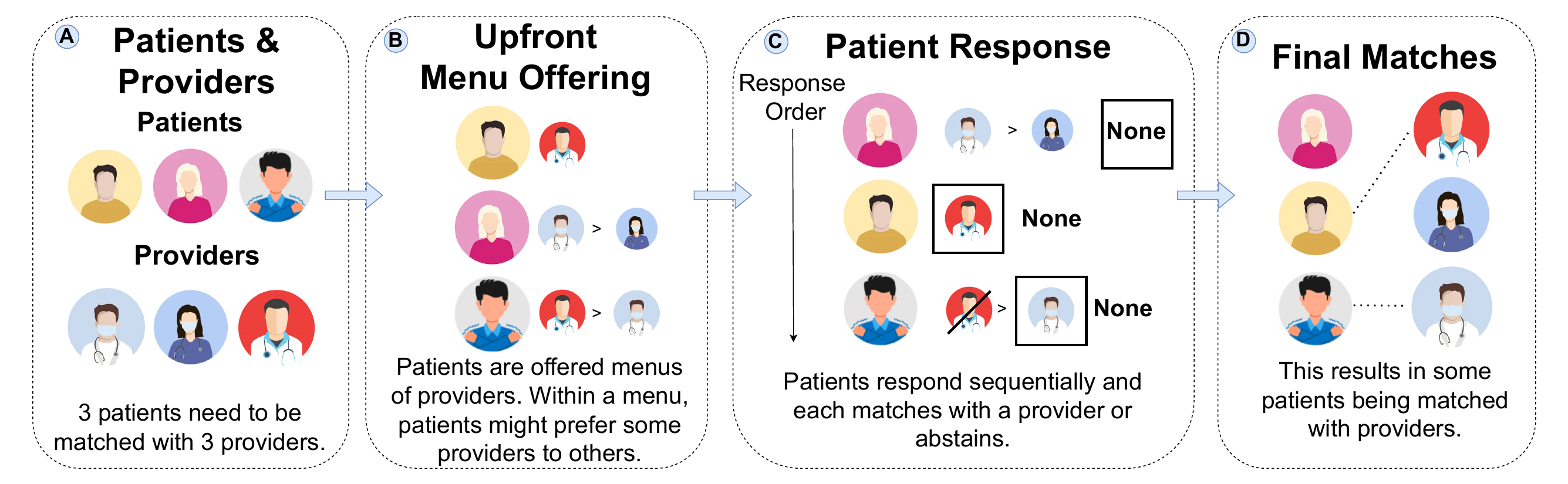}
    \caption{ Overview of patient-provider matching model. A) We match patients with providers B) by offering a set of assortments to patients, where each assortment consists of a set of providers. C) Patients then respond sequentially in a random order and either select a provider from their assortments or abstain. D) This results in matches between patients and providers.}
    \label{fig:pull}
\end{figure*}

\subsection{Related Work}
    \textbf{Matching Patients and Providers} 
Care continuity is critical for patient health, as it allows for better patient communication, lowers operating costs, and improves provider teamwork~\citep{lower_turnover}. 
Despite this, provider turnover rates are high, with the average healthcare system reporting a 7\% turnover rate year-to-year~\citep{lower_turnover}. 
Moreover, such issues are exacerbated recently due to high rates of provider burnout~\citep{provider_burnout} and the Covid-19 pandemic~\citep{provider_burnout_covid19}. 
When a provider leaves, it is difficult for patients to find a new one; only 54\% of patients find their new provider within one year, and 6\% fail to find a new provider even after three years~\citep{finding_new_provider}. 
High levels of provider turnover can worsen primary care because patients lack access to routine checkups which can reduce accountability, leading to consequences such as lower medication adherence~\citep{pcp_turnover,importance_primary_care,race_concordance_adherence,concordance_race_cardio}. 
Provider turnover itself impacts patient satisfaction and health outcomes, but the quality of the new match is also consequential in both dimensions. Nevertheless, it is difficult to quantify what constitutes a ``high-quality'' match. Patients care mainly about geographic proximity~\citep{patient_decisions}, with concordance along race~\citep{race_concordance}, gender~\citep{gender_concordance}, and language~\citep{language_concordance} playing a secondary role. 
Communication style alignment can improve match quality~\citep{personal_factor_concordance}. These factors build trust which translates to better care, such as leading patients to more truthfully report symptoms~\citep{race_concordance_lung}. 

\textbf{Algorithms for Patient-Provider Matching}
Existing work in patient-provider matching typically adopts either a one-shot matching framework or a genetic programming-based approach. 
Within the one-shot framework, linear programs are used to manage provider panel sizes~\citep{panel_size_lp}; deferred acceptance frameworks enable two-stage matching~\citep{two_stage_matching}; and scheduling algorithms model optimal rates of on-demand scheduling~\citep{on_demand_percent}. 
An alternative approach to patient-provider matching uses genetic programming, which learns matches over time through a fitness function. 
Within this framework, one line of work uses genetic programming to find fair matches between patients and providers~\citep{two_stage_matching}, while another approach uses genetic programming to balance workloads between providers~\citep{genetic_non_dominated}. 
In contrast to these approaches, we incorporate patient autonomy into the process, which motivates our assortment-based approach. 

\textbf{Assortment Optimization} 
Our work analyzes patient-provider matching through assortment optimization. In this setting, retailers construct assortments for customers, and customers make decisions based on offered options~\citep{assortment_optimization}. 
Retailers use models to capture customer choice behavior, such as the multinomial logit~\citep{assortment_mnl} and nested logit~\citep{assortment_nested_logit}, and generate product menus accordingly. 
Assortments are ubiquitous in e-commerce settings~\citep{e_commerce_assortment} but also employed in other domains such as school choice~\citep{assortment_school} and dating markets~\citep{assortment_dating}. Both offline and online variants have been studied. 
In offline assortment optimization, assortments are offered one-shot and customers make decisions simultaneously and independently~\citep{assortment_optimization}, while in online assortment optimization, customers arrive sequentially, and options are dynamically offered for each~\citep{assortment_online}. 
Other variants of assortment optimization include settings where system parameters are unknown~\citep{robust_assortment}, are dynamically learned~\citep{dynamic_assortment}, or have constraints~\citep{assortment_constraint}. 
Our setting resides between offline and online, as we offer assortments offline, but patients respond online. 
\section{Problem Formulation}
\label{sec:problem}

\subsection{Formal Model of Patient-Provider Matching}
\label{sec:model}
We introduce the problem of matching $N$ patients to $M$ providers. 
We present each patient $i$ with an assortment $\mathbf{X}_{i} \in \{0,1\}^{M}$ upfront.
Here, $X_{i,j}$ denotes whether patient $i$ has provider $j$ in their assortment. 
Patients then select providers from their assortment sequentially by responding in an order $\sigma$, and can abstain from provider selection. 
Here, $\sigma_{1} \in [N]$ represents the first patient in the order. 
We assume patient response times are i.i.d. so $\sigma$ is a uniformly random permutation of $[N]$; prior work demonstrates this can model real-world wait times~\citep{queueing_theory} and we relax this assumption in Section~\ref{sec:assumption_relaxation}. 

Patients select providers based on match quality $\theta_{i,j} \in [0,1]$, where $\theta_{i,j}$ is analogous to a reward for patient $i$ selecting provider $j$. 
Match quality can encompass various factors, including demographic concordance~\citep{race_concordance,gender_concordance,language_concordance}, physical proximity, and patient needs.
We assume administrators know $\theta_{i,j}$ because $\theta_{i,j}$ can be estimated through patient surveys; we further discuss this assumption in Section~\ref{sec:assumption_details}. 

Each patient selects a provider from their assortment according to a choice model $f_{i}: \{0,1\}^{M} \rightarrow \{0,1\}^{M}$. 
The choice model takes as input a 0-1 vector $\mathbf{z}$ denoting available and offered providers, and it outputs a random 0-1 vector representing the provider selected by patient $i$. 
Formally, let $\mathbf{y}^{(t)}$ represent the set of unselected providers at time $t$. 
Then, patient $\sigma_{t}$ selects providers according to $f_{\sigma_{t}}(\mathbf{X}_{\sigma_{t}} \odot \mathbf{y}^{(t)})$, where $\odot$ is the element-wise product ($\mathbf{a} \odot \mathbf{b} = \mathbf{c} \rightarrow a_{k} b_{k} = c_{k} \forall k$). 
$\mathbf{X}_{\sigma_{t}} \odot \mathbf{y}^{(t)}$ represents the set of providers that are on patient $\sigma_{t}$'s assortment and unselected by previous patients. 
If $f_{\sigma_{t}}(\mathbf{X}_{\sigma_{t}} \odot \mathbf{y}^{(t)}) = \mathbf{0}$, then no provider is selected, while if $f_{\sigma_{t}}(\mathbf{X}_{\sigma_{t}} \odot \mathbf{y}^{(t)})_{j} = 1$, then provider $j$ is selected. 
We note that a patient can match with at most one provider: $\lVert f_{\sigma_{t}}(\mathbf{X}_{\sigma_{t}} \odot \mathbf{y}^{(t)})\rVert_{1} \leq 1$. 
We assume that each provider can match with at most one patient as well. 
We assume this to simplify the presentation of our paper, though we relax this assumption in Section~\ref{sec:assumption_relaxation} and show in Appendix~\ref{sec:varied_capacity} that this is empirically equivalent to modifying the total number of providers.
We present examples of choice models below: 
\begin{enumerate}
    \item \textbf{Uniform} - Patients select their most preferred available provider with probability $p$ and otherwise abstain. 
    Let $\mathbf{e}_{k}$ denote the $k$-th standard basis vector, then: 
    
    \begin{equation}
    f_{i}(\mathbf{z}) =\left\{ \begin{array}{ll}
                \mathbf{e}_{\argmax_{j} (\theta_{i} \odot \mathbf{z})_{j}} & \text{with prb. p}\\
                \mathbf{0} &\text{otherwise}
            \end{array} \right.
    \end{equation}

    \item \textbf{Threshold} - Patients follow the uniform choice model but only select providers with match quality above some threshold $\alpha$. For example, if match quality corresponds to geographic proximity, $\alpha$ corresponds to maximum travel distance. 
Formally: 
    {\small \begin{equation}
    f_{i}(\mathbf{z}) =\left\{ \begin{array}{ll}
                \mathbf{e}_{\argmax_{j} (\theta_{i} \odot \mathbf{z})_{j}} & \text{with prb. p if } \max \theta_{i} \odot \mathbf{z} \geq \alpha \\
                \mathbf{0} &\text{otherwise}
            \end{array} \right.
    \end{equation}}
    \item \textbf{Multinomial Logit (MNL)} - Patients select providers according to $\theta$, and have an exit option valued at $\gamma$ which represents selecting no provider. Under the MNL choice model~\citep{choice_models_textbook}, $f$ is:
    \begin{equation}
            f_{i}(\mathbf{z}) =\left\{ \begin{array}{ll}
                \mathbf{e}_{j} & \text{with prb. } \frac{z_{j}\exp(\theta_{i,j})}{\exp(\gamma) + \sum_{j'} z_{j'} \exp(\theta_{i,j'})}\\\\
                \mathbf{0} &\text{with prb. } \frac{\exp(\gamma)} {\exp(\gamma) + \sum_{j'} z_{j'} \exp(\theta_{i,j'})}
            \end{array} \right.
    \end{equation}
\end{enumerate}
% Patients select providers according to a true match quality $\theta^{*}$, but we only have knowledge of $\theta_{i,j} = \theta^{*} + \epsilon$, where $\epsilon$ is Gumbel distributed noise~\citep{choice_models_textbook}. 
    % Patients either select a provider $j$ or can abstain from selecting any provider, which we represent through an exit option with $\gamma$: 

We aim to find a policy $\pi(\theta)$ that constructs an assortment $\mathbf{X}$ from a match quality matrix $\theta$. 
We evaluate policies through two metrics: match quality and match rate. 
\begin{enumerate}  
    \item \textbf{Match Rate (MR)} is the proportion of patients who are matched. Formally, it computes if patient $\sigma_{t}$ matches with a provider: $\lVert f_{\sigma_{t}}(\mathbf{X}_{\sigma_{t}} \odot \mathbf{y}^{(t)}) \rVert_{1} \geq 0$: 
    \begin{equation}
        \mathrm{MR}(\mathbf{X},\theta,f) = \mathbb{E}_{\sigma}[\frac{1}{N} \sum_{t=1}^{N} \lVert f_{\sigma_{t}}(\mathbf{X} \odot \mathbf{y}^{(t)}) \rVert_{1}]
    \end{equation}
    \item \textbf{Match Quality (MQ)} measures $\theta$ across selected patient-provider pairs. Formally, it aggregates $\theta_{\sigma_{t}}$ across provider selections $f_{\sigma_{t}}(\mathbf{X} \odot \mathbf{y}^{(t)})$ for all $N$ patients (= $N$ time periods):
    \begin{equation}
       \mathrm{MQ}(\mathbf{X},\theta,f) = \mathbb{E}_{\sigma}[\frac{1}{N} \sum_{t=1}^{N} f_{\sigma_{t}}(\mathbf{X} \odot \mathbf{y}^{(t)}) \cdot \theta_{\sigma_{t}}]
    \end{equation} 
\end{enumerate}
We select these two metrics because they naturally capture the scale and quality of matches, though, in Section~\ref{sec:real_world}, we evaluate additional metrics such as fairness and regret. 

\begin{example}
    We quantify the example in Figure~\ref{fig:pull} to demonstrate the metric calculations. 
    Suppose $N=M=3$, and let $\theta$ and $\mathbf{X}$ be 
    \begin{equation} 
        \theta =
        \begin{bmatrix}
        0.5 & 0.6 & 0.9 \\
        0.7 & 0.5 & 0.4 \\
        0.6 & 0.2 & 0.8
        \end{bmatrix}, \hspace{0.5cm}
        \mathbf{X} =
        \begin{bmatrix}
        0 & 0 & 1 \\
        1 & 1 & 0 \\
        1 & 0 & 1
        \end{bmatrix}
    \end{equation}
    Suppose $\sigma = [2,1,3]$. 
    Then one trial could be the following: patient $\sigma_{1}=2$ abstains from selecting a provider, patient $1$ selects provider $3$, and patient $3$ selects provider $1$: $f_{2}(\mathbf{y}^{(1)} \odot \mathbf{X}_{2}) = \mathbf{0}$, $f_{1}(\mathbf{y}^{(2)} \odot \mathbf{X}_{1}) = \mathbf{e}_{3}$, and $f_{3}(\mathbf{y}^{(3)} \odot \mathbf{X}_{3}) = \mathbf{e}_{1}$. 
    Here, $\mathbf{y}^{(1)} = [1,1,1]$, $\mathbf{y}^{(2)} = [1,1,1]$, and $\mathbf{y}^{(3)} = [1,1,0]$. The resulting match rate is $\frac{2}{3}$ and match quality is $\frac{1}{3}(\theta_{1,3} + \theta_{3,1}) = 0.5$.
\end{example}

\subsection{Relevance to Patient-Provider Matching}
\label{sec:assumption_details}
We provide additional details on the reasons behind the assumptions made in Section~\ref{sec:model} and their connections to realities in healthcare. 

\paragraph{Offline-Online Setting} 
We assume that patients are offered assortments offline and then respond in an online manner; this choice reflects logistical constraints and administrative burden.
While pure offline assignment limits autonomy, pure online assortments or dynamically modifying the assortments online results in two problems: a) potentially misaligned incentive for patients to wait to make decisions, and b) an additional logistical burden. 
If assortments are dynamically modified, then, for example, patients can see different providers depending on the day they log in to the patient portal (e.g., A-B-C on Monday, X-Y-Z on Wednesday), which leads to confusion and incentivizes waiting. 
Dynamic offering might also be infeasible if patient notifications are sent in an analog manner (e.g., snail mail), and offering assortments in an online fashion might require administrators to continually update and reorganize assortments.
In Section~\ref{sec:assumption_relaxation}, we show that relaxing the model to a fully online setting has little impact on our overall takeaways.

\paragraph{Assumptions} 
We make several key assumptions in the model definition: a) providers each have unit capacity, b) patients respond in a uniform random order, and c) the match quality $\theta$ is perfectly known. 
Our core analytical results rely on these assumptions, but we perform sensitivity analyses to study our policies' performance in more general settings. We detail how to relax the unit capacity assumption and provide results in Section~\ref{sec:assumption_relaxation} and demonstrate that this has little impact on our policies or findings. 
We assume that patients respond in random order to simplify the problem setup, and to better understand this assumption in practice, we experiment with non-uniform random orderings in Section~\ref{sec:assumption_relaxation}. 
Finally, while healthcare administrators do not currently explicitly estimate match quality $\theta$, this notion is anecdotally incorporated into rematching decisions, such as finding proximal providers based on a patient's location. To quantify match quality, administrators could collect preferences over providers for each patient and estimate rankings over providers via a learning-to-rank algorithm~\citep{learning_to_rank}; we leave this estimation problem to future study. 
Even if $\theta$ is not perfectly known, we demonstrate in Appendix~\ref{sec:misspecification} that estimates of $\theta$ suffice to allow for good performance. 

\paragraph{Model Fidelity} 
Our model aims to capture real-world concerns by balancing patient autonomy with system-wide utility through assortment offer decisions. 
The emphasis on patient autonomy arises from our conversations with healthcare partners, where patients might prefer a few high-quality choices to an overwhelmingly large number of options~\citep{choice_overload}, and from prior work that demonstrates people are willing to trade autonomy for efficiency, including in online markets~\citep{autonomy_efficiency} and domestic life~\citep{india_efficiency_autonomy}. 
Here, we intentionally leave the definition of ``quality'' ambiguous, as the definition of match quality should reflect patient preferences.  
For example, if patients primarily care about proximity, match quality should be based on the distance between patients and providers. 
We discuss this further in Section~\ref{sec:semi_synthetic_data}. 

\subsection{Analysis in the Single Provider Scenario}
\label{sec:greedy_provider}
To gain insight into the optimal policy for both match rate and match quality, we analyze the situation when $M=1$: 

\begin{restatable}{theorem2}{thmmone}
\label{thm:one_m}
    Let $f$ be the uniform choice model with probability $p$.
    Suppose $M=1$, and let $u_{1},u_{2},\ldots,u_{N}$ be a permutation of $\{1,\ldots,N\}$ such that $\theta_{u_{1},1} \geq \theta_{u_{2},1} \cdots \theta_{u_{N},1}$; that is patient $u_{1}$ has the highest match quality, followed by $u_{2}$, etc..
    Let $s$ be defined as follows:
    \begin{equation}
       s = \argmax_{s} (1-(1-p)^{s}) \frac{\sum_{i=1}^{s} \theta_{u_{i},1}}{s}
    \end{equation}
    Then the policy which maximizes match rate is $\pi(\theta) = \mathbf{X} = \mathbf{1}$ while the policy which maximizes match quality is $\pi(\theta) =  \mathbf{X}, \mathbf{X}_{u_{1},1} = \mathbf{X}_{u_{2},1} \cdots \mathbf{X}_{u_{s},1} = 1$, where $\mathbf{X}_{i,1}$ is 0 otherwise. 
\end{restatable}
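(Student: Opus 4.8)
The plan is to reduce the $M=1$ dynamics to an elementary sequential ``grab'' process, express both objectives in closed form as functions of the offered set, and then optimize.

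First I would observe that with a single provider a policy is completely determined by the set $S = \{i : X_{i,1}=1\}$ of patients offered the provider, and patients outside $S$ never interact with it. Under the uniform choice model, when a patient $i\in S$ is reached with the provider still available, $i$ takes it with probability $p$ and passes with probability $1-p$, independently across patients, and once taken the provider is gone. Since $\sigma$ induces a uniformly random order on $S$, a match occurs if and only if at least one patient of $S$ grabs: every patient in $S$ is reached precisely on the event that nobody before them grabbed, so ``no match'' is exactly the event that all $|S|$ of them pass, giving $\Pr[\mathrm{match}] = 1-(1-p)^{|S|}$.

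Next I would show that, conditional on a match, each patient in $S$ is equally likely to be the matched one. Condition on the random subset $G\subseteq S$ of patients whose independent coins come up ``grab'' (so each $i \in S$ lies in $G$ independently with probability $p$); the provider goes to whichever element of $G$ appears first in $\sigma$, hence given $|G|=g\ge 1$ each member of $G$ is matched with probability $1/g$ by exchangeability of $\sigma$. Therefore the conditional probabilities $\Pr[i\text{ matched}\mid G]$, $i\in S$, sum to $1$ when $G\neq\emptyset$ and to $0$ when $G=\emptyset$; taking expectations, $\sum_{i\in S}\Pr[i\text{ matched}] = \Pr[G\neq\emptyset] = 1-(1-p)^{|S|}$, and by symmetry each $i\in S$ is matched with probability $\frac{1-(1-p)^{|S|}}{|S|}$. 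Substituting into the definitions yields $\mathrm{MR}(\mathbf{X},\theta,f) = \frac{1}{N}\big(1-(1-p)^{|S|}\big)$ and $\mathrm{MQ}(\mathbf{X},\theta,f) = \frac{1}{N}\big(1-(1-p)^{|S|}\big)\cdot\frac{\sum_{i\in S}\theta_{i,1}}{|S|}$.

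Finally I would optimize each expression over $S$. Since $1-(1-p)^{|S|}$ is increasing in $|S|$, match rate is maximized at $S=[N]$, i.e.\ $\mathbf{X}=\mathbf{1}$. For match quality I would split the optimization by cardinality: for fixed $s=|S|$ the factor $1-(1-p)^s$ is constant, and because every $\theta_{i,1}\ge 0$ the average $\tfrac{1}{s}\sum_{i\in S}\theta_{i,1}$ over size-$s$ sets is maximized by taking the $s$ patients with the largest values $\theta_{\cdot,1}$, namely $S=\{u_1,\dots,u_s\}$ with value $\tfrac{1}{s}\sum_{i=1}^{s}\theta_{u_i,1}$; maximizing the product over $s$ then gives exactly the $s$ in the statement and the claimed assortment. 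I do not anticipate a genuine obstacle; the only step that needs care is the exchangeability argument that all offered patients are matched equally often (which also cleanly re-derives the match-rate formula), and the reduction to ``top-$s$'' sets relies only on $\theta_{i,1}\ge 0$.
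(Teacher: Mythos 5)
Your proposal is correct and follows essentially the same route as the paper's proof: decompose the $M=1$ objective into the match probability $1-(1-p)^{s}$ times the average quality of the offered set, then take the top-$s$ patients and optimize over $s$. The only difference is that you make rigorous (via conditioning on the set of ``grab'' coins and exchangeability of $\sigma$) the step the paper simply asserts by symmetry, namely that each offered patient is matched with equal probability.
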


We include all full proofs in Appendix~\ref{sec:proofs}, and sketch the brief intuition here. 
Here $s$ represents the number of patients we offer the single provider to, and $u$ represents the ordering of patient match qualities for a given provider. 
When $M=1$, we can decompose the match quality into the probability any patient matches and the average match quality.
When the single provider is offered to $s$ patients, the match rate is $ (1-(1-p)^{s})$, while the average match quality for the top $s$-patients is $\frac{\sum_{i=1}^{N} \theta_{u_{i},i}}{s}$. 
Therefore, to maximize match rate we let $s=N$, while for match quality, we find the $s$ which maximizes the product of the average match quality and the match rate. 
Our results demonstrate a tradeoff between match rate and match quality even in the single provider scenario, and we next detail algorithms to generalize this idea. 

\section{Constructing Assortment Policies}
\label{sec:policies}
We develop and analyze policies $\pi(\theta)$ to construct assortments that optimize for match rate and match quality. 
We first introduce a baseline greedy policy, which offers all providers to all patients. 
While the greedy policy maximizes the match rate, it can lead to poor match quality, motivating the need for more complex policies. 
As a result, we propose three new policies: pairwise, group-based, and gradient descent in Section~\ref{sec:introduce_policies} and analyze their match rate and match quality in Section~\ref{sec:match_rate} and Section~\ref{sec:match_quality} (summarized in Table~\ref{tab:policy_summary}).

We first introduce an illustrative example and use this throughout this Section. 
\begin{example}
    \label{ex:illustration}
    Consider a scenario with $p=0.75$ and $\theta = [0.7,0.7,0.1]$. 
    According to Theorem~\ref{thm:one_m}, the optimal assortment is $\mathbf{X} = [1,1,0]$, which achieves a match quality of $0.22$. 
\end{example}

\subsection{Assortment Policies}
\label{sec:introduce_policies}
\subsubsection{Greedy Policy}
\label{sec:greedy}
The \textbf{greedy policy} offers all providers to each patient and grants patients full autonomy. 
Formally, the greedy policy is $\pi^{R}(\theta)_{i,j} = 1$, for every patient $i$ and provider $j$. 
The greedy policy maximizes the match rate under the uniform choice model because $\mathbf{X}_{\sigma_{t}} = \mathbf{1}$, and so $\lVert f_{\sigma_{t}}(\mathbf{X}_{\sigma_{t}} \odot \mathbf{y}^{(t)})\rVert_{1} = \lVert f_{\sigma_{t}}(\mathbf{y}^{(t)}) \rVert_{1}$.

For match quality, the greedy policy considers assortments myopically and ignores the structure of the match quality matrix, which can lead to poor performance. 
For example, if $M=1$ and $\theta_{1,1} >> \theta_{2,1}$, the greedy policy gives both patients equal chances of matching, whereas the optimal policy offers $j=1$ only to patient $i=1$. 

\subsubsection{Pairwise Policy}
\label{sec:pairwise}
To motivate the need for policies beyond greedy, we demonstrate that the greedy policy is sub-optimal in Example~\ref{ex:illustration}.
Recall that $\theta = [0.7,0.7,0.1]$, so it is better for patients 1 or 2 to match rather than patient 3. 
The greedy policy has $\pi^{R}(\theta)  = [1,1,1]$, which leads to an average match quality of $\frac{(1-(1-p)^{N}) \frac{1}{N} \sum_{i=1}^{N} \theta_{i,1}}{N} = 0.16$. 
Greedy achieves a low match quality because it offers a provider to patient $3$, which drags down the average match quality.
More broadly, the greedy policy can offer providers to too many patients, which can lead to sub-optimal match qualities because assortments are not tailored. 

To better tailor assortments to patients, we introduce the \textbf{pairwise policy} $\pi^{P}(\theta)$, which contrasts with the greedy policy by offering each patient at most one provider. 
The pairwise policy, $\pi^{P}(\theta)$, pairs patients and providers by solving the weighted bipartite matching problem: 
\begin{equation}
\max\limits_{\substack{
\pi^{P}(\theta), \\
\sum_{i=1}^{N} \pi^{P}(\theta)_{i,j} \leq 1, \\
\sum_{j=1}^{M} \pi^{P}(\theta)_{i,j} \leq 1
}} 
\sum_{i=1}^{N} \sum_{j=1}^{M} \pi^{P}(\theta)_{i,j} \theta_{i,j}
\end{equation}
This strategy better incorporates match quality structure at the expense of fewer patient choices. 
We note that we can optimize $\pi^{P}(\theta)$ by solving a linear program. 

\subsubsection{Group-Based Policy}
\label{sec:grouping}
While the pairwise policy can potentially better tailor assortments to patients, it can suffer from assortments that are too constrained. 
For example, returning to Example~\ref{ex:illustration}, the pairwise policy offers $\mathbf{X} = [1,0,0]$, which achieves a match quality of $0.18$. 
While the pairwise policy improves upon the greedy policy for match quality, it is still not the optimal assortment ($\mathbf{X} = [1,1,0]$) because it only offers each provider at most one patient. 

To fix this problem, we design the \textbf{group-based} policy $\pi^{G}$, which expands assortments from the pairwise policy to improve match quality. 
The group-based policy heuristically groups together matches from the pairwise policy to form patient ``groups'' that are all offered the same providers. 
To define the group-based policy, first let $v(\theta)_{i} = j$ if $\pi^{P}(\theta)_{i,j} = 1$; that is, $v(\theta)_{i}$ denotes the matches from the pairwise policy, where $v(\theta)_{i} = -1$ if $\pi^{P}(\theta)_{i,j} = \ 0 \ \forall j$. 
The group-based policy proceeds in two steps: 1) weight computation and 2) group formation. 
1) We first compute edge weights $\alpha_{i,i'}$ for each pair of patients $i,i'$. 
Here, $\alpha_{i,i'}$ is an estimation of the benefit in match quality when $i$ and $i'$ are put into a group with an assortment that is the union of $\mathbf{X}_{i}$ and $\mathbf{X}_{i'}$. 
Formally, let $\mathbf{X}^{\prime} = \pi^{P}(\theta), {X}^{\prime}_{i,v_{i'}}= X^{\prime}_{i',v(\theta)_{i}}=1$, then  $\alpha_{i,i'} = \mathrm{MQ}(\mathbf{X}',\theta,f) - \mathrm{MQ} (\pi^{P}(\theta),\theta,f)$. 
2) We then repeatedly construct groups $\mathbf{q}$ that maximize the sum of edge weights $\alpha_{i,i'}$ within the group. 
That is we optimize: 
\begin{equation}
    \max\limits_{\mathbf{q}} \sum_{i} \sum_{i'>i} q_{i} q_{i'} \alpha_{i,i'}
\end{equation}
We note that we can find such a solution quickly by letting $z_{i,j} = q_{i} q_{j}$, noting that $q_{i} \in \{0,1\}$, and solving at most $N$ linear programs (though in practice, it requires far fewer computations). 
We repeatedly solve for $\mathbf{q}$ and construct assortments from this. 
By constructing assortments in this group-based format, we guarantee that the match rate of our group-based policy matches the pairwise policy; we provide details in Section~\ref{sec:match_rate}. 
We formalize this procedure in Algorithm~\ref{alg:grouping}. 

\begin{algorithm}[h]
   \caption{Group-based policy ($\pi^{G}$)}
\begin{algorithmic}[1]
    \STATE {\bfseries Input:} Match quality matrix $\theta$ 
    \STATE {\bfseries Output:} Assortment, $\mathbf{X}$
    \STATE Initialize the assortment $\mathbf{X} = \pi^{P}(\theta)$
    \FORALL{$(i,i') \subseteq [N]$}
        \STATE   $\mathbf{X}^{\prime} = \pi^{P}(\theta), {X}^{\prime}_{i,v_{i'}}= X^{\prime}_{i',v(\theta)_{i}}=1$ 
        \STATE Let $\alpha_{i,i'} = \mathrm{MQ}(\mathbf{X}',\theta,f) - \mathrm{MQ} (\pi^{P}(\theta),\theta,f)$ 
    \ENDFOR

    \STATE Let $s=\infty$ and $r=\{1,\ldots,N\}$
    \WHILE{$s > 0$}
        \STATE Let $s = \max\limits_{\mathbf{q}} \sum_{i \in r} \sum_{\substack{i' \in r \\ i' > i}} q_{i} q_{i'} \alpha_{i,i'}$, $\mathbf{q}^* = \arg\max\limits_{\mathbf{q}} \sum_{i \in r} \sum_{\substack{i' \in r \\ i' > i}} q_i q_{i'} \alpha_{i,i'},$
        \STATE Let $X_{i,v_{i'}} = 1$ for all $i,i'$ with $q^*_{i} = q^*_{i'} = 1$ 
        \STATE Remove all $i$ with $q^*_{i} = 1$ from $r$
    \ENDWHILE
\end{algorithmic}
\label{alg:grouping}
\end{algorithm}

\subsubsection{Gradient Descent Policy}
\label{sec:lower_bound}
We finally propose the \textbf{gradient descent} policy $\pi^{D}$, which generalizes beyond pairwise interactions by optimizing for a surrogate heuristic for match quality under the uniform choice model. 

The heuristic consists of two functions: $h(\mathbf{X}) \in [0,1]^{N \times M}$, where $h(\mathbf{X})_{i,j}$ denotes the probability patient $i$ has provider $j$ available in their assortment, and $g(h(\mathbf{X})) \in [0,1]^{N \times M}$, where $g(h(\mathbf{X}))_{i,j}$ denotes the probability that provider $j$ is the most-preferred available provider for patient $i$. 
Here, $g$ and $h$ are known and defined below, so we aim to find a $\mathbf{X}$ to optimize $p \cdot \langle g(h(\mathbf{X})),\theta \rangle$. 
Under the uniform choice model, the probability that patient $i$ matches with provider $j$ is then $p \cdot g(h(\mathbf{X}))$, and so the expected match quality is $p \cdot \langle g(h(\mathbf{X})), \theta \rangle$. 

To compute $h(\mathbf{X})$ and $g(h(\mathbf{X}))$,
first recall that $\sigma$ is a random ordering of patients, and $\sigma_{t} = i$ denotes that patient $i$ is the t-th patient to select their provider. 
To find $h(\mathbf{X})_{i,j}$, we multiply $X_{i,j}$, which indicates if $j$ is offered to $i$, by the probability that $j$ is still available when patient $i$ arrives. 
We approximate the latter by considering the availability probability for each fixed arrival point (i.e., if $\sigma_t = i$) and average over all possible $N$ arrival orders. 
Clearly, if $\sigma_{1} = i$, then $h(\mathbf{X})_{i,j} = X_{i,j}$, as no previous patient could have selected provider $j$. 
If $\sigma_t = i$, then the probability that none of the previous $t-1$ patients select provider $j$ is at least $(1-p)^{t-1}$; at most $t-1$ prior patients could have been offered $j$ and had $j$ as their most preferred provider. 
Because $t$ is uniformly distributed from $1$ to $N$: 
\begin{equation}
    h(\mathbf{X})_{i,j} = X_{i,j}\sum_{t=1}^{N} \frac{(1-p)^{t-1}}{N} 
\end{equation}
We can further tighten this by considering the number of patients who are offered provider $j$, namely $n = \lVert \mathbf{X}_{*,j} \rVert_{1}$ out of $N$ patients. Since only $n$ patients are offered provider $j$, we rescale the selection probabilities accordingly: $p' = \left(\frac{n-1}{N-1}\right)p$. 
We can then update our calculation of $h(\mathbf{X})_{i,j}$ as:
{\small \begin{equation}
h(\mathbf{X})_{i,j} = X_{i,j}\sum_{t=1}^{N} \frac{(1-p'_{j})^{t-1}}{N}\  \mathrm{ where } \ p'_{j} = \left( \frac{\lVert \mathbf{X}_{*,j} \rVert_{1}-1}{N-1} \right) p
\end{equation}}

We then compute $g(h(\mathbf{X}))$ by first computing a preference ordering over providers for patient $i$. 
Let $u_{i,1},u_{i,2},\ldots,u_{i,M}$ be the most to least preferred providers for patient $i$; that is, $\theta_{i,u_{i,1}} \geq \theta_{i,u_{i,2}} \geq \cdots \geq \theta_{i,u_{i,M}}$. 
Then for patient $i$ to pick their kth best option, it requires that 1) $u_{i,k}$ is available, and 2) none of $u_{i,1},u_{i,2},\ldots,u_{i,k-1}$ are available. 
Formally, we can estimate the probability that the kth most preferred provider is the best available as: 
\begin{equation}
    g(h(\mathbf{X}))_{i,u_{i,k}} = h(\mathbf{X})_{i,u_{i,k}} \prod_{k'=1}^{k-1} (1-h(\mathbf{X})_{i,u_{i,k'}})
\end{equation}
Using $h$ and $g$, we can compute a differentiable heuristic and in Section~\ref{sec:gradient_descent_match_quality}, we demonstrate that this heuristic lower bounds the total match quality.  
In practice, we use gradient descent and include a log regularization term to ensure that $\mathbf{X}$ is binary. 
While the heuristic is not always convex, in Section~\ref{sec:empirical}, we demonstrate that the policy is still able to find well-performing solutions, implying that a well-performing minimum is found despite guarantees on convexity. 
Moreover, such solutions can be found quickly in practice because the gradient descent algorithm quickly converges to local minima (see Section~\ref{sec:empirical}). 

\begin{algorithm}[h]
   \caption{Gradient Descent Objective}
\begin{algorithmic}
    \STATE {\bfseries Input:} Match quality matrix $\theta$ and Assortment $\mathbf{X}$
    \STATE {\bfseries Output:} Estimated total match quality
    \STATE Let $p'_{j} = \frac{\lVert \mathbf{X}_{*,j} \rVert_{1}-1}{N-1}$
    \STATE Let $h(\mathbf{X})_{i,j} = \frac{X_{i,j}}{N} \sum_{t=1}^{N} (1-p'_{j})^{t-1}$ 
    \STATE Let $u_{i,1},u_{i,2},\ldots,u_{i,M}$ be a permutation of $[M]$, so $\theta_{i,u_{i,1}} \geq \theta_{i,u_{i,1}} \cdots \theta_{i,u_{i,M}}$
    \STATE Let $g(h(\mathbf{X}))_{i,u_{i,k}} = h(\mathbf{X})_{i,u_{i,k}} \prod_{k'=1}^{k-1} (1-h(\mathbf{X})_{i,u_{i,k'}})$
    \RETURN $p \cdot \langle g(h(\mathbf{X})), \theta \rangle$
\end{algorithmic}
\label{alg:provider}
\end{algorithm}

\subsection{Match Rate Analysis}
\label{sec:match_rate}

\begin{table}[tb]
\centering
\caption{We prove lower bounds for the match rate (MR) and match quality (MQ) for all four policies. 
We bold all tight bounds. 
The main takeaway is that the greedy policy maximizes match rate, but is poor for match quality, incentivizing the need for better policies.}
\begin{tabular}{@{}lccl@{}}
\toprule
Policy       & \multicolumn{1}{l}{Match Rate (MR)}    & \multicolumn{1}{l}{Match Quality (MQ)} & Notes                                             \\ \midrule
Greedy           & $\geq \min(p,\frac{M}{N})$     & $< \boldsymbol{\epsilon} \ \forall\  \epsilon$                        & -                                \\
Pairwise         & $ = \mathbf{p \frac{\min(M,N)}{N}}$           & $\geq  \mathbf{p \mathrm{MQ}(\pi^{*}(\theta),\theta,f)}$                               &            -                                       \\
Group-Based      & $= \mathbf{p \frac{\min(M,N)}{N}}$           & -                                 & \makecell[l]{Heuristically improves MQ \\ vs. pairwise} \\
Gradient Descent & $< p \frac{\min(M,N)}{N}$ & -                                 & \makecell[l]{Provably optimizes lower \\ bound of MQ} \\ \bottomrule
\end{tabular}
\label{tab:policy_summary}
\end{table}

We formally characterize the match rate for the four policies, and find that the greedy policy theoretically performs best, while gradient descent performs worst. 
We focus on the uniform choice model due to its simplicity and ability to model the decision-making of real-world patients~\citep{choosing_doctor}. 
We include proof sketches here, while full proofs are deferred to Appendix~\ref{sec:proofs}. 

\subsubsection{Greedy Policy}
We first characterize the performance of the greedy policy: 

\begin{restatable}{theorem2}{thmgreedymatch}
\label{thm:greedy_match}
    Let $f$ be the uniform choice model with match probability $p$. 
    Then 
    \begin{equation}
        \mathrm{MR}(\pi^{R}(\theta),\theta,f)  \geq \min(p,\frac{M}{N})
    \end{equation}
    Moreover, there exists no policy $\pi'$ such that $\mathrm{MR}(\pi'(\theta),\theta,f) > \mathrm{MR}(\pi^{R}(\theta),\theta,f)$
\end{restatable}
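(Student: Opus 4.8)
The plan is to treat the two assertions---the lower bound on $\mathrm{MR}(\pi^R(\theta),\theta,f)$ and the optimality of greedy over all policies---together, since both rest on one structural fact about the greedy run. Under $\pi^R$ every patient is offered every provider, so when patient $\sigma_t$ is processed they match if and only if their ``willingness'' coin is heads (probability $p$, drawn i.i.d.\ across patients and independent of $\sigma$ by the uniform choice model) \emph{and} at least one provider remains unmatched; the only coupling between patients is provider depletion. Let $W$ be the number of willing patients, so $W\sim\mathrm{Binomial}(N,p)$, independent of the order. Walking through the patients in the order $\sigma$, the running match count increases by exactly one each time a willing patient is encountered while fewer than $M$ providers have been consumed, so it increments on willing patients until it saturates at $M$; hence under $\pi^R$ the number of matched patients equals $\min(W,M)$ for every realization, and $\mathrm{MR}(\pi^R(\theta),\theta,f)=\tfrac1N\,\mathbb{E}[\min(W,M)]$.

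For the lower bound I would use this identity. When $M\ge N$ we have $W\le N\le M$, so $\min(W,M)=W$ and $\mathrm{MR}=p=\min(p,M/N)$. For $M<N$, the cleanest first handle is that the first $M$ patients in the order always face some available provider (at most $t-1<M$ are consumed by time $t$), so each matches with probability $p$ independently, giving $\mathrm{MR}\ge pM/N$; obtaining the exact stated constant $\min(p,M/N)$ requires a tighter lower bound on $\mathbb{E}[\min(W,M)]$, leveraging that late-arriving willing patients also fill residual slots, and I expect this is the step that demands the most care.

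For optimality I would argue pointwise. Fix any policy $\pi'$ and couple its run to greedy's on the same order $\sigma$ and the same willingness coins. Under $\pi'$, a patient matches only if willing, and matched patients occupy distinct providers by the unit-capacity assumption, so the number matched under $\pi'$ is at most $\min(W,M)$ in every realization---exactly the number greedy attains. Taking expectations over $\sigma$ and the coins gives $\mathrm{MR}(\pi'(\theta),\theta,f)\le\mathrm{MR}(\pi^R(\theta),\theta,f)$, so no policy beats greedy. The load-bearing facts here are that ``matched $\Rightarrow$ willing'' is immediate from the uniform choice model and that greedy's count is \emph{exactly} (not merely at most) $\min(W,M)$; the main obstacle overall is pinning down the lower-bound constant in the provider-scarce regime.
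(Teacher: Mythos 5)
Your structural identity is the right way to analyze greedy, and it is sharper than what the paper does: under $\pi^R$ a patient matches iff their willingness coin is heads and a provider remains, so the number of matches is exactly $\min(W,M)$ with $W\sim\mathrm{Binomial}(N,p)$, giving $\mathrm{MR}(\pi^R)=\tfrac1N\mathbb{E}[\min(W,M)]$. Your optimality argument is also correct and is more rigorous than the paper's one-line justification (``no policy can expand the assortments''): coupling the willingness coins across policies, matched $\Rightarrow$ willing and matches occupy distinct providers, so any $\pi'$ matches at most $\min(W,M)$ patients pointwise. That settles the second claim of the theorem.

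The step you flagged as ``demanding the most care'' --- upgrading your $p\min(M,N)/N$ to the stated constant $\min(p,M/N)$ --- is not merely hard; it is impossible, and your own identity shows why. Since $x\mapsto\min(x,M)$ is concave, Jensen gives $\mathbb{E}[\min(W,M)]\leq\min(\mathbb{E}[W],M)=N\min(p,M/N)$, with strict inequality whenever $W$ straddles $M$ with positive probability (i.e., whenever $0<p<1$ and $M<N$). So $\min(p,M/N)$ is an \emph{upper} bound on $\mathrm{MR}(\pi^R)$, attained only in degenerate cases. Concretely, for $N=2$, $M=1$, $p=\tfrac12$ the match rate is $\tfrac38<\tfrac12=\min(p,M/N)$. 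The paper's proof asserts $\mathrm{MR}(\pi^R)=\alpha\tfrac{M}{N}+(1-\alpha)p$ where $\alpha$ is the probability all providers are taken; this is where the error enters, because conditioned on not all providers being taken the willingness coins are biased downward, so the conditional expected number of matches is strictly less than $Np$ (in the example above that formula yields $\tfrac12$ versus the true $\tfrac38$). Your bound $\mathrm{MR}(\pi^R)\geq p\min(M,N)/N$ (the first $\min(M,N)$ patients in the order always face an available provider) is the correct statement of this form, and your exact expression $\tfrac1N\mathbb{E}[\min(W,M)]$ is the honest characterization; the theorem's lower bound as written should be regarded as incorrect rather than as a gap in your argument.
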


We demonstrate this by considering two scenarios; one in which all providers are matched and the other in which each patient gets a chance to match. 
Under the uniform choice model, the greedy policy is optimal for the match rate because it offers the largest assortments.   

\subsubsection{Pairwise Policy}
We similarly characterize the match rate of the pairwise policy: 

\begin{restatable}{theorem2}{thmpairwisematch}
\label{thm:pairwise_match}
     Let $f$ be the uniform choice model with match probability $p$. 
    Then 
    \begin{equation}
        \mathrm{MR}(\pi^{P}(\theta),\theta,f)  = p \frac{\min(M,N)}{N}
    \end{equation}
\end{restatable}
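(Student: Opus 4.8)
The plan is to exploit the fact that $\pi^{P}(\theta)$ is a feasible bipartite matching, so it consists of vertex-disjoint patient--provider pairs. First I would argue that, without loss of generality, this matching saturates the smaller side, i.e. has size exactly $\min(M,N)$: the underlying bipartite graph is complete (every $\theta_{i,j}$ is defined and the LP has no forbidden edges), so any non-maximum matching admits an augmenting path, and since $\theta \ge 0$ augmenting never decreases the objective; hence among optimal solutions of the matching LP there is one of cardinality $\min(M,N)$. Call the paired patients $P \subseteq [N]$ with $|P| = \min(M,N)$, and for $i \in P$ let $j(i)$ be the provider with $\pi^{P}(\theta)_{i,j(i)} = 1$, so $\mathbf{X}_{i} = \mathbf{e}_{j(i)}$ and $\mathbf{X}_{i'} = \mathbf{0}$ for $i' \notin P$.

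The key observation is that the outcome is insensitive to the arrival order $\sigma$. Since each provider $j(i)$ appears in exactly one assortment (that of patient $i$), no other patient can ever select it, so $j(i)$ is still unmatched whenever patient $i$ responds; that is, $y^{(t)}_{j(i)} = 1$ whenever $\sigma_t = i$, regardless of $\sigma$ and $t$. Therefore $\mathbf{X}_{i} \odot \mathbf{y}^{(t)} = \mathbf{e}_{j(i)}$, and under the uniform choice model $f_{i}(\mathbf{e}_{j(i)}) = \mathbf{e}_{j(i)}$ with probability $p$ and $\mathbf{0}$ otherwise, so $\mathbb{E}\,\lVert f_{\sigma_t}(\mathbf{X}_{\sigma_t}\odot \mathbf{y}^{(t)})\rVert_1 = p$ when $\sigma_t \in P$. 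For $i' \notin P$ we have $\mathbf{X}_{i'} \odot \mathbf{y}^{(t)} = \mathbf{0}$, and $f_{i'}(\mathbf{0}) = \mathbf{0}$, contributing $0$.

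Summing over the $N$ time steps and using linearity of expectation (the contribution of each step depends only on whether the responding patient is in $P$, not on the order), I get
\begin{equation}
\mathrm{MR}(\pi^{P}(\theta),\theta,f) = \frac{1}{N}\,\mathbb{E}_{\sigma}\!\left[\sum_{t=1}^{N} \lVert f_{\sigma_t}(\mathbf{X}_{\sigma_t}\odot \mathbf{y}^{(t)})\rVert_1\right] = \frac{1}{N}\sum_{i \in P} p = \frac{p\,\min(M,N)}{N}.
\end{equation}
I do not anticipate a genuine obstacle here; the only point requiring care is the claim that $\pi^{P}(\theta)$ can be taken to have size $\min(M,N)$ (which uses completeness of the bipartite graph and nonnegativity of $\theta$, or alternatively a tie-breaking convention in the LP). Everything else follows from the disjointness of the pairs, which makes the provider in each patient's singleton assortment permanently available and decouples the problem into $\min(M,N)$ independent Bernoulli($p$) matches.
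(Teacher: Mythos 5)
Your proposal is correct and follows essentially the same route as the paper's proof: the disjointness of the pairwise assortments makes each paired patient's provider permanently available, so each of the $\min(M,N)$ paired patients matches independently with probability $p$, giving the stated rate. Your additional care in justifying that the matching LP can be taken to saturate the smaller side (which the paper simply asserts) is a worthwhile refinement but not a different argument.
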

Here, we prove this by noting that assortments are disjoint, and each patient with a non-empty assortment matches with probability $p$. 
Noticing that there are $\min(M,N)$ non-empty assortments yields the desired match rate. 
We note that this bound is tight due to the disjoint nature of patient assortments. 
When $M \geq N$, the pairwise policy achieves the optimal match rate, while when $M \leq N$, the pairwise policy performs worse than optimal because the same provider should be offered to multiple patients. 

\subsubsection{Group-Based Policy}
\label{sec:group_match_rate}
Because we construct the group-based policy so assortments are shared across patients in a group, we can show that the group-based policy maintains the same match rate as the pairwise policy:

\begin{restatable}{theorem2}{thmgroupingmatch}
    \label{thm:grouping_match}
    Let $f$ be the uniform choice model. Then 
    \begin{equation}
        \mathrm{MR}(\pi^{G}(\theta),\theta,f) = \mathrm{MR}(\pi^{P}(\theta),\theta,f)
    \end{equation}    
\end{restatable}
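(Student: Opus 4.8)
The plan is to show that the assortments produced by $\pi^{G}$ decompose into independent ``blocks'' on which the uniform choice model matches each patient with probability exactly $p$, and that the total number of patients lying in such blocks equals the number of patients with a non-empty assortment under $\pi^{P}$; the match rate then comes out the same on both sides.

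First I would establish the structural fact that Algorithm~\ref{alg:grouping} only rearranges the providers already used by $\pi^{P}$. Every patient $i$ with $v(\theta)_{i} = -1$ keeps an empty assortment throughout, and the groups $\mathbf{q}$ formed in the while loop partition the set $S = \{i : v(\theta)_{i} \neq -1\}$ of patients matched by the pairwise policy. If a group consists of patients $\{i_{1},\dots,i_{k}\} \subseteq S$, then by line~11 each of them is offered exactly the provider set $\{v(\theta)_{i_{1}},\dots,v(\theta)_{i_{k}}\}$; since $\pi^{P}$ is a matching these $k$ providers are distinct, and since $\pi^{P}$ assigns each provider to at most one patient, no provider appears in two different groups. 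Hence $\pi^{G}$ partitions $S$ (and the relevant providers) into disjoint blocks, each block containing equally many patients and providers, with every patient in a block offered every provider of that block and nothing else. Singleton remnants left when the while loop terminates are simply blocks with $k=1$.

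Next I would analyze one block $B$ with $k$ patients and $k$ providers. Conditioning on the global response order $\sigma$, what matters for $B$ is only the relative order of $B$'s patients, which is uniform. Suppose a patient of $B$ is the $t$-th of $B$ to respond; providers outside $B$ are irrelevant to this patient, and at most $t-1$ of $B$'s $k$ providers have been taken by earlier patients of $B$, so at least $k-(t-1)\ge 1$ remain available. Under the uniform choice model this patient therefore selects its most preferred available provider with probability $p$ and abstains otherwise, independently of the history, and so matches with probability exactly $p$. Summing, the expected number of matches inside $B$ is $pk$, and summing over all blocks the expected number of matched patients under $\pi^{G}$ is $p\lvert S\rvert$, giving $\mathrm{MR}(\pi^{G}(\theta),\theta,f) = p\lvert S\rvert / N$.

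Finally, the same independence argument applied to $\pi^{P}$ — each of the $\lvert S\rvert$ patients has a singleton assortment offered to no one else, hence matches with probability $p$ — gives $\mathrm{MR}(\pi^{P}(\theta),\theta,f) = p\lvert S\rvert/N$, which is Theorem~\ref{thm:pairwise_match} with $\lvert S\rvert = \min(M,N)$; comparing the two expressions yields the claim. The load-bearing point that needs care is the block invariant ``$k$ patients, $k$ providers, at least one provider available at every step'': it relies on confirming that the group-formation step never absorbs an unmatched patient into a block (which would let patients outnumber providers and break availability) and that the provider sets of distinct groups remain disjoint. Both follow from $\pi^{P}$ being a bipartite matching, but I would isolate them as an explicit structural lemma proved before the probability computation rather than leaving them implicit.
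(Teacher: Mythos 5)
Your proof is correct and follows essentially the same route as the paper's: the paper proves an auxiliary lemma (Lemma~\ref{lem:augment}) stating that any augmentation of $\pi^{P}$ whose provider-sharing graph decomposes into complete digraph components preserves the pairwise match rate, with the same key counting you use — a component of $k$ patients shares $k$ distinct providers, so at most $k-1$ can be taken before any given patient responds and each therefore matches with probability exactly $p$. Your version merely makes explicit the structural invariants (disjointness of blocks, exclusion of unmatched patients) that the paper leaves implicit in its appeal to the lemma.
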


We prove this by showing that the number of patients with an empty assortment is the same in both the group-based and pairwise policies, leading to the same number of matches in both policies. 
Providing structure, as we do for the group-based policy, is necessary to ensure match rates; once this structure is abandoned, it becomes difficult to ensure high match rates. 

\subsubsection{Gradient Descent Policy}
We characterize the match rate for the gradient descent policy in two steps. 
First, we demonstrate that there exist scenarios where certain maximums for $g(h(\mathbf{X}))$ lead to a poor match rate.
Next, we show that even when the maximum is unique, it can still lead to a worse match rate compared with the pairwise policy. 
Taken together, these results demonstrate that optimizing the match quality heuristic can lead to tensions with the match rate. 
\begin{example}
    Consider $\theta_{i,j} = 0$ except at $\theta_{1,1} = 1$. 
    Then for any assortment $\mathbf{X}$, only $g(h(\mathbf{X}))_{1,1}$ contributes to the match quality. 
    Therefore, to maximize $g(h(\mathbf{X}))_{1,1}$, we set $\mathbf{X}_{1,1} = 1$ and $\mathbf{X}_{i,1} = 0 \ \forall \ i$. 
    We note that the remaining $\mathbf{X}$ can be set arbitrarily.
    Therefore, letting $\mathbf{X}_{1,1} = 1$ and $\mathbf{X}_{i,j} = 0$ otherwise yields match rates as low as $\frac{1}{N}$. 
\end{example}
We next show that even when solutions are unique, the gradient descent policy can still lead to sub-optimal match rates: 

\begin{restatable}{theorem2}{thmgradientmatch}
    \label{thm:gradient_match}
    Let $f$ be the uniform choice model with match probability $p$. Then for any $p<1$, there exists $\theta$ such that the unique maximizer of $\langle g(h(\mathbf{X})),\theta \rangle$, $\mathbf{X}^{*}$, has: 

    \begin{equation}
        \mathrm{MR}(\mathbf{X}^{*},\theta,f) < \mathrm{MR}(\pi^{P}(\theta),\theta,f)
    \end{equation}
\end{restatable}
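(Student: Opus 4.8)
The plan is to exhibit an explicit family of match quality matrices $\theta$ (parametrized so it works for every $p<1$) on which the unique maximizer $\mathbf{X}^{*}$ of the heuristic $\langle g(h(\mathbf{X})),\theta\rangle$ concentrates a provider on too few patients, so that its match rate falls strictly below $p\,\frac{\min(M,N)}{N}$, the value guaranteed by Theorem~\ref{thm:pairwise_match}. The natural choice mirrors Example~\ref{ex:illustration} and the $M=1$ analysis of Theorem~\ref{thm:one_m}: take $M=1$ (or a block-diagonal $\theta$ that reduces to independent $M=1$ instances) with match qualities $\theta_{u_1,1}\ge\cdots\ge\theta_{u_N,1}$ chosen so that a large first block of patients has high match quality and a tail of patients has near-zero match quality. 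Under $M=1$ the heuristic objective becomes $p\sum_{i}h(\mathbf{X})_{i,1}\theta_{i,1}$ with $h(\mathbf{X})_{i,1}=X_{i,1}\cdot\frac1N\sum_{t=1}^N(1-p'_1)^{t-1}$ and $p'_1=\frac{\lVert\mathbf{X}_{*,1}\rVert_1-1}{N-1}p$, i.e. offering the provider to $n$ patients gives objective $p\cdot\phi(n)\cdot\frac1N\sum_{i\in S}\theta_{i,1}$ where $\phi(n)=\frac1N\sum_{t=1}^N(1-\tfrac{n-1}{N-1}p)^{t-1}$ and $S$ is the offered set. For a fixed cardinality $n$ this is maximized by offering to the top-$n$ patients, so the heuristic picks $n^{*}=\argmax_n \phi(n)\sum_{i=1}^n\theta_{u_i,1}$.

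The key observation is that $\phi(n)$ is \emph{strictly decreasing} in $n$ for $p<1$ (each term $(1-\frac{n-1}{N-1}p)^{t-1}$ is nonincreasing in $n$ and strictly decreasing for $t\ge2$), whereas the true match rate contribution $\frac1N(1-(1-p)^n)$ from Theorem~\ref{thm:one_m} is strictly increasing in $n$. This is exactly the tension: the heuristic $\phi$ \emph{underweights} the benefit of offering to more patients relative to the true match probability. Concretely, I would set the top block to have equal match quality $\theta_{u_i,1}=a$ for $i\le k$ and $\theta_{u_i,1}=0$ for $i>k$; then for $n\le k$ the heuristic value is $a\,n\,\phi(n)/N\cdot p$, and for $n>k$ it is $a\,k\,\phi(n)/N\cdot p<a\,k\,\phi(k)/N\cdot p$, so the heuristic never offers to more than $k$ patients and, because $n\phi(n)$ can be made to peak strictly before $k$ (choose $k$ large and $N$ large so that the decay of $\phi$ dominates), the unique maximizer has $n^{*}<k\le\min(M,N)$ patients actually matched with positive probability — actually I must be a bit more careful: I want $n^{*}$ strictly less than the number of nonempty assortments the pairwise policy would produce, which here is $\min(M,N)=1$... so the single-provider reduction needs the block structure. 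Let me instead use $M$ providers with $N\gg M$ patients, $\theta$ block-diagonal so provider $j$ has positive quality $a$ only for a dedicated pool of $k$ patients; the heuristic then solves each block independently and offers provider $j$ to $n^{*}<k$ of its pool patients (with $n^{*}\ge1$), giving heuristic-optimal assortment in which each provider is offered to $n^{*}$ patients. Its match rate is $\frac{M}{N}(1-(1-p)^{n^{*}})$, whereas pairwise gives $p\frac{M}{N}$; since $1-(1-p)^{n^{*}}<1$ would not immediately suffice, I instead note pairwise matches each of $M$ providers with probability exactly $p$ (one patient per provider), and the heuristic solution matches provider $j$ with probability $1-(1-p)^{n^{*}}\ge p$ — wait, that's \emph{larger}. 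So I have the direction backwards for this construction; the correct construction must make the heuristic offer provider $j$ to \emph{zero} patients in some block, or to choose a configuration where a patient who \emph{could} be matched under pairwise is left with an empty assortment under the heuristic.

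Thus the right construction, and the main obstacle, is forcing the \emph{unique} heuristic maximizer to leave some provider entirely unoffered (or some matchable patient with empty assortment) even though a competing provider gains nothing. I would engineer this by giving one provider a large set of patients all with high quality and another provider a single patient with slightly lower quality, with the counts tuned so that $\phi$-discounting makes it "look" better to pile the second provider's only patient onto the first provider's menu (raising that patient's chance at the good provider in the heuristic's eyes) than to let them keep their own guaranteed provider; then the second provider is never matched, so the match rate is $\le\frac{M-1+?}{N}$ type bound strictly below $p\frac{M}{N}$ whenever this configuration wins, which I can guarantee for all $p<1$ by taking the high-quality block large enough that $\phi$ is small and the marginal gain of redirecting outweighs $p\theta_{\text{single}}$ in the heuristic but not in reality. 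Uniqueness is then secured by perturbing all the $\theta$ entries generically (distinct values) so the $\argmax$ over discrete assortments is a singleton. The delicate part is simultaneously (i) making the redirection strictly optimal for the heuristic, (ii) keeping it strictly suboptimal for the true match rate, and (iii) ensuring uniqueness — all uniformly over $p\in(0,1)$ — which is a matter of choosing $N$, the block size $k$, and a small quality gap $\delta$ as explicit functions of $p$ and checking two inequalities; I would present these inequalities and the resulting parameter choice, deferring the arithmetic verification to the appendix.
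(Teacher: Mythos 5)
Your high-level intuition---that the heuristic overvalues piling a patient onto a contested high-quality provider at the expense of system-wide match rate---is the right one, and it is essentially what the paper exploits. But the concrete mechanism you settle on does not work, and the construction is never completed. You propose to force the unique maximizer of $\langle g(h(\mathbf{X})),\theta\rangle$ to leave the second provider entirely unoffered (``the second provider is never matched''), or to leave a matchable patient with an empty assortment. That cannot happen at a unique maximizer: $h(\mathbf{X})_{i,j}$ depends only on column $j$ of $\mathbf{X}$, so adding a provider $j$ with $\theta_{i,j}>0$ that is offered to nobody else changes no other patient's terms and adds $g(h(\mathbf{X}))_{i,j}\theta_{i,j}\geq 0$ to patient $i$'s contribution; in your construction this term is strictly positive (patient $x$'s preferred provider is contested, so $1-h(\mathbf{X})_{x,1}>0$), so the unique maximizer must keep the dedicated provider on its sole beneficiary's menu. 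That provider then retains a positive match probability, and the $\leq\frac{M-1}{N}$-type bound you are aiming for is unavailable. Your first, block-diagonal construction you correctly discard yourself, since concentrating more offers on one provider only raises that provider's match probability. What remains is a list of inequalities whose verification is deferred, so the proposal does not yet constitute a proof.

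The paper's proof uses a different and much smaller loss mechanism: contention rather than abandonment. Take $N=M=2$, $\theta_{1}=[1,0]$, $\theta_{2}=[1,\epsilon]$. The pairwise assortment has heuristic value $1+\epsilon$, whereas $\mathbf{X}_{1}=[1,0]$, $\mathbf{X}_{2}=[1,1]$ achieves $2-p+\frac{p}{2}\epsilon$ (provider $2$ stays on patient $2$'s menu, consistent with the observation above), which is strictly larger whenever $\epsilon<\frac{1-p}{1-p/2}$; such an $\epsilon>0$ exists for every $p<1$. The match rate then drops from $p$ to $p-\frac{p^{2}}{4}$: patient $2$ still matches with probability $p$ by falling back to provider $2$, but patient $1$---whose only option is provider $1$---is sometimes blocked when patient $2$ responds first. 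If you want to repair your argument, this is the target: show the heuristic strictly prefers adding the contested provider to the second patient's menu while the true match rate strictly decreases because a single-option patient gets blocked, rather than trying to make a provider vanish from the assortment altogether.
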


We prove this by constructing a simple $N=M=2$ example where even the unique maximizer can achieve a sub-optimal match rate. 
Here, the unique maximizer also corresponds to the true maximizer for match quality, and so we demonstrate a match rate-match quality tradeoff. 
Overall, we find a tension between policies that optimize for match rate, such as greedy, and those that optimize for match quality, such as gradient descent. 

\subsection{Match Quality Analysis}
\label{sec:match_quality}
We next complement our match rate analysis by analyzing the match quality for various policies.
We show that the greedy policy, which performs best for match rate, can perform arbitrarily poorly for match quality.
We then demonstrate that our three new policies can improve match quality, showing a tradeoff between match rate and match quality. 

\subsubsection{Greedy Policy}
We start by demonstrating that the greedy policy $\pi^{R}$ can perform poorly when compared with the optimal match rate policy $\pi^{*}$. 
To demonstrate this, we first define $\pi^{*}$: 
\begin{equation}
    \pi^{*}(\theta) = \argmax_{\mathbf{X}} \mathrm{MQ}(\mathbf{X},\theta,f)
\end{equation}
We then demonstrate that $\pi^{R}$ is an $\epsilon$ approximation:

\begin{restatable}{theorem2}{thmgreedy}
    \label{thm:greedy}
    Let $f$ be the uniform choice model with match probability $p$. 
    For any $p$ and $\epsilon$, there exists a $\theta$ such that 
    \begin{equation}
        \mathrm{MQ}(\pi^{R}(\theta),\theta,f)  \leq \epsilon \mathrm{MQ}(\pi^{*}(\theta),\theta,f)
    \end{equation}
\end{restatable}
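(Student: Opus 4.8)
The plan is to exhibit an explicit worst-case instance with a single provider ($M=1$) and many patients, where all but one patient have zero match quality with that provider. Concretely, fix the target ratio $\epsilon$ and the match probability $p$, set $M=1$, and let $\theta_{1,1}=1$ while $\theta_{i,1}=0$ for all $i\ge 2$; the patient count $N$ will be chosen large at the very end. This is essentially the $M=1$ specialization of the instance used in the gradient-descent example, now deployed to separate greedy from the match-quality optimum.

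First I would compute the greedy policy's match quality on this instance. Since $M=1$, greedy offers the single provider to all $N$ patients, so $\mathbf{X}_{\sigma_t}=\mathbf{1}$ for every $t$. Under the uniform choice model, each patient, when reached, takes provider $1$ (their only option) with probability $p$, independently; hence the provider is matched iff at least one patient ``activates'', which occurs with probability $1-(1-p)^N$, and, conditioned on a match, the matched patient is the first activator in the uniformly random order $\sigma$, which by exchangeability is uniform over $[N]$. As only patient $1$ carries positive quality, $\mathrm{MQ}(\pi^{R}(\theta),\theta,f)=\tfrac{1}{N}\bigl(1-(1-p)^N\bigr)\cdot\tfrac{1}{N}$. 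Equivalently, this value can be read off from Theorem~\ref{thm:one_m} (and the discussion following Example~\ref{ex:illustration}) with the provider offered to all $s=N$ patients and $\sum_i\theta_{i,1}=1$.

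Next I would lower bound $\mathrm{MQ}(\pi^{*}(\theta),\theta,f)$ by evaluating a single feasible policy: offer the provider to patient $1$ only. Then the provider is always available when patient $1$ is reached, so patient $1$ matches with probability exactly $p$ at quality $1$, giving match quality $p/N$; since $\pi^{*}$ maximizes match quality, $\mathrm{MQ}(\pi^{*}(\theta),\theta,f)\ge p/N$. Combining the two estimates, $\dfrac{\mathrm{MQ}(\pi^{R}(\theta),\theta,f)}{\mathrm{MQ}(\pi^{*}(\theta),\theta,f)}\le \dfrac{(1-(1-p)^N)/N^2}{p/N}=\dfrac{1-(1-p)^N}{pN}\le\dfrac{1}{pN}$, using $1-(1-p)^N\le 1$. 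Choosing any $N>1/(p\epsilon)$ drives this ratio strictly below $\epsilon$, which proves the claim.

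There is essentially no analytic obstacle here: the only step needing more than arithmetic is the claim that, conditioned on the provider being matched under greedy, the matched patient is uniform over all $N$ patients, and that is a one-line symmetry argument (or can be imported directly from Theorem~\ref{thm:one_m}). If one prefers an instance with no zero entries, the identical argument goes through with $\theta_{i,1}=\delta$ for $i\ge 2$ and $\delta\to 0$, which only multiplies the greedy value by the harmless factor $1+(N-1)\delta$.
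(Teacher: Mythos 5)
Your proof is correct, and it reaches the conclusion via a noticeably leaner construction than the paper's. The paper builds an $N=M$ instance with $\theta_{1,1}=1$, $\theta_{i,1}=2\Delta$ for $i\neq 1$, and $\theta_{i,j}=\Delta$ elsewhere, then compares greedy against the identity assortment $\pi(\theta)_i=\mathbf{e}_i$; this forces a two-parameter calibration ($N=\nicefrac{3}{\epsilon p}$ and $\Delta=\min(\nicefrac{\epsilon}{3N},\nicefrac{p\epsilon}{6})$) and a chain of ratio bounds. You instead take the $M=1$ specialization with a single valuable patient, which reduces everything to the closed-form expression already established in Theorem~\ref{thm:one_m}: greedy dilutes the total collected quality to $\tfrac{1}{N}\bigl(1-(1-p)^N\bigr)$ while the feasible policy that offers the provider only to patient $1$ secures $p$, so the ratio is at most $\nicefrac{1}{pN}$ and a single choice of $N$ finishes the argument. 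The one step that needs care --- that conditioned on a match under greedy the matched patient is uniform over $[N]$ --- you justify correctly by exchangeability of the i.i.d.\ activation coins under a uniformly random $\sigma$, and it is exactly the symmetry fact the paper itself invokes in proving Theorem~\ref{thm:one_m}. Both proofs share the same underlying mechanism (greedy spreads a scarce provider's attention over $N$ patients, collapsing the best patient's share to $\nicefrac{1}{N}$), but yours buys simplicity and a cleaner quantitative dependence, at the cost of exhibiting the pathology only in the degenerate single-provider regime; the paper's richer instance shows the same failure persists even when every patient and provider has a plausible alternative, which is arguably a more convincing indictment of greedy in practice. Your closing remark about perturbing the zero entries to $\delta>0$ addresses the only cosmetic objection to the degenerate instance. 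Note that, like the paper's proof, your choice $N>\nicefrac{1}{p\epsilon}$ implicitly assumes $p>0$; for $p=0$ both sides vanish and the claim is vacuous.
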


We prove this by generalizing an example where $\theta_{1,1} >> \theta_{2,1}$ to arbitrary $N$ and $M$. 
Here, the optimal policy is to offer each provider to only one patient. 
However, the greedy policy offers all providers to all patients, which leads to sub-optimal matches for each provider $j$. 

\subsubsection{Pairwise Policy}
We next show that the pairwise improves the performance guarantee from $\epsilon$ to $p$ for match quality: 

\begin{restatable}{theorem2}{thmlp}
\label{thm:lp}
    Let $f$ be the uniform choice model with match probability $p$. 
    Then 
    \begin{equation}
        \mathrm{MQ}(\pi^{P}(\theta),\theta,f) \geq p \mathrm{MQ}(\pi^{*}(\theta),\theta,f) 
    \end{equation}
\end{restatable}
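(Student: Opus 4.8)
The plan is to express everything through the optimal value of the weighted bipartite matching program that defines $\pi^{P}$. Write $W^{\star} = \sum_{i=1}^{N}\sum_{j=1}^{M}\pi^{P}(\theta)_{i,j}\,\theta_{i,j}$ for that optimal value. I will establish two facts: an exact formula $\mathrm{MQ}(\pi^{P}(\theta),\theta,f)=\frac{p}{N}W^{\star}$, and a universal upper bound $\mathrm{MQ}(\mathbf{X},\theta,f)\le \frac{1}{N}W^{\star}$ holding for \emph{every} assortment $\mathbf{X}$; the theorem then follows by combining them with $\mathbf{X}=\pi^{*}(\theta)$.

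For the first fact, I would reuse the disjointness argument already used for Theorem~\ref{thm:pairwise_match}. Since $\pi^{P}(\theta)$ is feasible for the matching program, the constraints $\sum_{i}\pi^{P}(\theta)_{i,j}\le 1$ and $\sum_{j}\pi^{P}(\theta)_{i,j}\le 1$ make the assortments disjoint: each patient $i$ is offered at most the single provider $v(\theta)_{i}$, and that provider is offered to no one else. Hence when patient $i=\sigma_{t}$ responds, provider $v(\theta)_{i}$ is still available no matter the order, so $\mathbf{X}_{i}\odot\mathbf{y}^{(t)}=\mathbf{e}_{v(\theta)_{i}}$; under the uniform choice model the patient then selects it with probability exactly $p$ and contributes $\theta_{i,v(\theta)_{i}}$, while patients with empty assortments contribute $0$. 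Averaging over the $N$ slots gives $\mathrm{MQ}(\pi^{P}(\theta),\theta,f)=\frac{p}{N}\sum_{i}\theta_{i,v(\theta)_{i}}\mathbbm{1}[v(\theta)_{i}\neq -1]=\frac{p}{N}W^{\star}$.

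For the second fact, fix a realization of $\sigma$ and of the patients' choices, and let $S$ be the resulting set of matched patient--provider pairs. By the model's unit-capacity assumptions, each patient appears in at most one pair of $S$, and since a chosen provider is removed from $\mathbf{y}^{(t)}$ for all later rounds, each provider appears in at most one pair of $S$; thus $S$ is a feasible integral solution of the very matching program that defines $\pi^{P}$, so $\sum_{(i,j)\in S}\theta_{i,j}\le W^{\star}$. Taking the expectation over $\sigma$ and the choices and dividing by $N$ yields $\mathrm{MQ}(\mathbf{X},\theta,f)\le \frac{1}{N}W^{\star}$ for every $\mathbf{X}$, in particular $\mathrm{MQ}(\pi^{*}(\theta),\theta,f)\le \frac{1}{N}W^{\star}$.

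Combining, $\mathrm{MQ}(\pi^{P}(\theta),\theta,f)=\frac{p}{N}W^{\star}\ge \frac{p}{N}\cdot N\,\mathrm{MQ}(\pi^{*}(\theta),\theta,f)=p\,\mathrm{MQ}(\pi^{*}(\theta),\theta,f)$, as claimed. There is no real obstacle here; the only points requiring care are that the universal bound genuinely needs unit capacity on \emph{both} sides of the market (so that every outcome is an actual bipartite matching) and that $W^{\star}$ dominates the weight of any such outcome. These are exactly the structural facts the pairwise policy was designed to exploit, so once they are stated cleanly the argument is short.
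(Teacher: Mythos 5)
Your proof is correct and follows essentially the same route as the paper's: compute the pairwise policy's match quality exactly as $\tfrac{p}{N}$ times the optimal bipartite-matching value (using disjointness of the assortments), and upper-bound any policy's match quality by $\tfrac{1}{N}$ times that same value because every realized outcome is a feasible matching. Your write-up is somewhat more careful about why the realized outcome is an integral matching, but the decomposition and the key inequality are identical.
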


To prove this, we first upper bound the optimal policy, $\pi^{*}(\theta)$, with the value of the underlying bipartite matching problem. 
We then show that the pairwise policy corresponds to the bipartite matching problem, where edges correspond to pairs of patients and providers, and each edge exists with probability $p$. 
Summing over edges gives that the pairwise policy is a $p$-approximation to the bipartite matching problem and is, therefore, a $p$-approximation to $\pi^{*}$. 
We note that when $M=1$ and $N$ is large, with $\theta_{i,1}=1 \ \forall \ i$, this bound becomes tight, as the optimal match quality is $1$, while pairwise achieves $p$. 

\subsubsection{Group-Based Policy}
We next analyze the match quality for the group-based policy. 
While proving a bound on the match quality for the group-based policy is combinatorially difficult, we give intuition for when a group-based policy is necessary by demonstrating the need to modulate between the greedy and pairwise policies:

\begin{restatable}{proposition2}{thmgrouping}
\label{thm:grouping}
Let $f$ be the uniform choice model with match probability $p$. 
If $\theta_{i,j} \sim U(0,1)$ and $M \leq N$ then
\begin{equation}
    \frac{\mathbb{E}_{\theta}[\mathrm{MQ}(\pi^{R}(\theta),\theta,f)]}{\mathbb{E}_{\theta}[\mathrm{MQ}(\pi^{P}(\theta),\theta,f)]}\geq \frac{1-(1-p)^{N/M}}{2p}
\end{equation}
\end{restatable}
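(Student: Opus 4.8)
The plan is to squeeze the ratio from above in the denominator and from below in the numerator, expressing both bounds through the expected number of matched patients, and then to reduce everything to one combinatorial estimate about $\mathrm{Binomial}(N,p)$. Concretely, I would establish three inequalities: (i) $\mathbb{E}_{\theta}[\mathrm{MQ}(\pi^{P}(\theta),\theta,f)]\le \tfrac{pM}{N}$; (ii) $\mathbb{E}_{\theta}[\mathrm{MQ}(\pi^{R}(\theta),\theta,f)]\ge \tfrac12\,\mathbb{E}_{\theta}[\mathrm{MR}(\pi^{R}(\theta),\theta,f)]$; and (iii) $N\,\mathbb{E}_{\theta}[\mathrm{MR}(\pi^{R}(\theta),\theta,f)]=\mathbb{E}[\min(A,M)]\ge M\bigl(1-(1-p)^{N/M}\bigr)$, where $A\sim\mathrm{Binomial}(N,p)$. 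Dividing (ii) by (i) and plugging in (iii) gives the claim.

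For (i): under $\pi^{P}$ each offered patient receives a singleton assortment and the singletons use distinct providers, so that provider is always still free when the patient acts, regardless of $\sigma$; hence $\mathrm{MQ}(\pi^{P}(\theta),\theta,f)=\tfrac{p}{N}W^{*}(\theta)$ with $W^{*}(\theta)$ the value of the maximum-weight bipartite matching. Since $M\le N$ and $\theta_{i,j}>0$ almost surely, an optimal matching can always be augmented along an unmatched provider, so it saturates all $M$ providers and $W^{*}(\theta)\le M$, giving (i). For (ii): fix a time $t$, write $i=\sigma_{t}$ and let $S$ be the set of still-available providers at that time. The quantities $i$, $S$, and patient $i$'s choice coin are functions only of $\sigma$, of the coins of $\sigma_{1},\dots,\sigma_{t}$, and of the rows $\theta_{\sigma_{1}},\dots,\theta_{\sigma_{t-1}}$, hence are independent of row $\theta_{i}$, whose entries stay i.i.d.\ $U(0,1)$ after this conditioning. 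Whenever patient $i$ matches at time $t$ we have $S\neq\emptyset$, and the realized quality is $\max_{j\in S}\theta_{i,j}\ge \theta_{i,j_{0}}$ for the externally determined index $j_{0}=\min S$, so the conditional expected quality of that match is at least $\mathbb{E}[\theta_{i,j_{0}}]=\tfrac12$. Summing over $t$ and dividing by $N$ yields (ii).

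The heart of the argument is (iii). Because under $\pi^{R}$ every active patient takes its favourite \emph{available} provider, the providers are depleted exactly one per active patient: the $k$-th active patient with $k\le M$ always finds a free provider and takes it, while the $(M{+}1)$-st and later active patients find none. Hence the number of matched patients is deterministically $\min(A,M)$ with $A$ the number of active patients, so $N\,\mathbb{E}_{\theta}[\mathrm{MR}(\pi^{R})]=\mathbb{E}[\min(A,M)]$. To bound this below, partition the $N$ patients into $M$ groups of size $N/M$ and let $G$ be the number of nonempty groups; then $A\ge G$ and $M\ge G$, so $\min(A,M)\ge G$, and $\mathbb{E}[G]=\sum_{g=1}^{M}\bigl(1-(1-p)^{N/M}\bigr)=M\bigl(1-(1-p)^{N/M}\bigr)$. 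Combining,
\begin{equation}
\frac{\mathbb{E}_{\theta}[\mathrm{MQ}(\pi^{R}(\theta),\theta,f)]}{\mathbb{E}_{\theta}[\mathrm{MQ}(\pi^{P}(\theta),\theta,f)]}\ \ge\ \frac{\tfrac12\cdot\tfrac1N\,\mathbb{E}[\min(A,M)]}{pM/N}\ \ge\ \frac{1-(1-p)^{N/M}}{2p}.
\end{equation}

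The step I expect to be the main obstacle is the last bound when $N/M$ is not an integer: with unequal group sizes the convexity of $x\mapsto(1-p)^{x}$ makes the equal-partition estimate slightly too weak, so one must either read $N/M$ as $\lceil N/M\rceil$ or replace the group argument by a sharper tail bound on $\Pr[\mathrm{Binomial}(N,p)\le j]$ for $j<M$ (small numerical checks suggest $\mathbb{E}[\min(A,M)]\ge M(1-(1-p)^{N/M})$ does hold for all $N\ge M$). I would also expect the factor $\tfrac12$ in (ii) to be loose, improvable via $\mathbb{E}[\max_{j\in S}\theta_{i,j}\mid S]=|S|/(|S|+1)$ and a lower bound on the typical size of $S$.
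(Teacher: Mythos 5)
Your proposal is correct in its essentials and shares the paper's overall skeleton---upper-bound the pairwise policy's expected quality by $pM/N$ and lower-bound greedy's by $\tfrac{M}{2N}\bigl(1-(1-p)^{N/M}\bigr)$---but your route to the greedy lower bound is genuinely different. The paper works provider-by-provider: by symmetry each provider attracts $N/M$ patients and is matched with probability $1-(1-p)^{N/M}$, and the conditional quality of that match is bounded below by $\tfrac12$ via a weighted average of order statistics and Chebyshev's sum inequality. You instead prove the clean intermediate inequality $\mathrm{MQ}(\pi^{R})\ge\tfrac12\,\mathrm{MR}(\pi^{R})$ by noting that the set $S$ of available providers and the acting patient's coin are independent of that patient's row of $\theta$, so any realized match's quality $\max_{j\in S}\theta_{i,j}$ dominates a fresh $U(0,1)$ draw; you then compute the greedy match count exactly as $\mathbb{E}[\min(A,M)]$ with $A\sim\mathrm{Binomial}(N,p)$ and lower-bound it by a grouping argument. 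Your conditional-independence step is tighter and more careful than the paper's order-statistics computation (which tacitly treats a provider's preference rank as independent of its availability), and the identity $N\,\mathrm{MR}(\pi^{R})=\mathbb{E}[\min(A,M)]$ is a nice exact characterization the paper does not exploit. Both arguments share the same soft spot, which you flag honestly and the paper glosses over: the factor $1-(1-p)^{N/M}$ comes from an idealized allocation of exactly $N/M$ patients per provider, and since $x\mapsto 1-(1-p)^{x}$ is concave, Jensen runs the wrong way for the actual random allocation (multinomial top-choice counts in the paper, unequal group sizes in yours); closing it requires reading $N/M$ as an integer or proving $\mathbb{E}[\min(\mathrm{Binomial}(N,p),M)]\ge M\bigl(1-(1-p)^{N/M}\bigr)$ directly. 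Net: your proof is at least as sound as the paper's, and the decomposition $\mathrm{MQ}\ge\tfrac12\mathrm{MR}$ plus the exact match-count identity is arguably the cleaner presentation.
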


When $p$ is small, the greedy policy can outperform pairwise because providers should be offered to multiple patients.
Meanwhile, larger $p$ incentivizes smaller menus, leading to better performance for the pairwise policy. 
Additionally, when $N>M$ and $p<\frac{1}{2}$, the greedy policy can outperform pairwise due to provider scarcity. 
The group-based policy is built on this intuition so that groups are dynamically sized as needed. 

\subsubsection{Gradient Descent Policy}
\label{sec:gradient_descent_match_quality}
To analyze the match quality of the gradient descent policy, we demonstrate that our heuristic $p \cdot \langle g(h(\mathbf{X})), \theta \rangle$ lower bounds the match quality: 

\begin{restatable}{theorem2}{thmlowerbound}
    \label{thm:lower_bound}
    The following holds for any $\mathbf{X}$ when $f$ is the uniform choice model with probability $p$:
    {\small \begin{equation}
        p \cdot \langle g(h(\mathbf{X})), \theta \rangle \leq \mathrm{MQ}(\mathbf{X},\theta,f)
    \end{equation}}
\end{restatable}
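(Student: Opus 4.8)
\textbf{Proof proposal for Theorem~\ref{thm:lower_bound}.}

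The plan is to show that the heuristic $p \cdot \langle g(h(\mathbf{X})), \theta \rangle$ is a term-by-term lower bound on the true expected match quality $\mathrm{MQ}(\mathbf{X},\theta,f)$, by arguing that each factor appearing in $g(h(\mathbf{X}))_{i,j}$ underestimates the corresponding true probability. Fix a patient $i$ and a provider $j$, and write the true expected match quality as a sum over $(i,j)$ pairs of $\theta_{i,j}$ times the probability that patient $i$ ultimately selects provider $j$. Under the uniform choice model, patient $i$ selects provider $j$ exactly when $i$ gets to act at some time $t$ with $\sigma_t = i$, the provider $j$ is both offered ($X_{i,j}=1$) and still unselected at time $t$, no provider preferred by $i$ over $j$ is still available at time $t$, and $i$'s independent ``active'' coin (probability $p$) comes up heads. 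So it suffices to show $g(h(\mathbf{X}))_{i,j}$ lower bounds $\Pr[\,j \text{ available and best-available for } i\,]$, with the leading $p$ accounting for the active coin.

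First I would establish the key monotonicity fact behind $h$: the true probability that provider $j$ is still unselected when patient $i$ arrives is at least $\sum_{t=1}^N \tfrac{1}{N}(1-p'_j)^{t-1}$, where $p'_j = \tfrac{\lVert \mathbf{X}_{*,j}\rVert_1 - 1}{N-1}\,p$. The reasoning is that, conditioned on $\sigma_t = i$, the $t-1$ patients preceding $i$ are a uniformly random $(t-1)$-subset of the other $N-1$ patients; among the $\lVert \mathbf{X}_{*,j}\rVert_1 - 1$ other patients offered $j$, each can only ``kill'' $j$ by being ordered before $i$, being active, and having $j$ as their best available option. Dropping the last requirement and using linearity/independence of the active coins gives that $j$ survives with probability at least $(1 - p'_j)^{t-1}$ for that arrival time; averaging over $t \in \{1,\dots,N\}$ gives $h(\mathbf{X})_{i,j}$. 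The second ingredient is that, for the product $\prod_{k'<k}(1 - h(\mathbf{X})_{i,u_{i,k'}})$ to be a valid lower bound on the probability that none of $i$'s more-preferred providers is available, I need the events ``$u_{i,k'}$ available'' to be negatively associated (or at least that a union-bound-style / FKG-type inequality goes the right way), so that the probability that all of them are simultaneously \emph{un}available is at least the product of the marginal ``unavailable'' probabilities, each of which is at most $1 - h(\mathbf{X})_{i,u_{i,k'}}$. This needs care because availability of different providers is correlated through the shared random order and the shared pool of competing patients.

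I expect the correlation/negative-association step to be the main obstacle: asserting that $\Pr[\text{all of } u_{i,1},\dots,u_{i,k-1} \text{ unavailable for } i] \ge \prod_{k'=1}^{k-1}(1 - h(\mathbf{X})_{i,u_{i,k'}})$ is not immediate, since conditioning on one preferred provider being taken can make it \emph{more} likely a competing patient took a different provider instead. The cleanest route is probably to condition on $\sigma_t = i$ and on the identity and order of the $t-1$ preceding patients, then observe that, given these, whether provider $u_{i,k'}$ is available depends only on the active coins and best-available choices of those fixed preceding patients, and to argue that these events are negatively associated as functions of the preceding patients' (conditionally independent) randomness; an alternative is to weaken the product to something provably conservative and show it still dominates $g(h(\mathbf{X}))$. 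The remaining steps — assembling the per-$(i,j)$ bounds, summing against $\theta_{i,j} \ge 0$, and reinstating the factor $p$ for the active coin — are routine once the two probabilistic inequalities are in hand.
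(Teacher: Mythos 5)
There is a genuine gap, and it is exactly at the step you flag as the ``main obstacle'': the term-by-term bound you are aiming for, namely that $p\,g(h(\mathbf{X}))_{i,u_{i,k}}$ lower bounds the true probability that patient $i$ matches with their $k$-th preferred provider, is false in general, so no negative-association or FKG argument will rescue it. The factor $\prod_{k'<k}\bigl(1-h(\mathbf{X})_{i,u_{i,k'}}\bigr)$ is a product of \emph{upper} bounds on the marginal unavailability probabilities (since $h$ underestimates availability, $1-h$ overestimates unavailability), so it can strictly exceed $\Pr[\text{all of } u_{i,1},\dots,u_{i,k-1}\text{ unavailable}]$ even under the most favorable correlation structure. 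Concretely, if $u_{i,1}$ is offered to one other patient who essentially never selects it, the true availability of $u_{i,1}$ is near $1$ while $h(\mathbf{X})_{i,u_{i,1}}<1$, so $g(h(\mathbf{X}))_{i,u_{i,2}}=h(\mathbf{X})_{i,u_{i,2}}\bigl(1-h(\mathbf{X})_{i,u_{i,1}}\bigr)$ overestimates the (near-zero) probability that $u_{i,2}$ is $i$'s best available option. The individual coordinates of $g$ simply do not sit below the individual coordinates of the true selection-probability matrix.

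The paper's proof avoids this by never comparing individual terms. Writing $\hat{Y}_{i,j}$ for the true probability that $i$ matches with $j$, it proves the \emph{prefix-sum} inequality $\sum_{j=1}^{m} p\,g(h(\mathbf{X}))_{i,u_{i,j}} \le \sum_{j=1}^{m}\hat{Y}_{i,u_{i,j}}$ for every $m$: the left side telescopes to $p\bigl(1-\prod_{j\le m}(1-h(\mathbf{X})_{i,u_{i,j}})\bigr)$, and the right side is $p$ times the probability that at least one of $i$'s top-$m$ offered providers is available, which dominates because the only correlation fact needed is for the \emph{union} event (each competing patient removes at most one provider, so availabilities are negatively associated and $\Pr[\cup_j A_j]\ge 1-\prod_j(1-\Pr[A_j])\ge 1-\prod_j(1-h_{i,u_{i,j}})$). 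It then converts prefix-sum domination into domination of the $\theta$-weighted sum using that $\theta_{i,u_{i,1}}\ge\cdots\ge\theta_{i,u_{i,M}}\ge 0$ (an Abel-summation step with $\ell_{u_{i,j}}=p\,g(h(\mathbf{X}))_{i,u_{i,j}}-\hat{Y}_{i,u_{i,j}}$). This ordering of $\theta$ along each patient's preference list is the ingredient your decomposition never uses, and it is what makes the overestimation of later terms harmless: any excess in term $k$ is absorbed by the slack accumulated in terms $1,\dots,k-1$, which carry larger $\theta$ weights. Your treatment of $h$ and of the leading factor $p$ is fine; you would need to restructure the argument around prefix sums (or an equivalent majorization statement) rather than per-provider probabilities.
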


We demonstrate this by showing that $\sum_{j} g(h(\mathbf{X}))_{i,j}$ is an underestimate for any fixed $i$. 
Our lower bound becomes tighter as $N$ becomes larger than $M$ because there is less overlap between different providers' assortments due to scarcity. 

\subsection{Analysis under Other Choice Models}
\label{sec:beyond_uniform_choice}
Our analysis thus far has been primarily under the uniform choice model, so we discuss policy performance under the MNL choice model with an exit option $\gamma$, a common choice model seen throughout the assortment optimization literature~\citep{assortment_mnl}. 
We characterize the match rate and match quality for the pairwise policy under this model: 

\begin{restatable}{lemma2}{thmmnlmatch}
\label{thm:mnlmatch}
Let $f$ be the MNL choice model parametrized by $\gamma$. 
Then the match rate and match quality of the pairwise policy are: 
\begin{align}
        \mathrm{MR}(\pi^{P},\theta,f)  \\ = \frac{1}{N\sum_{i} \mathbbm{1}[v(\theta)_{i} \geq 0]} \sum_{i; v(\theta)_{i} \geq 0}^{} \frac{\exp(\theta_{i,v(\theta)_{i}})}{\exp(\theta_{i,v(\theta)_{i}}) + \exp(\gamma)}
\end{align}
and 
\begin{align}
        \mathrm{MQ}(\pi^{P}(\theta),\theta,f) \\  \geq \left(\frac{1}{N} \sum_{i=1}^{N} \frac{\exp(\theta_{i,v(\theta)_{i}})}{\exp(\theta_{i,v(\theta)_{i}}) + \exp(\gamma)}\right) \mathrm{MQ}(\pi^{*}(\theta),\theta,f)
\end{align}
\end{restatable}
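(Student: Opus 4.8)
The plan is to analyze the pairwise policy under the MNL choice model by exploiting the same structural fact used for the uniform case: the pairwise policy produces disjoint singleton assortments, so each patient $i$ with $v(\theta)_i \geq 0$ is offered exactly one provider $j = v(\theta)_i$, and no other patient is ever offered that provider. This decouples the patients entirely — there is no contention for providers, so the order $\sigma$ is irrelevant, and $\mathbf{y}^{(t)}$ restricted to $i$'s assortment is always just $\mathbf{e}_{v(\theta)_i}$. Under the MNL model with a single offered provider $j$, patient $i$ selects $j$ with probability exactly $\frac{\exp(\theta_{i,j})}{\exp(\theta_{i,j}) + \exp(\gamma)}$ and abstains otherwise. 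So the match rate is just the average over all $N$ patients of this per-patient success probability, restricted to those with a nonempty assortment, which gives the stated formula $\frac{1}{N \sum_i \mathbbm{1}[v(\theta)_i \geq 0]} \sum_{i: v(\theta)_i \geq 0} \frac{\exp(\theta_{i,v(\theta)_i})}{\exp(\theta_{i,v(\theta)_i}) + \exp(\gamma)}$ — though I should double-check the normalization constant, since the sum has only $\sum_i \mathbbm{1}[v(\theta)_i \geq 0]$ nonzero terms and the $\frac{1}{N}$ already appears; the two together may be an artifact of averaging over $N$ time periods while only matched-assortment patients contribute.

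For the match quality, I would again use disjointness: $\mathrm{MQ}(\pi^P(\theta),\theta,f) = \frac{1}{N}\sum_{i: v(\theta)_i \geq 0} \frac{\exp(\theta_{i,v(\theta)_i})}{\exp(\theta_{i,v(\theta)_i}) + \exp(\gamma)} \, \theta_{i,v(\theta)_i}$, since patient $i$ contributes $\theta_{i,v(\theta)_i}$ exactly when they select their single offered provider. To get the claimed lower bound, I would follow the template of Theorem~\ref{thm:lp}: first bound $\mathrm{MQ}(\pi^*(\theta),\theta,f)$ above by the value of the underlying bipartite matching LP $\sum_i \sum_j \pi^P(\theta)_{i,j}\theta_{i,j} = \sum_{i: v(\theta)_i \geq 0} \theta_{i,v(\theta)_i}$ — this uses that any assortment's expected match quality cannot exceed the optimal fractional assignment value, because each provider matches at most one patient and each patient matches at most one provider and each selection yields reward at most $\theta_{i,j}$. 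Then I would lower bound each term $\frac{\exp(\theta_{i,v(\theta)_i})}{\exp(\theta_{i,v(\theta)_i}) + \exp(\gamma)}$ by the average coefficient $\frac{1}{N}\sum_{i'=1}^N \frac{\exp(\theta_{i',v(\theta)_{i'}})}{\exp(\theta_{i',v(\theta)_{i'}}) + \exp(\gamma)}$ — but this is not termwise true in general, so the correct move is to pull the minimum (or use a more careful averaging argument), which suggests the intended bound actually factors the average success probability out of the whole sum rather than doing it termwise.

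The main obstacle, then, is reconciling the exact per-patient quantities with the claimed \emph{uniform} factor $\left(\frac{1}{N}\sum_i \frac{\exp(\theta_{i,v(\theta)_i})}{\exp(\theta_{i,v(\theta)_i}) + \exp(\gamma)}\right)$ multiplying $\mathrm{MQ}(\pi^*(\theta),\theta,f)$. Since the true match quality is $\frac{1}{N}\sum_i q_i \theta_{i,v(\theta)_i}$ with $q_i$ the per-patient success probability, and the optimal is at most $\sum_i \theta_{i,v(\theta)_i}$, the inequality $\frac{1}{N}\sum_i q_i \theta_{i,v(\theta)_i} \geq \bar q \sum_i \theta_{i,v(\theta)_i}$ (with $\bar q$ the average of the $q_i$) does not follow by termwise comparison; it requires either a Chebyshev-type sum inequality exploiting that $q_i = \frac{\exp(\theta_{i,v(\theta)_i})}{\exp(\theta_{i,v(\theta)_i})+\exp(\gamma)}$ is monotone increasing in $\theta_{i,v(\theta)_i}$ (so $q_i$ and $\theta_{i,v(\theta)_i}$ are similarly ordered, and the sum of products dominates the product of averages), or a concession that the bound uses $\min_i q_i \geq$ something. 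I expect the clean proof to invoke the Chebyshev sum inequality: because $q_i$ is an increasing function of $\theta_{i,v(\theta)_i}$, the sequences $(q_i)$ and $(\theta_{i,v(\theta)_i})$ are comonotone, hence $\frac{1}{N}\sum_i q_i \theta_{i,v(\theta)_i} \geq \left(\frac{1}{N}\sum_i q_i\right)\left(\frac{1}{N}\sum_i \theta_{i,v(\theta)_i}\right) \cdot N \cdot \frac{1}{N}$, which after bookkeeping yields exactly the stated factor times the bipartite LP value, and therefore times $\mathrm{MQ}(\pi^*(\theta),\theta,f)$. Verifying the constants line up is the one routine-but-delicate computation I would need to grind through carefully.
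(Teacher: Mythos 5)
Your approach matches the paper's: for the match rate you use the disjoint singleton assortments of $\pi^P$ to eliminate any contention (so the order $\sigma$ is irrelevant and each patient with $v(\theta)_i \geq 0$ matches independently with probability $q_i = \frac{\exp(\theta_{i,v(\theta)_i})}{\exp(\theta_{i,v(\theta)_i})+\exp(\gamma)}$), and for the match quality you upper-bound $\mathrm{MQ}(\pi^*(\theta),\theta,f)$ by the bipartite-matching LP value $\frac{1}{N}\sum_i \theta_{i,v(\theta)_i}$, exactly as the paper does. Two remarks. First, your suspicion about the normalization in the match-rate formula is warranted: the paper's own proof says ``summing over all $v(\theta)_i \geq 0$ and dividing by $N$,'' which yields $\frac{1}{N}\sum_{i:v(\theta)_i\geq 0} q_i$ and is consistent with the definition of $\mathrm{MR}$ and with Theorem~\ref{thm:pairwise_match} in the uniform case; the extra factor $\frac{1}{\sum_i \mathbbm{1}[v(\theta)_i\geq 0]}$ in the displayed statement appears to be a typo, and your version is the correct one. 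Second, you correctly identify that the last step of the match-quality bound --- passing from $\frac{1}{N}\sum_i q_i\theta_{i,v(\theta)_i}$ to $\bigl(\frac{1}{N}\sum_i q_i\bigr)\cdot\frac{1}{N}\sum_i\theta_{i,v(\theta)_i}$ --- is not termwise and needs Chebyshev's sum inequality, using that $q_i$ is increasing in $\theta_{i,v(\theta)_i}$ so the two sequences are comonotone (with unmatched patients contributing $q_i=0$ and reward $0$, which preserves the ordering). The paper's proof glosses over this step with ``summing over all edges and taking the average,'' so your write-up actually supplies the justification the paper leaves implicit; the constants do line up as you anticipated.
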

To prove this, we apply similar techniques to those used with the uniform choice model by analyzing the individual match rate for each patient, and then summing across patients. We observe that the exit option parameter $\gamma$ plays a similar but inverted role to $p$ in the uniform choice model. 
As $\gamma$ increases, the probability a patient does not select any provider also increases, and so larger $\gamma$ incentivizes policies to offer larger assortments, similar to the incentives under smaller $p$ (see Proposition~\ref{thm:grouping}). 
We also note the additional complexity inherent to the MNL choice model. 
Guarantees under the MNL model require additional knowledge of $\theta$ and $v$, and analysis under the MNL choice model becomes difficult because match rates cannot be decoupled from individual match qualities. 
Nonetheless, the key takeaway is that a patient's willingness to match, here controlled by $\gamma$, can dictate policy performance. 
\section{Empirical Analysis}
\label{sec:empirical}
We complement our theoretical analysis through a study of our policies on a synthetic dataset. 
We characterize the best-performing policy based on model parameters, and show that the choice of assortment policy impacts match quality and match rate. 
Our biggest takeaway is that tailoring policies through gradient descent maximizes match quality, especially when patients outnumber providers. 

\subsection{Experimental Details}
We compare the four policies from Section~\ref{sec:policies} (greedy, pairwise, group-based, and gradient descent) and a random baseline with $\pi(\theta)_{i,j} \sim \text{Ber}(\frac{1}{2})$. 
For a fixed problem setup, we randomly sample $T$ permutations of patients and compute the match quality and match rate for the corresponding patient order ($\sigma$). We average runs across 15 seeds and $T=100$ trials and plot the standard error across these seeds. 
Because the scale of match quality and rate might vary between experiments, we report the normalized match quality and match rate (norm. MQ and norm. MR for short), which normalizes each quantity by that of the random policy.
All policies run in under five minutes, demonstrating the scalability of our policies across problem instances. 

\subsection{Impact of Model Parameters}
\label{sec:parameters}
We construct synthetic examples to gain insight into our policies across different scenarios. 
For each example, we specify values for $M$, $N$, $f$, and $\theta$.

\begin{example}
\label{ex:patient_provider}
\textbf{Impact of $N$ and $M$} - 
We first analyze how the patient/provider ratio impacts policy performance. 
We fix the number of providers $M=25$, while varying the number of patients $N$ from $25$ up to $200$. This reflects real-world situations in which patients generally outnumber providers~\citep{healthcare_shortage}.
We compare policies under two match quality distributions: the first uniformly distributes $\theta \sim U(0,1)$, while the second normally distributes $\theta_{i,j} \sim \mathcal{N}(\mu_{j},0.1^{2})$, where $\mu_{j} \sim U(0,1)$. 
The former corresponds to heterogeneous preferences where patient preferences are non-correlated, while the latter corresponds to homogeneous preferences, where patient preferences are influenced by some signal of provider quality $\mu_{j}$. 
For both experiments, we fix $p=0.5$ for the uniform choice model. 

\begin{figure*}[ht!]
    \centering 
    \includegraphics[width=\textwidth]{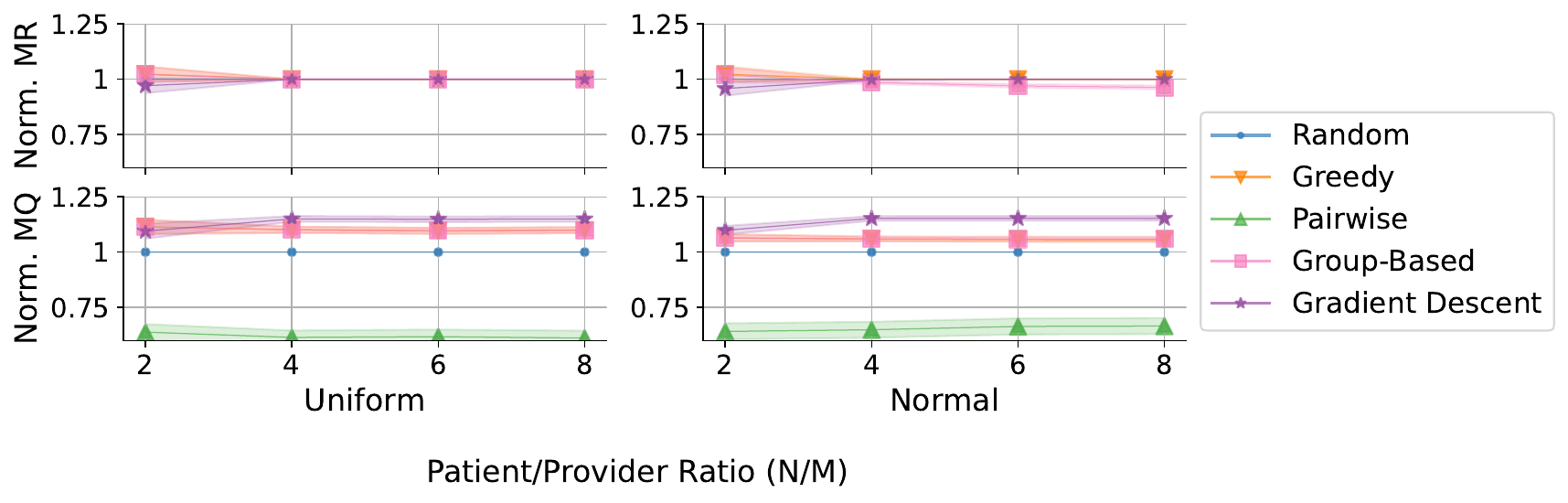}
    \caption{When there are more patients than providers, the gradient descent policy performs nearly as well as the optimal greedy policy for match rate (top), and performs the best for match quality (bottom), for both uniformly and normally distributed $\theta$.}
    \label{fig:patient_providers}
\end{figure*}

In Figure~\ref{fig:patient_providers}, we demonstrate that the gradient descent policy performs best when patients outnumber providers. 
When $N/M=8$, the gradient descent policy performs 5\% better than alternatives for uniform $\theta$ and 9\% for normal $\theta$. 
The gradient descent policy performs best for large $N/M$ because $g(h(\mathbf{X}))$ better approximates match quality when $N>M$.
Both greedy and group-based policies perform similarly, while the pairwise policy performs worst. 
When comparing across match rates, we find that the greedy policy performs best as expected, while the gradient descent policy is nearly as good for $N/M \geq 4$. 
\end{example}

\begin{example}
\label{ex:comparison}
\textbf{Impact of $p$ and $\theta$} - 
\begin{figure*}[t]
    \centering 
    \subfloat[\centering Uniform Match Quality]{\includegraphics[height=0.25 \textwidth]{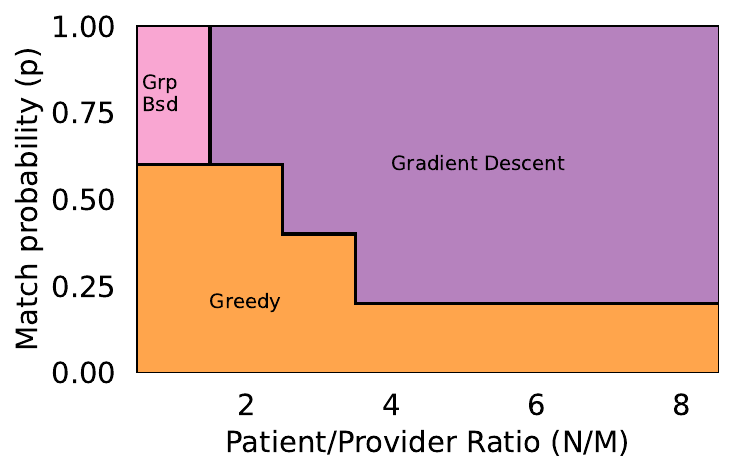}}      
    \hspace{2em}
    \subfloat[\centering Uniform Match Rate]{\includegraphics[height=0.25 \textwidth]{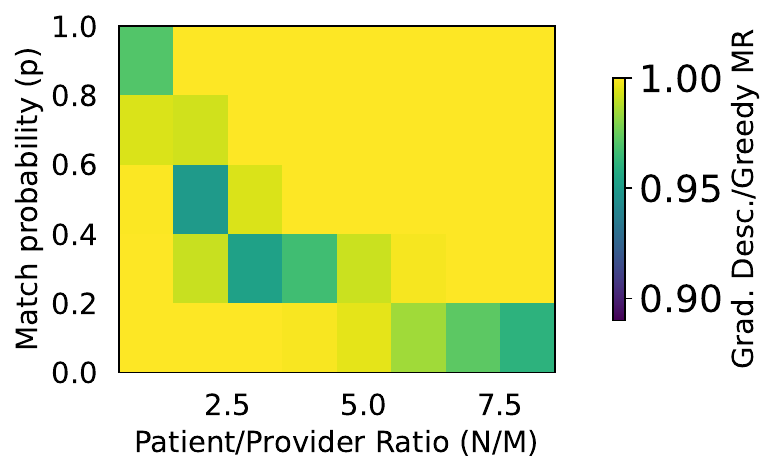}} \\
    \subfloat[\centering Normal Match Quality]{\includegraphics[height=0.25 \textwidth]{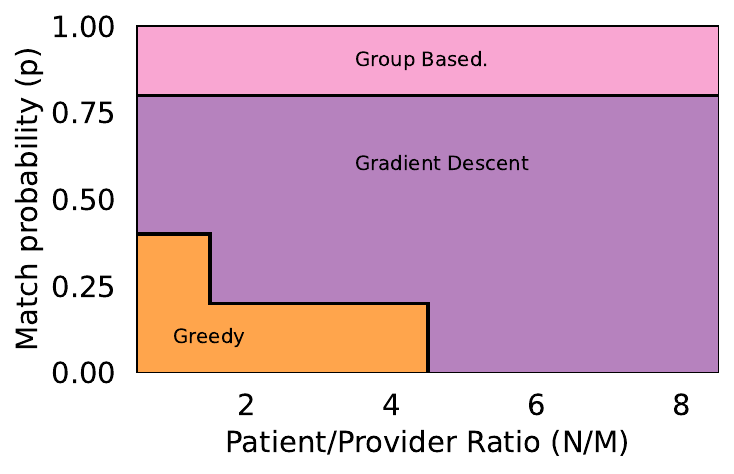}}
    \hspace{2em}
    \subfloat[\centering Normal Match Rate]{\includegraphics[height=0.25 \textwidth]{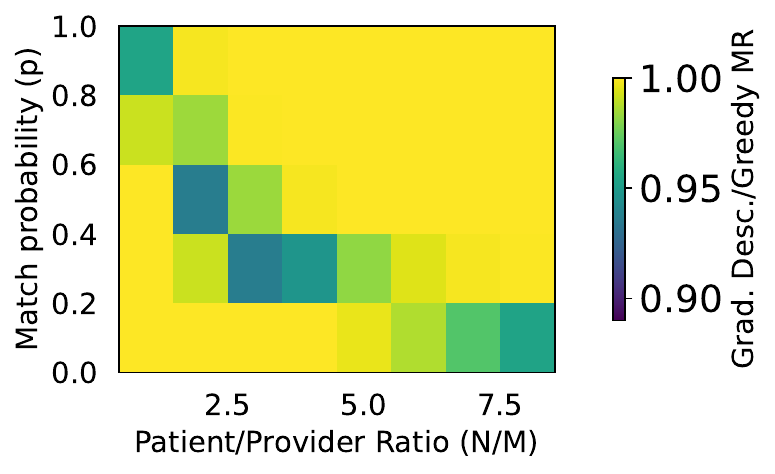}} 

    \caption{We characterize the best-performing policy for match rate and match quality when varying $p$ and $\theta$. Both when $\theta$ is uniformly (top) or normally (bottom) distributed, we find that the gradient descent policy performs best for large $N/M$ and large $p$. Greedy policy performs best in match rate for small $p$ because larger assortments should be offered then. }
    \label{fig:phase}
\end{figure*}
To understand which policy performs best across $p$, $\theta$, and $N/M$, we plot the policy which maximizes match quality when varying $p \in \{0.1,0.3,0.5,0.7,0.9\}$ and $N/M \in \{1,2,3,4,5,6,7,8\}$ with $M=25$, and distributing $\theta$ uniformly (Figure~\ref{fig:phase} top) and normally (Figure~\ref{fig:phase} bottom).
We find that the gradient descent policy performs best for large patient/provider ratios, both in match rate and match quality, which matches the results from Example~\ref{ex:patient_provider}. 
The group-based policy performs well for small $p$ because the problem more closely resembles bipartite matching (see Theorem~\ref{thm:lp}), while the greedy policy performs well for small $p$ because it becomes advantageous to offer larger assortments (see Proposition~\ref{thm:grouping}).
When comparing across match rates, we find that the gradient descent policy is within 10\% of the greedy policy, and for larger $N/M$ or $p$ values, the gradient descent policy has the same match rate as greedy. 

\begin{figure*}
    \centering 
    \includegraphics[width=\textwidth]{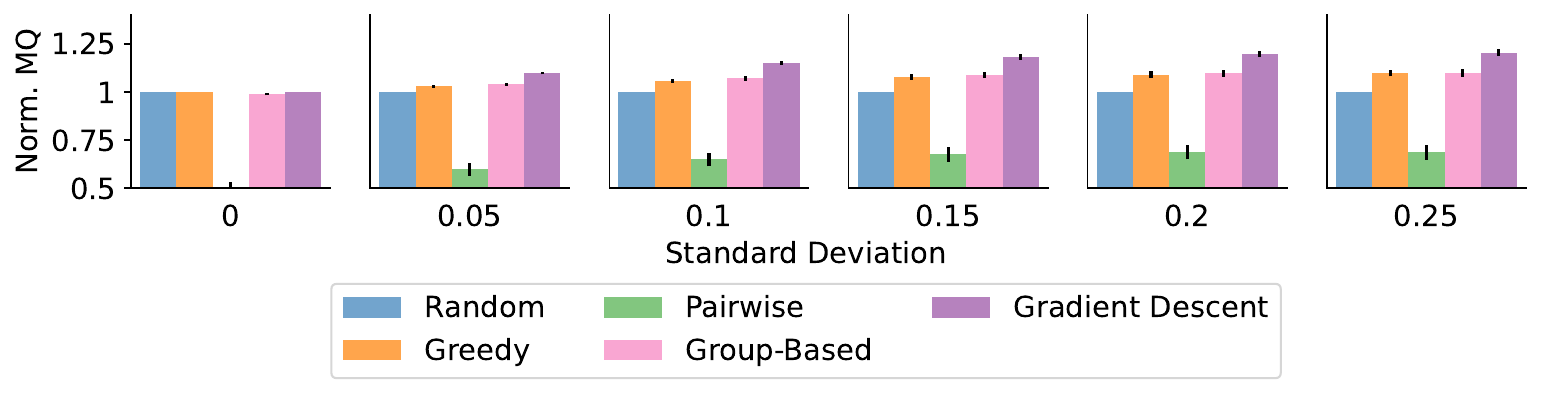}
    \caption{We let $\theta$ be distributed according to $\mathcal{N}(\mu_{j},s^{2})$ and analyze the impact of the standard deviation $s$ on policy performance. We find that at smaller values of $s$, policies perform more similarly, while policy performance is more differentiated with more heterogeneous preferences.}
    \label{fig:heterogeneity}
\end{figure*}
To further characterize the impact of $\theta$ upon policy performance, we let $\theta$ be normally distributed according to $\mathcal{N}(\mu_{j},s^{2})$, and vary $s \in [0,0.25]$ while $M=25$ and $N=100$. 
Here, smaller $s$ correspond to homogeneous preferences across patients, as $\theta$ values have a smaller spread between patients for a fixed provider. 
In Figure~\ref{fig:heterogeneity}, we find that as $s$ increases, the gradient descent policy performs better, with gradient descent outperforming other policies by 10\% for $s=0.25$. 
When $s=0$, all patients have similar preferences, and so the exact selection of assortment makes little difference.    
However, as $s$ is increased, the particular pairings of patients and providers become more important, leading to improved performance for policies such as gradient descent which tailors policies for patients. 
\end{example}

\begin{example}
\label{ex:choice_model}
    \textbf{Impact of Choice Model $f$} - 
We evaluate the impact of choice models on policy performance by investigating the MNL and threshold choice models. 
We fix $M=25$ and $N=100$, let $\theta$ be distributed uniformly, and vary $\gamma \in \{0.1,0.25,0.5,0.75\}$ for the MNL choice model and vary  $\alpha \in \{0.1,0.25,0.5,0.75\}$ with $p=0.5$ for the threshold choice model.

\begin{figure*}
    \centering 
    \includegraphics[width=\textwidth]{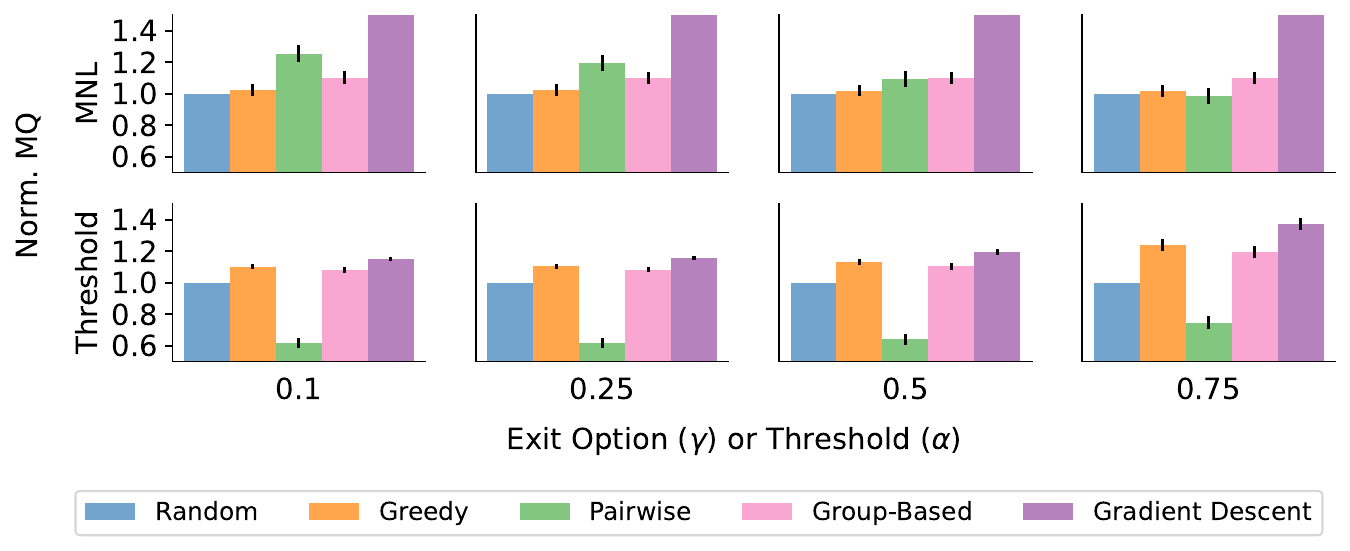}
    \caption{We compare policy performances when varying the exit option $\gamma$ for the MNL choice model, and the threshold $\alpha$ for the threshold choice model. 
    For the MNL choice model, the choice of $\gamma$ dictates the performance of the pairwise policy as $\gamma$ controls a patient's willingness to match. For the threshold choice model, larger $\alpha$ leads to a bigger gap between gradient descent and other policies because gradient descent is better able to find high-quality matches. }
    \label{fig:other_choice}
\end{figure*}

In Figure~\ref{fig:other_choice}, we show that $\gamma$ determines the performance of the pairwise policy for the MNL choice model, while gradient descent performs better for larger $\alpha$ in the threshold choice model. 
As noted in Section~\ref{sec:pairwise}, the choice of $\gamma$ is inversely related to the setting of $p$, with small $\gamma$ analogous to large $p$. 
Therefore, smaller $\gamma$ leads to better guarantees for the pairwise policy. 
Comparing the greedy and pairwise policies, we find that at small $\gamma$, it is better to offer more tailored or smaller assortments, while for larger $\gamma$, it becomes better to offer larger assortments. 
For the threshold choice model, $\alpha$ impacts the performance of the gradient descent policy.
At $\alpha=0.1$, gradient descent outperforms alternatives by 4\%, while at $\alpha=0.75$, gradient descent is 11\% than all alternatives. 
When $\alpha$ is large, the gap between gradient descent and other policies increases because gradient descent can potentially guarantee higher quality matches when compared to policies such as greedy. 
\end{example}

\subsection{Relaxing Model Assumptions}
\label{sec:assumption_relaxation}
We relax several assumptions from Section~\ref{sec:model}, and assess the impact on policy performance: 
\begin{enumerate}
    \item \textbf{Dynamic Assortments} - Throughout the paper, we assume that assortments are initially computed offline and offered online due to the logistical burden of varying assortments dynamically. To understand the loss in performance due to this, we construct a simple ``dynamic'' heuristic, that re-computes the Pairwise policy but accounts for providers and patients who have already matched or declined matching. In essence, this allows assortments to update after each decision made by a patient. We note that our dynamic assortment policy is just one selection and is not necessarily optimal, and we leave it to future work to further explore dynamic policies. 
    \item \textbf{Non-Uniform Random Response} - We relax the random response order assumption by investigating the performance of policies when response order is proportional to $\theta$. 
    For each patient, we compute $\bar{\theta}_{i} = \frac{1}{M} \sum_{j=1}^{M} \theta_{i,j}$, then let the order of the patients be proportional to $\bar{\theta}_{i}$. 
    That is, we construct $\sigma$ by selecting without replacement patients proportional to $\bar{\theta}_{i}$.

    \item \textbf{Batched Response} - We explore the impact that control over the response order has through a batching framework. While we have assumed that $\sigma$ is random, in reality, healthcare administrators might be able to exercise control over $\sigma$, which can allow for partial control over the response order. We use two batches of patients, where order within a batch is randomized, and we defer a full description of the batching algorithm to Appendix~\ref{sec:batching}. 

    \item \textbf{Multi-Patient Capacity} - We relax the unit capacity assumption for policies by letting policies have a per-provider capacity, $c_{j}=2$. 
    We extend each policy to account for this; the modified policies are detailed in Appendix~\ref{sec:non_unit_policies}. 
    At a high level, each policy has a natural extension to the non-unit capacity scenario; for example, the pairwise policy requires a modification to the constraints of the linear program to account for this. 
\end{enumerate}

\begin{figure*}[ht!]
    \centering 
    \includegraphics[width=\textwidth]{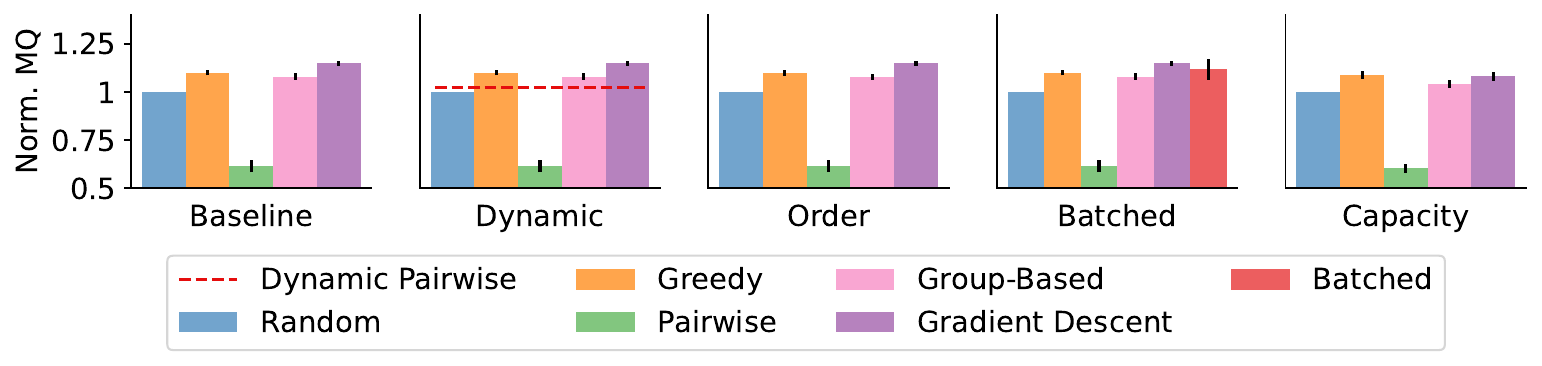}
    \caption{We explore policy performance when relaxing three assumptions: we experiment with: policies that update online rather than being computed offline, response orders that are proportional to $\theta$ and are not uniformly random, response orders that are controllable by batching patients, and scenarios where providers have non-unit capacities. Across all three relaxations, we have the same ordering of policies regardless of the underlying assumptions. 
    We note that increasing the provider capacity results in homogenization because all policies increase match quality due to increased provider availability.}
    \label{fig:assumptions}
\end{figure*}
We evaluate the performance of policies under these relaxed assumptions when $\theta$ is uniformly distributed, and when $f$ is the uniform choice model for $p \in \{0.1,0.25,0.5,0.75,0.9\}$, while fixing $M=25$ and $N=100$. 
In Figure~\ref{fig:assumptions}, we find that relaxing assumptions minimally impact policy performance. 
Under all relaxations, the relative ordering of policies, given $p$, remains the same. 
Dynamic policies perform better than the pairwise policy, but despite their increased flexibility, are still worse than alternatives such as greedy and gradient descent. 
Similarly, even policies that can control the order $\sigma$ fail to improve performance compared to the gradient descent policy. 
Finally, increasing capacity leads to improved performance for the greedy policy, which occurs because provider scarcity is reduced and is equivalent to reducing the ratio of patients to providers (we demonstrate the latter claim through additional experiments in Appendix~\ref{sec:varied_capacity}). 
In addition to these experiments, in Appendix~\ref{sec:misspecification}, we demonstrate that our results are robust to misspecification in $\theta$ and $p$. 
\section{Analysis in a Semi-Synthetic Dataset}
\label{sec:real_world}
We analyze policy performance in a semi-synthetic dataset and show that the gradient descent policy can improve match quality by 13\%. 
We then discuss tradeoffs between various metrics, including match quality, match rate, and fairness.

\subsection{Constructing a Semi-Synthetic Simulation}
\label{sec:semi_synthetic_data}
Beyond synthetic data, we work with a large healthcare provider in Connecticut to develop a simulation that reflects their system dynamics, leveraging real-world Medicare data and clinically grounded model parameters. 
We similarly average policy performance over $T$ rounds with randomized $\sigma$, but select the parameters $N$, $M$, $\theta$, and $f$ to accurately reflect their reality: 
\begin{enumerate}
    \item \textbf{$N$ and $M$} - We estimate $N=1225$ and $M=700$ from the average panel size and provider count of the organization.
    \item \textbf{Choice Model $f$} - We select the choice model based on previous work which shows that patients are low-effort decision makers who primarily make decisions based on geographic proximity~\citep{choosing_doctor,distance_rural}.
    We incorporate these two factors and let $f$ be a threshold model, with $\alpha$ representing the maximum distance patients are willing to travel. 
    \item \textbf{Match Quality Matrix $\theta$} - We construct $\theta$ to reflect  geographic proximity and the presence of comorbidities. 
    Geographic proximity is the top factor impacting patient match quality~\citep{washington_distance,distance_rural}, while patients with comorbidities are best served by providers with the corresponding specialized training. 
    Formally, for a patient $i$ and provider $j$, let $d_{i,j}$ be the distance between the patient and provider, and let $\beta_{i,j}$ denote whether patient $i$'s comorbidity and $j$'s specialty match.  
    For example, if patient $i$ has a heart-related comorbidity and provider $j$ has a cardiology specialty, then $\beta_{i,j}=1$. 
    The match quality is $\theta_{i,j} = \alpha + (1-\alpha)(\Delta \beta_{i,j} + (1-\Delta)(\frac{\bar{d}}{d_{i,j}}-1)$, where $\Delta$ and $\alpha$ are hyperparameters that modulate the impact of distance and comorbidities. 
    Here, $\Delta$ weighs between comorbidities and distance, and $\alpha$ sets the threshold match quality for a patient at distance $d_{i,j} = \bar{d}$.
    We provide details on our selection of hyperparameters in Appendix~\ref{sec:semi_synthetic_details}. 
\end{enumerate}

\subsection{Dataset Analysis}
\begin{figure*}[ht!]
    \centering 
    \includegraphics[width=\textwidth]{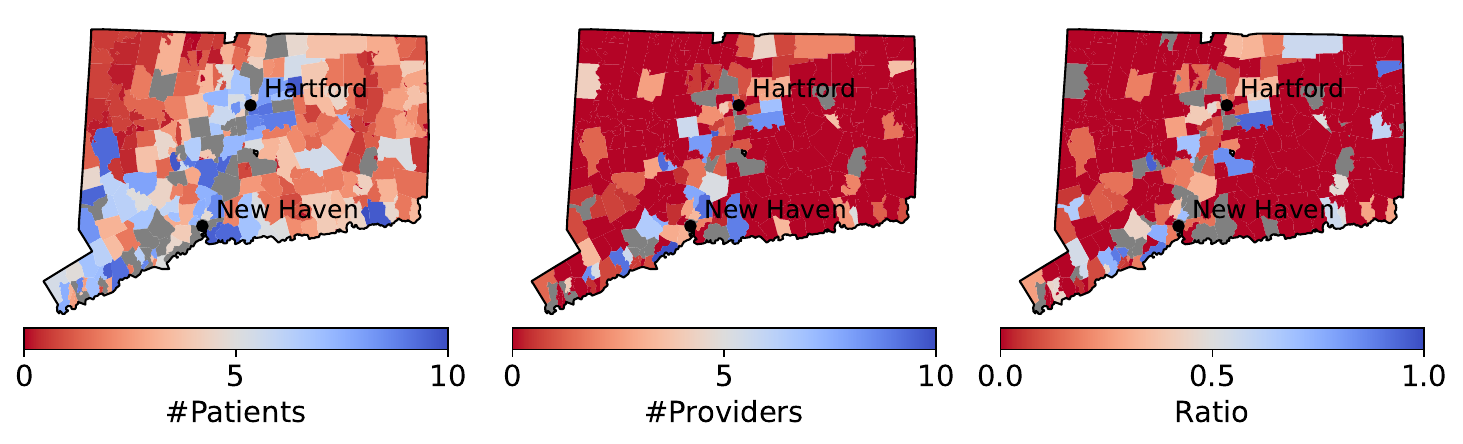}
    \caption{We compare the distribution of patients (left), providers (middle), and the ratio of providers to patients (right). We find disparities between the distribution of patients and providers, revealing inherent difficulties in designing matching algorithms.} 
    \label{fig:distribution_patients}
\end{figure*}
To better understand our semi-synthetic dataset, we analyze the distribution of providers and patients. 
In Figure~\ref{fig:distribution_patients}, we show that the distribution of providers is more clustered in large cities, such as Hartford and New Haven, compared with the distribution of patients, matching results from prior work~\citep{rural_hospital_closure}. 
Moreover, plotting the ratio of providers to patients shows that most zip codes lack any providers while a few near metropolitan areas have a high density of providers (potentially corresponding to healthcare facilities). 
The distribution of patients and providers reveals inherent tensions for algorithm design; maximizing the match quality might lead to prioritization of ``easy-to-match'' patients near cities at the expense of more rural patients.  

\subsection{Match Rate and Match Quality Analysis}
\label{sec:semi_synthetic_results}
\begin{figure*}[ht!]
    \centering 
    \subfloat[\centering Comparison]{\includegraphics[width=0.45\textwidth]{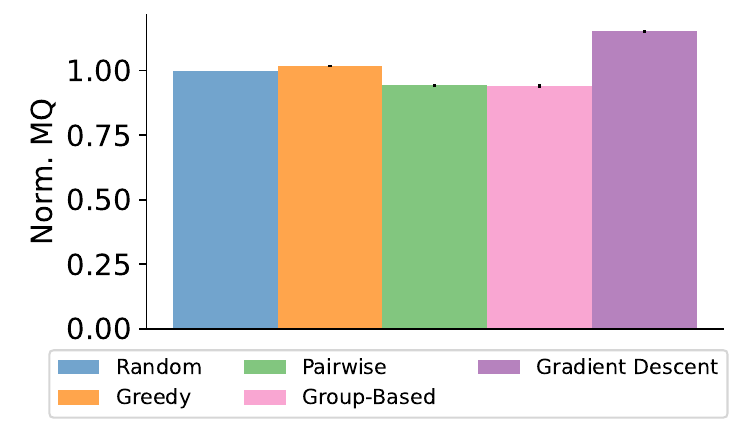}}
    \subfloat[\centering Comorbidities]{\includegraphics[width=0.45\textwidth]{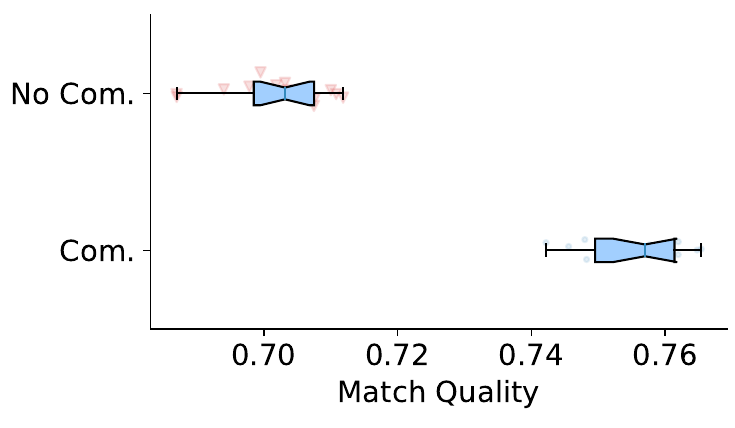}} \\ 
    \subfloat[\centering Match Rate vs. Population]{\includegraphics[width=0.45\textwidth]{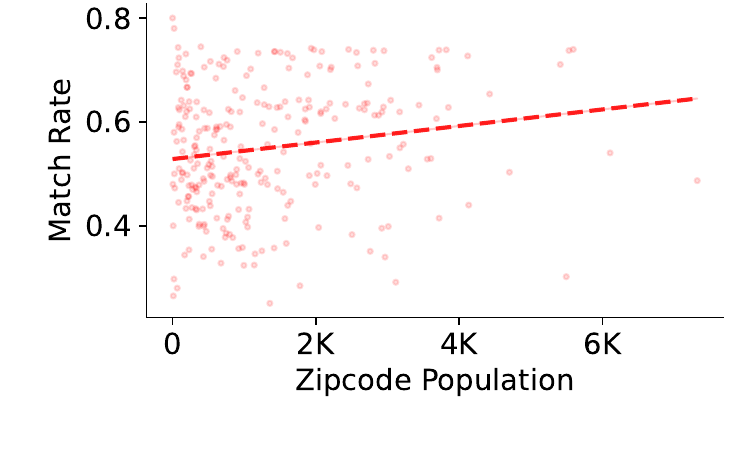}}
    \subfloat[\centering Match Rate by Zipcode]{\includegraphics[width=0.45\textwidth]{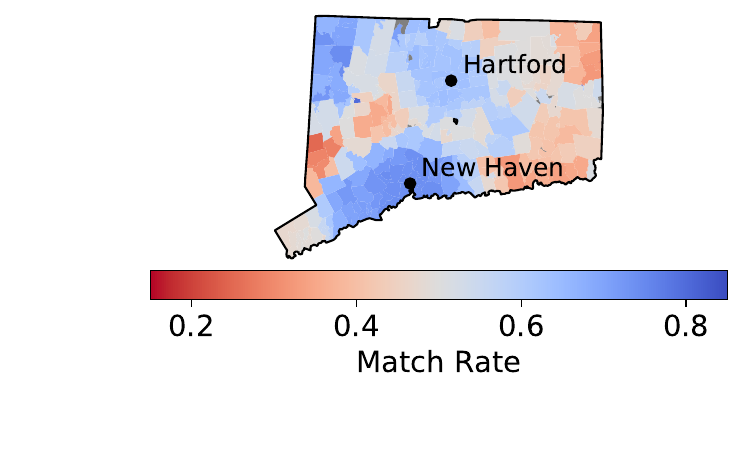}}
    \caption{(a) We construct a semi-synthetic scenario with $N=1225$ and $M=700$. We find that the gradient descent policy performs best because of the large patient/provider ratio. 
    (b) The gradient descent policy achieves a higher match quality for patients with comorbidities because these patients match with specialist providers. (c) Geographically, the gradient descent policy tends to match patients from more populated zip codes, (d) with the highest match rates occurring in cities such as Hartford and New Haven.} 
    \label{fig:semi_synthetic}
\end{figure*}

In Figure~\ref{fig:semi_synthetic}a, we show that on the semi-synthetic dataset, the gradient descent policy achieves the highest match quality and outperforms alternatives by 13\%. 
This mirrors Figure~\ref{fig:phase}, where gradient descent is the best policy for large patient/provider ratios.  
Next, we analyze \textit{which} patients achieve high-quality matches under the gradient descent policy. 
We find that patients with comorbidities achieve higher match quality ($p<10^{-17}$) because they tend to match with specialist providers (Figure~\ref{fig:semi_synthetic}b).
Additionally, patients from more populated zip codes achieve higher match rates (Figure~\ref{fig:semi_synthetic}c), due to the aforementioned lack of providers in rural areas~\citep{rural_health_care}. 
Pictorially, this corresponds to high match rates in urban areas such as Hartford and New Haven (Figure~\ref{fig:semi_synthetic}d).
Our results demonstrate that even if a policy improves performance, it can still result in inequities on \textit{who} gets matched, and so practitioners should be aware of this when selecting assortment policies. 

\subsection{Metrics Beyond Match Rate and Match Quality}
We next explore policy performance beyond match quality and match rate by introducing metrics such as fairness and regret. 

\subsubsection{Fairness}
\label{sec:fair}
Fairness is critical to patient-provider matching to ensure equitable outcomes. 
To better understand this, we first introduce several definitions of fairness, each inspired by prior work from the fair matching literature~\citep{fairness_optimization,fair_minimax}. 
\begin{itemize}
    \item \textbf{Minimum Match Quality} - One natural objective is to maximize the minimum match quality across all matched patients. Such an objective is natural when administrators aim to maintain a certain level of care across patients. 
    \item \textbf{Variance in Match Quality} - Another approach minimizes variance in match quality to ensure that the discrepancy in match quality between matched patients is small. 
    \item \textbf{Match Quality Range} - One final approach to match quality fairness is to minimize the absolute difference between the maximum and minimum match quality for matched patients. 
\end{itemize}
Each notion of fairness has different implications for the optimal policy. 
For example, optimizing for the minimum match quality might lead policies to only offer assortments to high match quality patient-provider pairs, whereas minimizing the variance in match quality might yield many low-quality matches.  

\begin{figure*}[h!]
    \centering 
    \includegraphics[width=\textwidth]{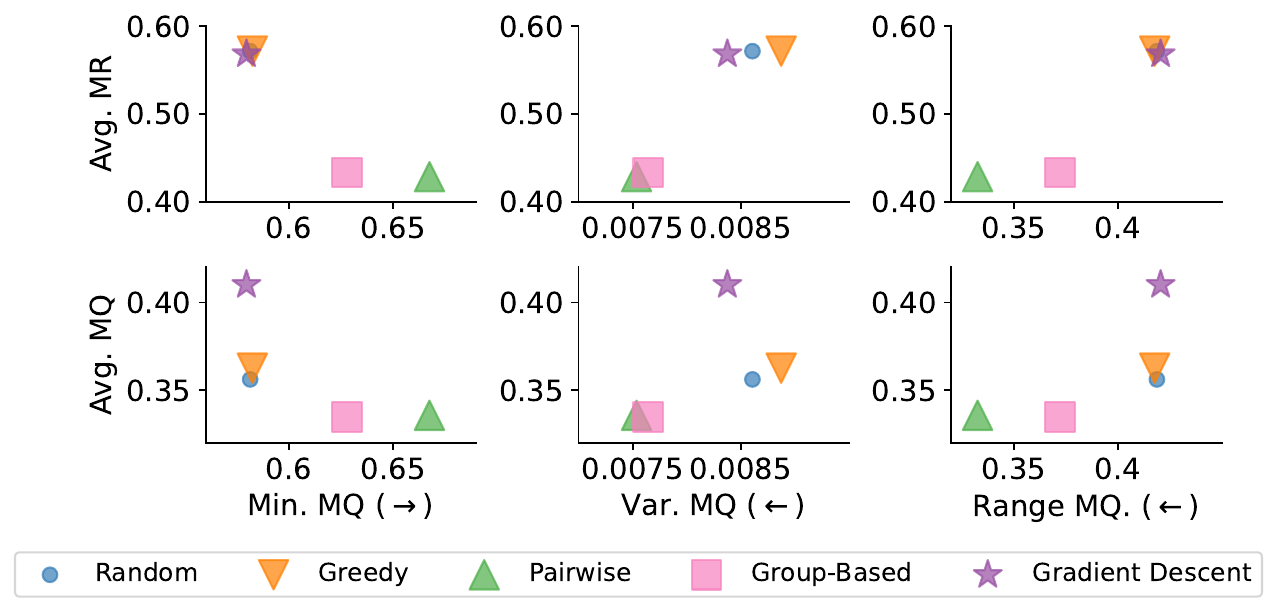}   
    \caption{We compare the performance of policies of both match rate (top) and match quality (bottom) and three definitions of fairness: minimum (left), variance (middle), and range (right) of match quality.  We find that a) the three definitions of fairness are aligned, as policies tend to do well or poorly on all three, and b) there are tradeoffs between match rate, match quality, and fairness.}
    \label{fig:fairness}
\end{figure*}

To understand how policies trade-off between match rate, match quality, and fairness, we plot each definition of fairness against both match rate (top) and match quality (bottom) in Figure~\ref{fig:fairness}.
We find that the three definitions of fairness are well aligned; across all definitions, the pairwise policy performs best, as it maximizes the minimum match quality, and minimizes the range and variance of match qualities.
Similarly, the greedy and gradient descent policies perform poorest across all definitions of fairness, though gradient descent performs better than greedy for variance. 
We additionally find that policies that perform better at either match rate or match quality tend to perform poorer at these fairness metrics. 
Intuitively, increasing the match quality or match rate requires matching patients with less suitable (lower match quality) providers, which leads to less fair outcomes but improved overall performance. 
Our overall takeaway is that no one policy performs best across all metrics, so practitioners need to carefully weigh metrics to determine which is best.  

\subsubsection{Patient Regret}
The assortment-based matching process can lead to patient regret if their initial assortment differs from the assortment offered during matching time. 
While not the primary driver of patient satisfaction, regret can nonetheless impact patient happiness because initially offered providers might no longer be available. 
To better understand patient regret, we first define it as the difference in the maximum match quality between the original assortment and the offered assortment, where the latter removes providers selected by patients. 
Formally, we can write regret for patient $i$ as $\max \mathbf{X}_{i} \odot \theta_{i} - \max \mathbf{X}_{i} \odot \theta_{i} \odot \mathbf{y}_{\sigma^{-1}_{i}}$. 
Here, $\mathbf{y}_{\sigma^{-1}_{i}}$ is the set of available providers when patient $i$ makes a decision. 

\begin{figure*}[h!]
    \centering 
    \includegraphics[width=\textwidth]{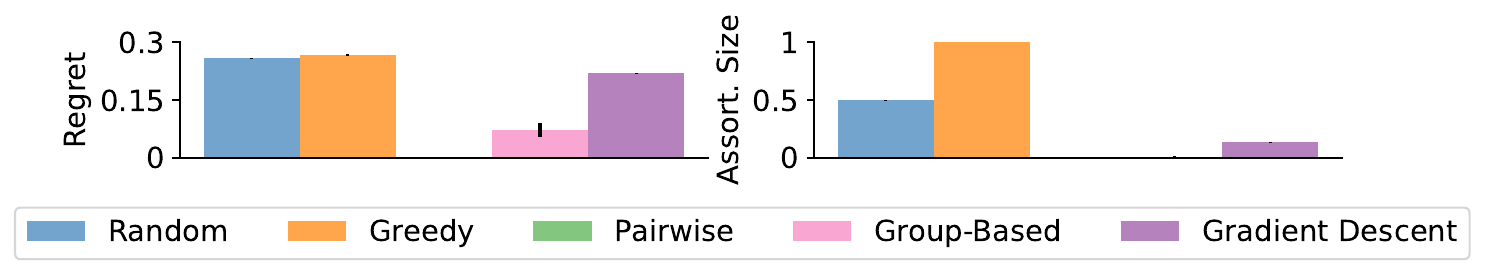}   
    \caption{We compare the regret (left) and assortment size (right) across policies and find that tailoring policies such as through gradient descent or pairwise can reduce regret. Moreover, we find a correlation between assortment size and regret because smaller assortments can allow for less overlap between patients.}
    \label{fig:regret}
\end{figure*}

To understand how policies perform for regret, in Figure~\ref{fig:regret}, we plot both patient regret and the average assortment size across policies. 
We find that policies such as pairwise suffer no regret because they offer disjoint menus across patients, while policies such as greedy suffer from the most regret because they do not tailor menus. 
We also see a general trend between assortment size and regret; as assortment sizes get larger, patient regret tends to increase as well, though the trend is imperfect.
Taken together, we find that no policy dominates across all metrics.
Improving the match quality might sacrifice fairness or lead to geographical disparities, while policies that perform well for fairness might lead to poor match rates or match qualities in some scenarios. 
\section{Conclusion and Discussion}
\label{sec:discussion}
The patient-provider relationship is foundational for quality healthcare, yet provider turnover frequently leads to patients being without providers. 
Our work studies how to re-match patients in such a scenario by reformulating the problem through assortment optimization. 
Our study proposes algorithmic approaches for patient-provider matching systems and offers analytical and empirical insights to inform system design.
Moreover, these policies broadly apply to one-shot matching under random response order, an understudied phenomenon within the assortment literature.
We conclude by providing recommendations for deploying assortment-based systems into practice and discussing extensions of our work. 

\subsection{Managerial Insights}
We translate our takeaways into real-world implications: 
\begin{enumerate}
    \item  Larger assortments become increasingly useful as patients become more selective because patients are more likely to reject matches, making it important to give more options. 
    Moreover, assortment sizes should tailor to patient behavior; the choice model and match probability impacts the best policy. 
    \item Various metrics are in tension when constructing assortments. In many situations, we cannot simultaneously maximize match rate and match quality, and incorporating fairness only complicates the picture. 
    Ultimately, these models provide a slate of options for practitioners to consider when weighing the priority of various metrics. 
    \item Healthcare administrators should be aware of the underlying factors that dictate \textit{which} patients get matched. For example, providers are scarce in rural areas, so assortments might naturally lead to low match rates in rural areas. Algorithm designers should work with administrators during development to minimize these biases.   
\end{enumerate}

\subsection{Broader Impact and Extensions}
Our work proposes policies for assortment optimization in the one-shot random response setting, and future work can extend these policies to other models within patient-provider matching and domains. 
Within patient-provider matching, natural extensions include incorporating  other metrics, including provider workload and total response times. 
Other extensions include modeling a dynamic pool of providers, where providers might enter and exit throughout the matching process and incorporating side constraints for individual patients or providers. 

Outside of patient-provider matching, our policies apply to a diverse class of problems with offline assortments and online random responses. 
Such a setting is a feature of many two-sided markets, and we present two examples. 
The first is assortment allocation in food delivery platforms; these platforms offer trips that drivers pick up and respond to in some random order, and batch offering eases the central planner's logistical burden. 
A second example is the matching between students and courses, where universities could offer each student an assortment of courses, and between schools and children, where school choice might lead counties to offer each child multiple options~\citep{school_choice}. 
These examples highlight the broader relevance of our problem setting and the applicability of our policies across these domains. 
%\pagebreak
 
\bibliographystyle{ACM-Reference-Format}
\bibliography{references}

\pagebreak
\section*{Acknowledgements}
We thank Aakash Lahoti, Keegan Harris, Kevin Li, Ariana Tang, Fei Fang, and George Chen for their comments. We thank the reviewers at ML4H for their feedback on a preliminary version of this work. We thank Susan Barrett and Karen Hall for their input from a health system perspective. Co-author Raman is supported by an NSF GRFP Fellowship.
\appendix

\section{Misspecification Experiments}
\label{sec:misspecification}
\paragraph{Misspecification in $p$}
\begin{figure*}
    \centering 
    \includegraphics[width=\textwidth]{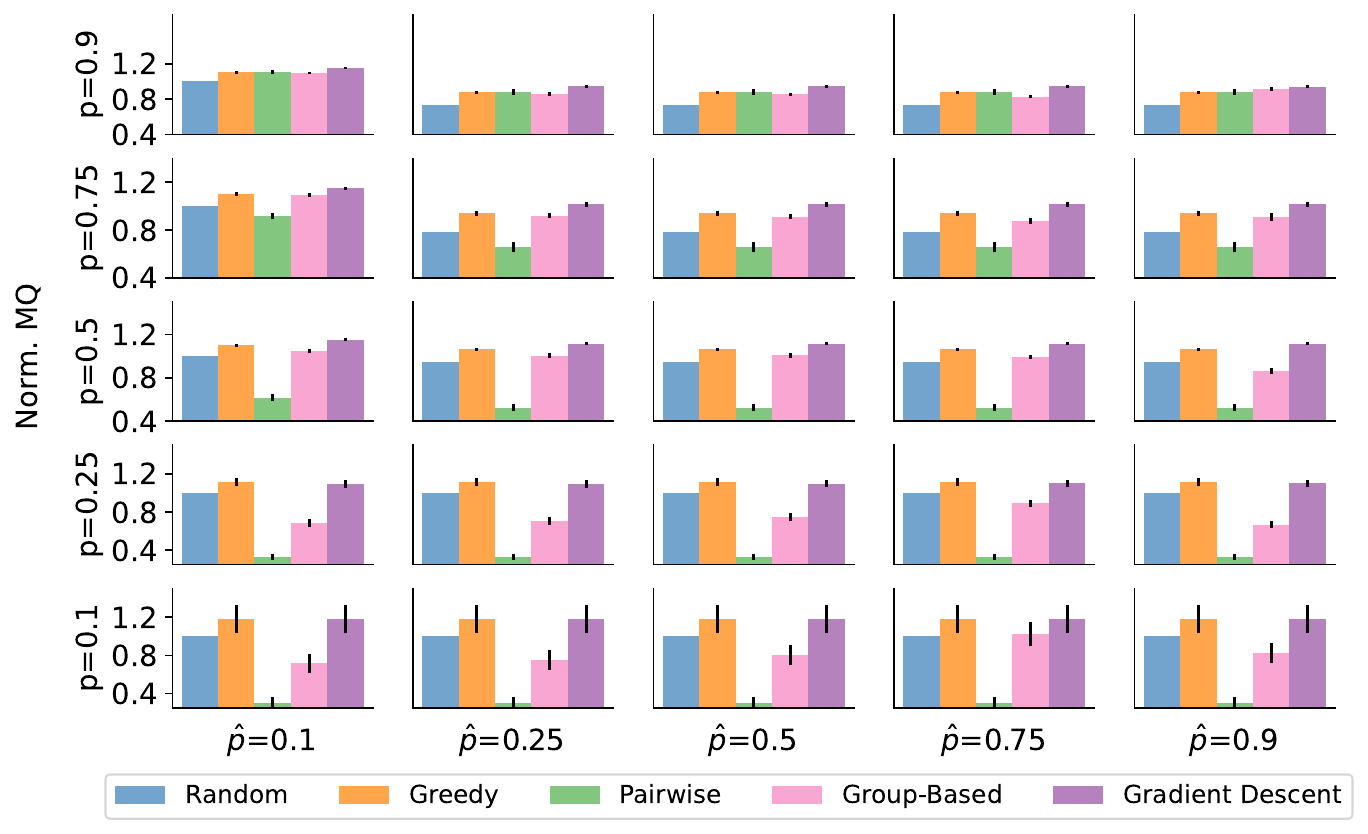}
    \caption{We evaluate the performance of policies when the observed match probability, $\hat{p}$, differs from the true match probability, $p$. We find that the observed match probability has little impact on the relative ordering of policies.} 
    \label{fig:misspecification}
\end{figure*}
Our experiments in Section~\ref{sec:empirical} consider situations with known values of $p$, but real-world scenarios frequently involve misspecified choice models. 
To model this, we run policies with some observed match probability $\hat{p}$ while letting $p$ be the match probability. 
This setup allows us to simulate real-world scenarios where match probability is unknown. 
We compare policies across $\hat{p} \in \{0.1,0.25,0.5,0.75,0.9\}$ and $p \in \{0.1,0.25,0.5,0.75,0.9\}$ with $M=25$ and $N=100$.
In Figure~\ref{fig:misspecification}, we demonstrate that policies perform similarly across $\hat{p}$ for fixed $p$. 
This trend holds because the random, greedy, and pairwise policies are unimpacted by $\hat{p}$, while other policies are not sensitive to the exact choice of the match probability. 
These results imply our policy's robustness to their exact knowledge of the choice model. 

\paragraph{Variance in $p$ and $\theta$}
\begin{figure*}[h!]
    \centering 
    \includegraphics[width=\textwidth]{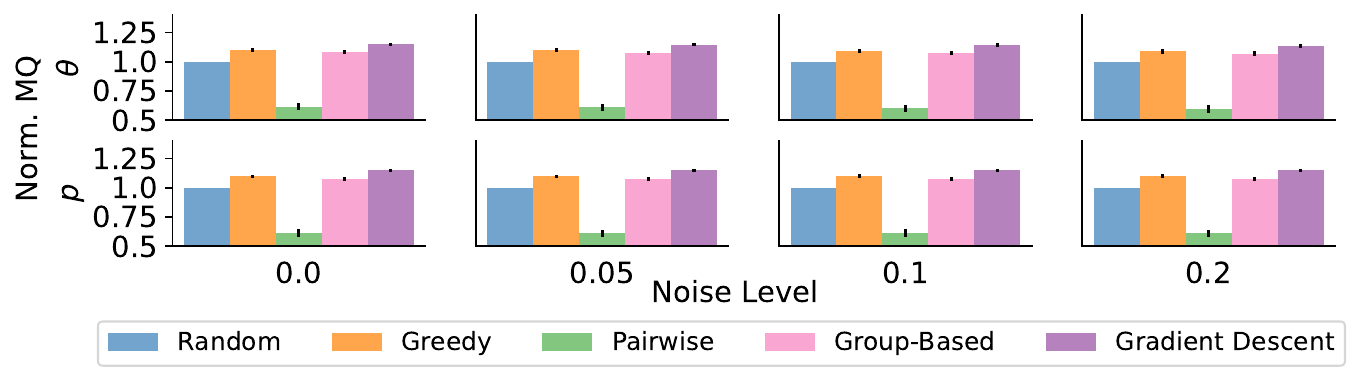}    
    \caption{We compare the performance of policies when either $\theta$ or $p$ is misspecified by some noise $\epsilon$. We find no noticeable differences between policy performances even as noise size is increased; that is, the relative ordering of policies remains constant. }
    \label{fig:dynamic_p_theta}
\end{figure*}
We consider the impact that noisy $p$ and $\theta$ have on the performance of policies. 
For each parameter, we modify it, so $p_{i} = p+\epsilon$ for some unknown noise $\epsilon$, and similarly, we let $\theta = \hat{\theta} + \epsilon$, where we only observe $\hat{\theta}$. 
We let $\epsilon = \mathcal{N}(0,s^{2})$ for $s \in \{0,0.05,0.1,0.2\}$, and similarly for $\epsilon$, we let $\epsilon = \mathcal{N}(0,s^{2})$ for $s \in \{0,0.05,0.1,0.2\}$. 
We again let $M=25$ and $N=100$. 

In Figure~\ref{fig:dynamic_p_theta}, we demonstrate that policies perform similarly across values of $s$. 
For both $p$ and $\theta$, we find that the policies that perform well under the noiseless scenario similarly perform well in the noisy scenario. 
For example, in all scenarios, we find that the gradient descent policy performs best, while the pairwise policy performs worst. 
This indicates that our results are not very sensitive to the exact knowledge of $p$ and $\theta$. 

\section{Policies with Non-Unit Capacities}
\label{sec:non_unit_policies}
We develop variants of each policy that work under non-unit capacity $c_{j}$. 
For greedy, we leave the policy as is, as it is not dependent on the unit capacity assumption. 
For both the pairwise and group-based assumption, we relax the initial linear programming solution so it has $\sum_{i=1}^{N} \pi^{P}(\theta)_{i,j} \leq c_{j}$, but proceeds as normal otherwise. 
Finally, for the gradient descent policy, we modify the computation of $h(\mathbf{X})_{i,j}$. 
Recall that $h(\mathbf{X})_{i,j}$ is an upper bound on the availability of provider $j$ for patient $i$. 
Originally, at position $t$, the availability probability is $(1-p'_{j})^{t-1}$, which represents the probability that none of the first $t-1$ patients select provider $j$. 
Here, we now modify this so that no more than $c_{j}-1$ patients select provider $j$ in the first $t-1$ patients. 
We compute this by approximating the number of patients who select provider $j$ through a normal approximation $\mathcal{N}(p'_{j}(t-1),p'_{j}(1-p'_{j})(t-1))$. 
Then we want to compute the probability that $\mathbf{Pr}[\mathcal{N}((t-1)p',(t-1)p'(1-p')) \leq c_{j}-1]$.
This is equivalent to $\Phi(\frac{c_{j}-1-(t-1)p'}{\sqrt{(t-1)p'(1-p')}})$, which we then lower bound as: 
\begin{equation}
\left\{
\begin{array}{ll}
\displaystyle
\frac{z}{z^2 + 1} \cdot \frac{1}{\sqrt{2\pi}} e^{-z^2/2} & \text{if } c_{j} - 1 \leq (t - 1)p' \\\\
(1 - p')^{t - 1} & \text{otherwise}
\end{array}
\right.
\end{equation}
Here, $z = \frac{c_{j}-1-(t-1)p'}{\sqrt{(t-1)p'(1-p')}}$, and we use a bound on $\Phi(z)$~\citep{normal_bounds}. 
\section{Varying Provider Capacity}
\label{sec:varied_capacity}
\begin{figure*}[h!]
    \centering 
    \includegraphics[width=\textwidth]{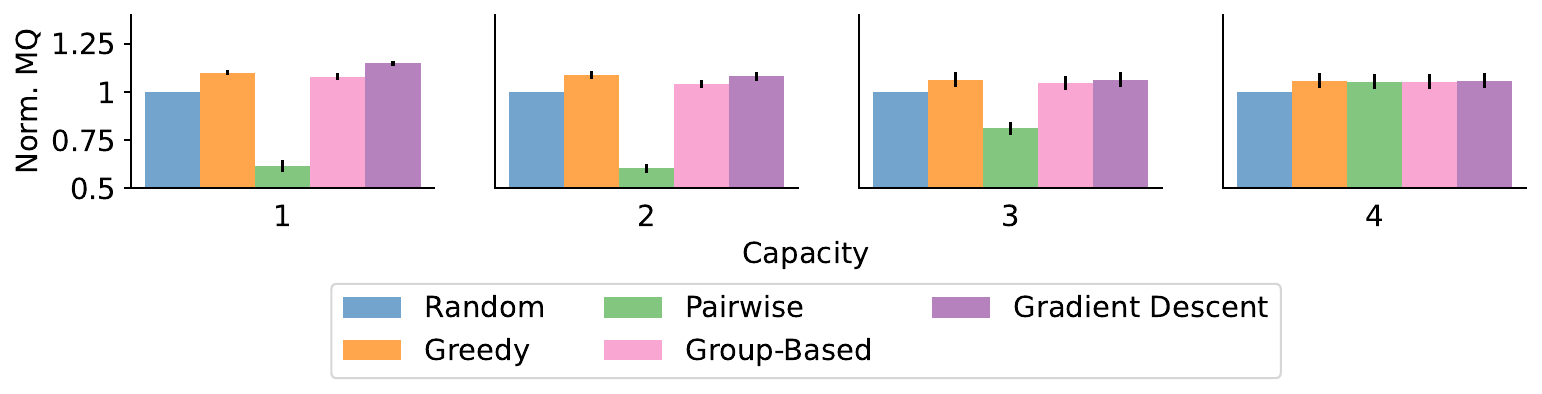}
    \includegraphics[width=\textwidth]{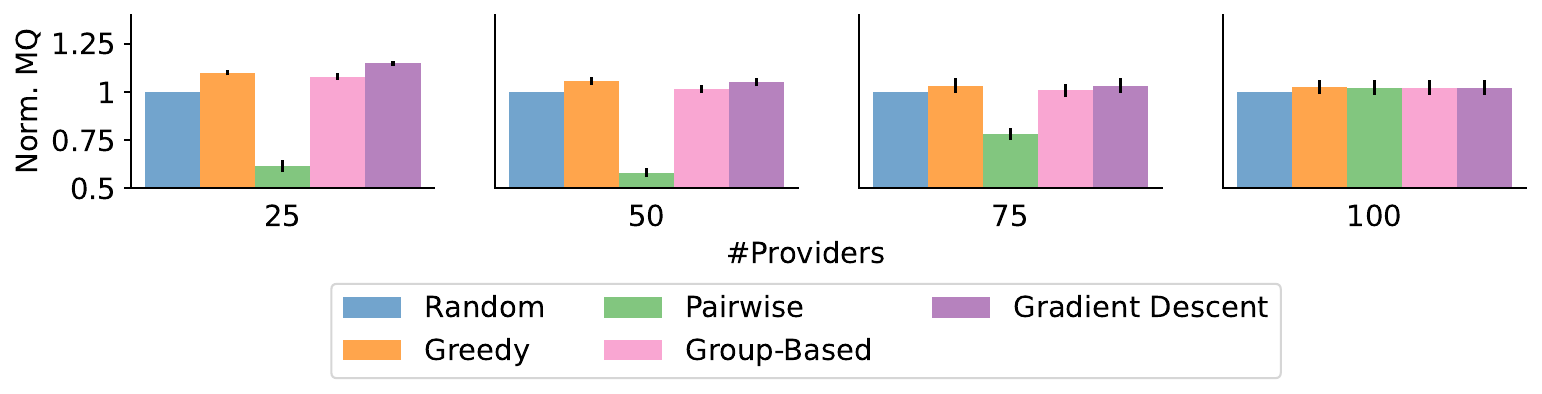}    

    \caption{We vary the capacity for providers from $1$ to $4$, and find that increased provider capacity leads to homogenization of policy performance. We find that the same effect can be accomplished when expanding the number of providers as well; having 25 providers each with a capacity of 4 yields similar results to having 100 providers. }
    \label{fig:varied_capacity}
\end{figure*}
Following the results in Section~\ref{sec:assumption_relaxation}, we run experiments to understand the impact of provider capacity on performance. 
We vary provider capacity from 1 to 4 (so each provider can take on 1 to 4 patients) when fixing $N=100$ and $M=25$, and plot this in Figure~\ref{fig:varied_capacity}.  
We find that as the amount of provider capacity is increased, policy performances are increasingly homogeneous. 
As providers have a larger capacity, individual assortments matter less, as all patients have a higher chance of matching with their most preferred provider. 
Moreover, in Figure~\ref{fig:varied_capacity} bottom, we find that increasing provider capacity produces a similar result to increasing the total number of providers.
For example, having $M=25$ providers each with a capacity of 4 produces similar results to having $M=100$ providers. 

\section{Batching Algorithm}
\label{sec:batching}
We study how to construct batches so to maximize the match quality. 
Formally, suppose we construct $L$ batches, $b_{1}, b_{2},\ldots,b_{L}$, where $b_{k} \subseteq [N]$, and we make offers first to batch $b_{1}$. 
Each set of batches modifies the response order distribution so that $\sigma$ is now distributed according to $\sigma \sim S(b_{1},b_{2},\ldots,b_{L})$, where $S$ is some probability distribution function.
Throughout this section, let $\mathrm{MQ}(\mathbf{X},\theta,f,\sigma) = \mathbb{E}_{\sigma}[\frac{1}{N} \sum_{t=1}^{N} f_{\sigma_{t}}(\mathbf{X} \odot \mathbf{y}^{(t)}) \cdot \theta_{\sigma_{t}}]$. 
If we can find some optimal ordering $\sigma^{*} = \argmax_{\sigma} \mathrm{MQ}(\mathbf{X},\theta,f,\sigma)$, then our goal is to set $b_{1},b_{2},\ldots,b_{L}$ so $\sigma^{*}$ and $S(b_{1},b_{2},\ldots,b_{L})$ are similar. 

\paragraph{Finding the optimal ordering $\sigma^{*}$} 
To understand $\sigma^{*}$, we first introduce a policy $\pi^{A}(\theta,\sigma)$, which improves the pairwise policy with knowledge of the order $\sigma$. 
Recall that $v(\theta)_{i} = j$ if $\pi^{P}(\theta)_{i,j} = 1$, and let $v^{-1}_{j} = i$ if $v(\theta)_{i}=j$ and $\sigma^{-1}_{i} = t$ if $\sigma_{t} = i$. 
We can compute the order $\sigma^{*}$ that optimizes $\mathrm{MQ}(\pi^{A}(\theta),\theta)$.
$\pi^{A}$ adds to the pairwise assortment based on knowledge of the response order $\sigma$. 
For example, if $\sigma^{-1}_{i} = 1$ and $\sigma^{-1}_{i'} = 2'$, then we know that offering $v(\theta)_{i}$ to provider $i'$ can only improve the match quality because $v(\theta)_{i}$ will only be available if unselected by patient $i$. 
More generally, we add on pairs $(i',v(\theta)_{i})$ based on the order $\sigma$ when $\sigma^{-1}_{i} < \sigma^{-1}_{i'}$. 
Intuitively, if patient $i$ comes earlier than patient $i'$ in $\sigma$, then it's safe to offer $v(\theta)_{i}$ to patient $i'$ as well because patient $i$ will have the first opportunity to claim $v(\theta)_{i}$.
More formally, we include $(i',v(\theta)_{i})$ in the augmented assortment if: a) patient $i'$ prefers $v(\theta)_{i}$ at least as much as their original provider under $\pi^P$, i.e., $\theta_{i', v_i} \geq \theta_{i', v_{i'}}$, and b) $i$ comes before $i'$ in $\sigma$, i.e., $\sigma^{-1}i < \sigma^{-1}{i'}$.
We focus on such a policy because it represents a natural extension of the pairwise policy to incorporate the order $\sigma$. 
We next characterize $\sigma^{*}$ for the policy $\pi^{A}$ then use this to construct the batches $b_{k}$: 
\begin{restatable}{lemma2}{thmordering} 
    Let $G=(V,E)$ be a graph with $N$ nodes such that there is an edge from $i$ to $i'$ if $\theta_{i,v(\theta)_{i}} \leq \theta_{i,v_{i'}}$ or $v(\theta)_{i} = -1$ and $v_{i'} \neq -1$. 
    Let $\sigma^{*}$ be the optimal ordering for the policy $\pi^{A}$.
    Then, traversing the nodes defined by $\sigma^{*}$ is a reverse topological ordering on $G$. 
\end{restatable}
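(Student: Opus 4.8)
The plan is to prove this by an exchange argument on the response order, exploiting the fact that the augmentations in $\pi^{A}$ are ``safe.''  First I would record the structure of $\pi^{A}(\theta,\sigma)$ under the uniform choice model that makes the order-dependence tractable.  For a patient $j$ with $v(\theta)_{j}\neq -1$ the augmented assortment is $A_{j}=\{v(\theta)_{j}\}\cup\{v(\theta)_{i}:\ \sigma^{-1}_{i}<\sigma^{-1}_{j},\ \theta_{j,v_{i}}\geq \theta_{j,v_{j}}\}$, and $v(\theta)_{j}$ is offered only to $j$ and to patients appearing \emph{after} $j$ in $\sigma$; hence $v(\theta)_{j}$ is always available when $j$ acts, so $j$ matches with probability exactly $p$ and, conditioned on matching, obtains quality at least $\theta_{j,v_{j}}$.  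Consequently $\mathrm{MQ}(\pi^{A}(\theta,\sigma),\theta,f)\geq \mathrm{MQ}(\pi^{P}(\theta),\theta,f)$ for every $\sigma$, and the only way reordering can change the objective is by changing which ``better-than-$v_{j}$'' providers are available to each $j$ when it acts.

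Next I would set up two ingredients.  (i) \emph{Acyclicity of $G$.}  A directed cycle $i_{1}\to i_{2}\to\cdots\to i_{k}\to i_{1}$ means $\theta_{i_{\ell},v_{i_{\ell}}}\leq \theta_{i_{\ell},v_{i_{\ell+1}}}$ for all $\ell$ (indices mod $k$); summing over the cycle shows the cyclically shifted assignment $i_{\ell}\mapsto v_{i_{\ell+1}}$ is a feasible matching of total quality at least that of $\pi^{P}(\theta)$, contradicting optimality of the weighted bipartite matching $\pi^{P}$ (assuming $\theta$ in general position so $\pi^{P}$ is unique; ties are absorbed into a fixed tie-break).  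Hence $G$ is a DAG and reverse topological orders exist, so the claim is well-posed.  (ii) \emph{An improving relocation.}  Suppose $\sigma^{*}$ is optimal but not a reverse topological order, so some edge $i\to i'$ of $G$ has $i$ before $i'$ in $\sigma^{*}$.  Form $\sigma'$ by moving $i'$ to the slot immediately before $i$, keeping every other patient's relative order.  The crucial point is that, since $i\to i'$ is an edge, $\theta_{i,v_{i'}}\geq \theta_{i,v_{i}}$ (or $v_{i}=-1$ and $v_{i'}\neq -1$), so in $\sigma'$ the provider $v(\theta)_{i'}$ enters $i$'s augmented assortment, while $i'$ — now acting earlier — still has $v(\theta)_{i'}$ available and still matches with probability $p$ at quality at least $\theta_{i',v_{i'}}$.

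To finish I would couple the choice-model coins $(C_{a})_{a\in[N]}$ across $\sigma^{*}$ and $\sigma'$ and compare collected quality realization by realization.  Patients preceding $i$'s original slot are processed identically.  Relocating $i'$ earlier only \emph{adds} $v(\theta)_{i'}$ to the menus of the patients it moves ahead of (namely $i$ and those originally between $i$ and $i'$) and \emph{removes} from $i'$'s own menu the providers of exactly those patients; I would argue the additions weakly dominate.  The decisive realization is $C_{i'}=0,\ C_{i}=1$: under $\sigma^{*}$ the provider $v_{i'}$ is wasted and $i$ is stuck with $v_{i}$, whereas under $\sigma'$ patient $i$ captures $v_{i'}$, a gain of $\theta_{i,v_{i'}}-\theta_{i,v_{i}}\geq 0$ (strict generically, and strictly positive when $v_{i}=-1$).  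Showing that no other realization loses quality gives $\mathrm{MQ}(\pi^{A}(\theta,\sigma'),\theta,f)\geq \mathrm{MQ}(\pi^{A}(\theta,\sigma^{*}),\theta,f)$, contradicting optimality; generic strictness of the edge inequality upgrades the conclusion to ``every optimal order is a reverse topological order.''  The main obstacle is precisely this last coupling step: the jumped-over patients each lose only $v_{i}$ from their menu but now see $v_{i'}$, which can trigger a cascade of greedy reassignments, and one must verify the net change is nonnegative in \emph{every} coin realization.  I expect this needs a short induction on the length of the relocated block, or a monotonicity argument showing the sequence of available-provider sets under $\sigma'$ dominates that under $\sigma^{*}$ from $i'$'s new position onward.
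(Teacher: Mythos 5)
Your overall strategy is the same as the paper's: establish that $G$ is acyclic via a cyclic-shift argument on the bipartite matching (the paper only spells out the $2$-cycle case, so your general-cycle version, and your explicit remark about ties/general position, is if anything more careful), then take a violating pair $i\to i'$ with $i$ before $i'$ in $\sigma^{*}$ and argue that relocating $i'$ earlier weakly improves match quality. Your structural observation that $v(\theta)_{j}$ is only ever added to the menus of patients \emph{after} $j$, so that $v(\theta)_{j}$ is guaranteed available when $j$ acts, is correct and is implicitly what makes the whole argument work.

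However, there is a genuine gap, and it sits exactly where you say "the main obstacle is precisely this last coupling step." Moving $i'$ alone to the slot immediately before $i$ is \emph{not} always weakly improving: $i'$ thereby loses from its augmented menu the providers $v(\theta)_{i}, v(\theta)_{w_{1}},\ldots$ of every patient it jumps over, and if $i'$ has an outgoing edge to some intermediate $w$ (i.e., $\theta_{i',v_{w}}\geq\theta_{i',v_{i'}}$), then under $\sigma^{*}$ patient $i'$ may have captured the strictly better $v_{w}$, whereas under $\sigma'$ it is forced down to $v_{i'}$; the cascade of who then takes $v_{w}$ and $v_{i'}$ does not obviously net out nonnegative realization by realization, so the available-set dominance you hope for fails for $i'$ itself. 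This is precisely why the paper's proof is a five-way case analysis rather than a single relocation: when an intermediate $w$ has an edge to $i'$ or receives an edge from $i$ it \emph{recurses} on the shorter violating pair (Case 2); when edges $i'\to w$ and $w\to i$ coexist it invokes acyclicity (Case 3); and otherwise it moves $i'$ \emph{together with all of its descendants} (Case 4), or $i$ together with all of its ancestors (Case 5), so that no relocated patient's menu loses a provider that patient actually wants. Your proposal names the right candidate fixes ("induction on the length of the relocated block, or a monotonicity argument") but does not carry either out, so the decisive step — that the exchange is improving in every coin realization — is asserted rather than proved.
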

Intuitively, the graph $G$ captures who envies whom: if there's an edge from $i$ to $i'$, then $i$ would prefer the provider assigned to $i'$, suggesting $i'$ should come earlier in the response to minimize blocking.
We additionally let $v(\theta)_{i} = -1$ for unmatched patients. 
We demonstrate this through casework on pairs of nodes $i,i'$ in $\sigma^{*}$. 

\paragraph{Constructing batches $b$}
We next construct $b_{1},b_{2},\ldots,b_{L}$, so $\sigma \sim S(b_{1},b_{2},\ldots,b_{L})$ is close to the optimal ordering $\sigma^{*}$ for $\pi^{A}$. 
The idea is to partition $\sigma^{*}$ into $L$ batches; then differences only emerge within batches rather than between batches.
We note that while we can approximate $\sigma^{*}$ through topological ordering, computing the optimal $\sigma^{*}$ is not guaranteed (as there can be many topological orderings). 
Once we have an approximation of $\sigma^{*}$, we can use dynamic programming to find the set of batches $L$ that minimizes the number of edges within a batch. 
By minimizing the number of edges within a batch, we reduce stochasticity in $\sigma$, as edges between batches have a fixed order. 
Formally, we can guarantee that, given knowledge of $\sigma^{*}$, we can approximate this through batching for sufficiently large $L$ under a set of regularity conditions: 
\begin{restatable}{theorem2}{optimalordering}
    Let $\sigma^{*}$ be an optimal ordering.
    Let $G=(V,E)$ be a graph with $N$ nodes such that node $i$ is connected to $i'$ if $\theta_{i,v(\theta)_{i}} \leq \theta_{i,v_{i'}}$ or $v(\theta)_{i} = -1$ and $v_{i'} \neq -1$. 
    Suppose that there exists a partition of $\sigma^{*}$ into $L$ batches, $b_{1},b_{2},\ldots,b_{L}$, such that for any partition $b_{k}$, no $i, i' \in b_{k}$ have a common descendant in $G=(V,E)$. 
    Then $\pi^{A}$ achieves the same match quality under partition $g$  as under the optimal ordering: 
    \begin{equation}
        \mathbb{E}_{\sigma \sim S(b_{1},b_{2},\ldots,b_{L})}[\mathrm{MQ}(\pi^{A}(\theta),\theta,\sigma)] = \mathrm{MQ}[\pi^{A}(\theta),\theta,\sigma^{*}] 
    \end{equation}
\end{restatable}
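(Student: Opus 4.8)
The plan is to show that $\mathrm{MQ}(\pi^{A}(\theta),\theta,\sigma)$ is the \emph{same number} for every ordering $\sigma$ in the support of $S(b_{1},\ldots,b_{L})$. Since $\sigma^{*}$, being partitioned into the contiguous blocks $b_{1},\ldots,b_{L}$, itself lies in that support, the expectation over $\sigma\sim S$ then equals $\mathrm{MQ}(\pi^{A}(\theta),\theta,\sigma^{*})$ and we are done. The support of $S$ is exactly the set of orderings that run through $b_{1},\ldots,b_{L}$ in order while permuting within each (contiguous) block arbitrarily, and any two such orderings are connected by a sequence of transpositions of patients that are \emph{adjacent in the ordering and in the same batch}; so it suffices to prove $\mathrm{MQ}$ is unchanged under one such transposition.

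First I would pin down the assortment. By the preceding lemma $\sigma^{*}$ visits the nodes of $G$ in reverse topological order (so $G$ is acyclic, there being no improving rotation in the pairwise LP), and unwinding the definition of $\pi^{A}$ gives that patient $i$'s menu is $M_{i}=\{v(\theta)_{i}\}\cup\{v(\theta)_{j}:(i,j)\in E \text{ and } j \text{ precedes } i\}$; moreover, by the definition of the edges of $G$, every provider in $M_{i}\setminus\{v(\theta)_{i}\}$ is weakly preferred by $i$ over $v(\theta)_{i}$, so $v(\theta)_{i}$ is $i$'s least preferred element of $M_{i}$. Two consequences: (a) $v(\theta)_{i}$ is only ever added to the menus of patients that do not precede $i$, so it is still available when $i$ is processed, which (as in the match-rate arguments) also means the downstream analysis never has to worry about a patient with $v(\theta)_{i}\neq-1$ failing to at least fall back on $v(\theta)_{i}$; and (b), the key point, \emph{if $i,i'$ lie in the same batch then $M_{i}\cap M_{i'}=\emptyset$}. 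For (b): any shared element other than $v(\theta)_{i},v(\theta)_{i'}$ would be $v(\theta)_{j}$ for a common out-neighbour $j$ of $i$ and $i'$, hence a vertex reachable from both, i.e.\ a common descendant, contradicting the hypothesis; an edge $i\to i'$ is likewise excluded since it would make every out-neighbour of $i'$ a common descendant of $i$ and $i'$, forcing $i'$ to be a sink with $M_{i'}=\{v(\theta)_{i'}\}$, which is again not in $M_{i}$ across a batch boundary.

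Now take a transposition of $i$ and $i'$, adjacent in the ordering and in the same batch, and couple the choice-model randomness across the two orderings (same realization for each patient). All patients before $\{i,i'\}$ behave identically, so the available set $A$ just before processing the pair is the same. By (b), $M_{i}\cap M_{i'}=\emptyset$, so in either order $i$ selects (or abstains from) the best element of $M_{i}\cap A$ and $i'$ the best element of $M_{i'}\cap A$, each choice being oblivious to the other's menu. Hence the providers claimed by $\{i,i'\}$, the resulting available set, and the two contributions $\theta_{i,\cdot}+\theta_{i',\cdot}$ are independent of the order; consequently every later patient sees an identical state, the two coupled trajectories coincide pointwise, and $\mathrm{MQ}$ is unchanged. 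Iterating over transpositions shows $\mathrm{MQ}(\pi^{A}(\theta),\theta,\cdot)$ is constant on the support of $S$, which gives the claim.

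The main obstacle is establishing (b) cleanly: it requires getting the book-keeping of $\pi^{A}$ exactly right for pairs that straddle a batch boundary versus pairs inside a batch, and recognizing that the transitive ``no common descendant'' hypothesis — not merely ``no common child'' — is what is needed, since it is this transitivity that rules out the cascade of forced re-selections of one patient ever colliding with that of the other. Once same-batch menus are disjoint, the rest is a routine coupling/exchange argument; the only further care needed is to note the argument is insensitive to the particular choice model (uniform, threshold, or MNL) because each patient's conditional selection distribution depends only on $M_{i}\cap A$, which the transposition leaves fixed.
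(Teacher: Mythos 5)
Your proof is correct but follows a genuinely different route from the paper's. The paper couples the randomness by fixing a Bernoulli coin $Z_t$ per patient and proves inductively that each patient's selection $V_t$ is a function only of the coins of its descendants in $G$; combined with the claim that the relative order of any node's descendants is preserved under every $\sigma$ in the support of $S(b_1,\ldots,b_L)$, this forces all realized selections to coincide with those under $\sigma^*$. You instead reduce to adjacent within-batch transpositions and show the swap is a no-op because same-batch menus are disjoint, which is a cleaner and more local argument: any shared menu element $v(\theta)_j$ with $j\notin\{i,i'\}$ exhibits $j$ as a common descendant, so the hypothesis does the work directly. Your route also sidesteps the paper's somewhat overstated intermediate claim that distinct descendants of a single node must land in distinct batches (two incomparable descendants need not share a descendant, so they can legally cohabit a batch; your disjointness argument handles that case correctly where the paper's phrasing does not). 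One caveat applies to both proofs: the case of an edge $i\to i'$ between two same-batch patients must be excluded by reading ``common descendant'' so that a child counts as a common descendant of itself and its parent --- otherwise one can build a two-node counterexample ($i\to i'$ with $i'$ a sink) in which the within-batch order genuinely changes the match quality, so the theorem itself would fail. Your parenthetical justification of this sub-case (``not in $M_i$ across a batch boundary'') is the one muddled spot, since for a same-batch pair with $i'$ preceding $i$ the provider $v(\theta)_{i'}$ \emph{would} be added to $M_i$; the honest resolution is that the hypothesis must be read to forbid such an edge inside a batch, after which your claim (b) and the coupling argument go through.
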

We prove this by demonstrating that, if we can control for the descendants of each patient, then we can fully control the order $\sigma$, and let it be equal to the optimal order $\sigma^{*}$. 
Here, the common descendant condition ensures that the only source of uncertainty for the ordering $\sigma$ is the order of the batches themselves rather than any uncertainty for ordering \textit{within} a batch. 
Overall, the takeaway is that in some situations, controlling the batches $b$ is equivalent to controlling the full order $\sigma^{*}$. 

\section{Semi-Synthetic Experiment Details}
\label{sec:semi_synthetic_details}
We compute $d_{i,j}$ by sampling patient locations from Connecticut zip codes and obtaining provider locations from a Medicare dataset~\citep{Medicare}. 
We obtain $\beta_{i,j}$ from prior work on comorbidity rates ~\citep{age_comorbidity} and using information on provider specialties from a Medicare dataset~\citep{Medicare}. 
We let $p=0.75$ because most patients are low-effort, and $\Delta=\alpha=\frac{1}{2}$ to balance proximity and comorbidities. Finally, we set $\bar{d}=20.2$ since prior work demonstrates the average distance threshold for patients is 20.2 miles~\citep{washington_distance}. 

\section{Proofs}
\label{sec:proofs}

\textbf{Theorem~\ref{thm:one_m}} Let $f$ be the uniform choice model with probability $p$.
Suppose $M=1$, and let $u_{1},u_{2},\ldots,u_{N}$ be a permutation of $\{1,\ldots,N\}$ such that $\theta_{u_{1}} \geq \theta_{u_{2}} \cdots \theta_{u_{N}}$.
Let $s$ be defined as follows:
\begin{equation}
   s = \argmax_{s} (1-(1-p)^{s}) \frac{\sum_{i=1}^{s} \theta_{u_{i},1}}{s}
\end{equation}
Then the policy which maximizes match rate is $\pi(\theta) = \mathbf{X} = \mathbf{1}$ while the policy which maximizes match quality is $\pi(\theta) =  \mathbf{X}, \mathbf{X}_{u_{1},1} = \mathbf{X}_{u_{2},1} \cdots \mathbf{X}_{u_{s},1} = 1$, where $\mathbf{X}_{i,1}$ is 0 otherwise. 

\begin{proof}

Suppose that $s$ patients are offered the provider $j=1$: $\lVert \mathbf{X} \rVert_{1} = s$. 
In the uniform choice model, the probability that any of the $s$ patients select the single provider is $p$, so the probability of provider $j=1$ matching is $1-(1-p)^{s}$. 
Then the assortment that maximizes the match rate has $p=N$, and so $\mathbf{X}=\mathbf{1}$. 

Next, we find the optimal match quality assortment. 
By symmetry in the response order, the chance that each patient is selected is equal, and so the match quality is 
\begin{equation}
    (1-(1-p)^{s}) \frac{\sum_{i=1}^{N} \theta_{i,1} \mathbf{X}_{i,1}}{s}
\end{equation}
For fixed $s$, the optimal assortment selects the $s$ largest values of $\theta_{i,1}$. 
Next, we note that larger values of $s$ increase the match probability $(1-(1-p)^{s})$, but could decrease the average match quality for the selected patient, $\frac{\sum_{i=1}^{N} \theta_{i,1} \mathbf{X}_{i,1}}{s}$. 
To balance between the two factors, we can iterate through values of $s$ and compute $(1-(1-p)^{s})$ and $\frac{\sum_{i=1}^{N} \theta_{i,1} \mathbf{X}_{i,1}}{s}$. 
\end{proof}

\textbf{Theorem~\ref{thm:greedy_match}}
Let $f$ be the uniform choice model with match probability $p$. 
Then 
\begin{equation}
    \mathrm{MR}(\pi^{R}(\theta),\theta,f)  \geq \min(p,\frac{M}{N})
\end{equation}
Moreover, there exists no policy $\pi'$ such that $\mathrm{MR}(\pi'(\theta),\theta,f) > \mathrm{MR}(\pi^{R}(\theta),\theta,f)$

\begin{proof}
     
    For the greedy policy, we note that $\pi^{R}(\theta)_{i} = \mathbf{1}$ for all patients $i$. 
    Therefore, the only situation where a patient is unmatched under the uniform choice model is either a) there are no available providers, or b) they fail to match with probability $1-p$. 
    For the former situation, we note that this only occurs when all $M$ providers match, meaning that the match rate is $\frac{M}{N}$. 
    For the latter situation, we note that if all providers fail to match, then each patient has a $p$ chance of matching.
    Let $\alpha$ be the probability that all providers are taken. 
    Then our overall match rate is $\mathrm{MR}(\pi^{R}(\theta),\theta,f) = \alpha \frac{M}{N} + (1-\alpha) p \geq \min(p,\frac{M}{N})$. 
    Moreover, no policy can perform better because no policy can expand upon the set of assortments offered to patients. 
\end{proof}

\textbf{Theorem~\ref{thm:pairwise_match}}
Let $f$ be the uniform choice model with match probability $p$. 
Then 
\begin{equation}
    \mathrm{MR}(\pi^{P}(\theta),\theta,f)  = p \frac{\min(M,N)}{N}
\end{equation}

\begin{proof}
     
    In the pairwise policy, there are $\min(M,N)$ pairs of patients and providers that are matched up. 
    Each is successfully matched with probability $p$ in the uniform choice model, yielding the overall match probability to be $p \frac{\min(M,N)}{N}$
\end{proof}

\begin{lemma}
    \label{lem:augment}
    Let $\pi$ be a policy that augments the pairwise policy; $\pi(\theta)_{i,j} \geq \pi^{P}(\theta)_{i,j}$ for any $\theta$ for all $i,j$. 
    Let $G = (V,E)$ be a directed graph with $N$ nodes such that nodes $i$ and $i'$ are connected if $\pi(\theta)_{i,v(\theta)_{i'}} = 1$ . 
    Here, $v(\theta)_{i} = j$ if $\pi^{P}(\theta)_{i,j} = 1$.
    Then we have that
    \begin{equation}
        \mathrm{MR}(\pi,\theta,f) = \mathrm{MR}(\pi^{P},\theta,f) 
    \end{equation}
    if each component in $G$ is a complete digraph.  
\end{lemma}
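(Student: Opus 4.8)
The plan is to prove the identity by decomposing the match rate over the connected components of $G$ and showing it is preserved component by component; throughout I take $f$ to be the uniform choice model with probability $p$ (as in Theorem~\ref{thm:grouping_match}, for which this lemma is used), and I use that $\pi$ only ever adds providers of the form $v(\theta)_{i'}$ to an assortment, as is the case for $\pi^{G}$ and the augmented policy $\pi^{A}$. First I would observe that $\pi(\theta)_{i,v(\theta)_{i'}}=1$ is exactly the condition for the edge $i\to i'$ in $G$, so the provider set $S_C=\{v(\theta)_{i'}:i'\in C\}\setminus\{-1\}$ associated to a component $C$ appears only in the assortments of patients in $C$. Hence no match crosses components, and $\mathrm{MR}(\pi,\theta,f)=\tfrac{1}{N}\sum_{C}\mathbb{E}\big[\#\{i\in C:i\text{ matches}\}\big]$, with the analogous decomposition for $\pi^{P}$; it then suffices to show the per-component expectations coincide.

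Next I would use the complete-digraph hypothesis to pin down each component's structure. If $|C|\ge 2$, then every $i'\in C$ has an incoming edge, which forces $v(\theta)_{i'}\ne -1$; so all patients of $C$ are matched under $\pi^{P}$, the map $i'\mapsto v(\theta)_{i'}$ is injective on $C$, and therefore $|S_C|=|C|$. Completeness also gives $i\to i'$ for every $i,i'\in C$, so each patient $i\in C$ has the whole set $S_C$ in its $\pi$-assortment, and (since additions are of $v(\theta)_{\cdot}$ type) exactly $S_C$. Singleton components are immediate: an unmatched patient has an empty assortment under $\pi$ (any added $v(\theta)_{i'}$ would merge it into $i'$'s component), and a matched singleton $\{i\}$ retains the assortment $\{v(\theta)_i\}$; in both cases $\pi$ and $\pi^{P}$ agree.

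The core step is the per-component count under the uniform choice model for a component $C$ with $|C|=k=|S_C|$. Looking only at the relative order of the $k$ patients of $C$ (patients outside $C$ cannot touch $S_C$), when the $r$-th of them acts at most $r-1\le k-1$ providers of $S_C$ have been claimed, so at least one provider in its menu $S_C$ remains available. Under the uniform model such a patient matches precisely when it does not abstain, an independent event of probability $p$, so $\mathbb{E}[\#\{i\in C:i\text{ matches under }\pi\}]=pk$. Under $\pi^{P}$ the $k$ patients of $C$ have disjoint singleton menus that are always available, so each matches with probability $p$ and the expectation is again $pk$ (consistent with $\mathrm{MR}(\pi^{P},\theta,f)=p\min(M,N)/N$ from Theorem~\ref{thm:pairwise_match}). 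Summing over all components and dividing by $N$ gives $\mathrm{MR}(\pi,\theta,f)=\mathrm{MR}(\pi^{P},\theta,f)$.

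I expect the main obstacle to be the structural bookkeeping rather than the probability: namely, verifying that $S_C$ is inaccessible outside $C$, that completeness forces the balance $|S_C|=|C|$ on nontrivial components, and that every unmatched patient must sit in a singleton component. Once that picture is in place the computation is trivial, precisely because the uniform choice model decouples the abstain/select decision from the menu; this is essential, and the conclusion genuinely fails for the MNL or threshold models, where enlarging a patient's menu strictly raises its matching probability.
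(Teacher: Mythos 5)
Your proof is correct and follows essentially the same route as the paper's: both rest on the observation that within a complete-digraph component of size $k$ all patients share the same size-$k$ menu, so at most $k-1$ providers can be claimed before a given patient acts, leaving one available and making each non-empty-assortment patient match with probability exactly $p$, matching the pairwise rate $p\min(M,N)/N$. Your version is somewhat more careful than the paper's (explicitly ruling out cross-component interference, establishing $|S_C|=|C|$, and handling singleton/unmatched patients), but the core argument is the same.
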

\begin{proof}
% We will first prove the reverse direction; that if $G$ is a complete digraph, then $\pi$ and $\pi^{P}$ have the same match rate. 
Consider patient $i$ in a component that is a connected component of size $k$. 
% This means that $\sum_{j} \pi(\theta)_{i,j} = 1$. 
% We will prove the set of providers unmatched and offered to patient $i$ is non-empty; that is $\lVert \pi(\theta)_{\sigma_{t}} \odot \mathbf{y}^{(t)} \rVert_{1} > 0$ for any ordering $\sigma$ and $\mathbf{y}^{(t)}$ with $\sigma_{t}=i$. 
% $\lVert \pi(\theta)_{\sigma_{t}} \odot \mathbf{y}^{(t)} \rVert_{1} = k$ when $\mathbf{y}^{(t)} = \mathbf{1}$, and decreases by one each time a provider from $\pi(\theta)_{\sigma_{t}}$ is selected. 
For any $j$ with $\pi(\theta)_{i,j}=1$, there exists $k-1$ other $i'$ with $\pi(\theta)_{i',j}=1$ because all patients within the complete graph have provider $j$ on their assortment. 
Therefore, at most $k-1$ other patients can select providers from $\pi(\theta)_{i}$, and therefore $\lVert \pi(\theta)_{\sigma_{t}} \odot \mathbf{y}^{(t)} \rVert_{1}$ can decrease by at most $k-1$ before $\sigma_{t}=i$. 
% Therefore, $\lVert \pi(\theta)_{\sigma_{t}} \odot \mathbf{y}^{(t)} \rVert_{1} \geq k-(k-1) = 1 > 0$. 
Moreover, because $\min(M,N)$ patients have a non-empty assortment, in expectation, we have that $\min(M,N)p$ patients match.%, and so the match rate for the policy $\pi$ is $p \frac{\min(M,N)}{N}$, which is the same as the pairwise policy. 

\end{proof}

\textbf{Theorem~\ref{thm:grouping_match}}
Let $f$ be the uniform choice model. Then 
\begin{equation}
    \mathrm{MR}(\pi^{G}(\theta),\theta,f) = \mathrm{MR}(\pi^{P}(\theta),\theta,f)
\end{equation} 

\begin{proof}
    
    We prove that the group-based policy matches the assumptions of Lemma~\ref{lem:augment}, and therefore, has the same match rate as the pairwise policy. 
    We note that, within a group $\mathbf{q}$, all patients share the same assortment. 
    Moreover, this assortment is the union of assortments of all pairwise assortments for each patient. 
    Therefore, we can view each group as a complete graph among all $i$ with $q_{i}=1$, and so the induced graph $G$ consists of complete digraphs. 
\end{proof}

\textbf{Theorem~\ref{thm:gradient_match}}: Let $f$ be the uniform choice model with match probability $p$. Then for any $p<1$, there exists $\theta$ such that the unique maximizer of $\langle g(h(\mathbf{X})),\theta \rangle$, $\mathbf{X}^{*}$, has a match rate: 

\begin{equation}
    \mathrm{MR}(\mathbf{X}^{*},\theta,f) < \mathrm{MR}(\pi^{P}(\theta),\theta,f)
\end{equation}

\begin{proof}
     
    We will construct an example where the unique maximizer performs worse than the pairwise policy. 
    Consider $N=M=2$, and let $\theta_{1} = [1,0]$ and $\theta_{2} = [1,\epsilon]$. 
    Here, $\langle g(h(\pi^{P}(\theta))),\theta \rangle = 1+\epsilon$. 
    Note that any policy that does not offer all providers to some patients performs worse than $\pi^{P}(\theta)$. 
    Next, we will show that $\mathbf{X}_{1} = [1,0], \mathbf{X}_{2} = [1,1]$ outperforms $\pi^{P}(\theta)$. 
    Here, $h(\mathbf{X})_{2,2} = 1$ and $h(\mathbf{X})_{*,1} = 1-\frac{p}{2}$. 
    Therefore, the expected match quality is $2-p+\frac{p}{2}\epsilon$. 
    Setting $\epsilon < \frac{1-p}{1-p/2}$ (which can always be done when $p<1$) yields that $\mathbf{X}$ is better; moreover, setting $\mathbf{X}_{1,2} = 1$ can only lead to a worse match quality. 
    Under $\mathbf{X}$, the match rate then goes from $p$ to $p-\frac{p^2}{4}$, showing that $\mathrm{MR}(\mathbf{X},\theta,f) < \mathrm{MR}(\pi^{P}(\theta),\theta,f)$. 
\end{proof}

\textbf{Theorem~\ref{thm:greedy}}
Let $f$ be the uniform choice model with match probability $p$. 
For any $p$ and $\epsilon$, there exists a $\theta$ such that 
\begin{equation}
    \mathrm{MQ}(\pi^{R}(\theta),\theta,f)  \leq \epsilon \mathrm{MQ}(\pi^{*}(\theta),\theta,f)
\end{equation}

\begin{proof}
    
    We construct a problem instance where the greedy policy achieves an $\epsilon$ fraction of the optimal match quality. 
    Let $N=M$ and construct $\theta$ as follows: let $\theta_{1,1} = 1$, while $\theta_{i,1} = 2 \Delta$ for $i \neq 1$, where $\Delta \leq \frac{1}{2}$. Let $\theta_{i,j} = \Delta$ for all $i$ and $j \neq 1$. 

    Let $\pi$ be the policy so that $\pi(\theta)_{i} = \mathbf{e}_{i}$. 
    The expected match quality for this policy is $p(\Delta (N-1) + 1)$; for patients 2 to $N$, it achieves an expected match quality of $p(\Delta (N-1))$ in total, while for patient $1$ it achieves an expected match quality of $p$. 
    Because $\pi^{*}(\theta)$ is optimal, we have  
    \begin{equation}
        p(\Delta (N-1) + 1) = \mathrm{MQ}(\pi(\theta),\theta,f) \leq \mathrm{MQ}(\pi^{*}(\theta),\theta,f)
    \end{equation}

    Next, consider the greedy policy where $\pi^{R}(\theta) = \mathbf{1}$.
    All patients prefer the provider $1$, and because of this, all patients have an equal likelihood of matching with the provider $1$. 
    Each patient has a $\frac{1}{N}$ chance of matching with provider $1$, and therefore the expected match quality for provider $1$ is $\frac{1}{N} + \frac{N-1}{N} (2 \Delta)$. 
    For all the $N-1$ other providers, we receive a reward of $\Delta$ upon matching, and each patient matches with probability $p$; therefore, we receive a total match quality of at most $\frac{1}{N} + 2  \frac{N-1}{N} \Delta + p(N-1) \Delta$. 
    Taking the ratio of the greedy and optimal policies yields the following: 
    \begin{align}
        \small 
        \frac{\frac{1}{N} + \frac{2\Delta(N-1)}{N} + p(N-1) \Delta}{p (\Delta (N-1) + 1)} 
         \leq \frac{\frac{1}{N} + 2\Delta + p(N-1) \Delta}{p(\Delta (N-1) + 1)}  \\ \leq \frac{\frac{1}{N} + 2\Delta + p(N-1) \Delta}{p} 
         \leq \frac{1}{Np} + \frac{2\Delta}{p} + N \Delta 
    \end{align}
    Finally, letting $N=\frac{3}{\epsilon p}$ and $\Delta = \min(\frac{\epsilon}{3N},\frac{p\epsilon}{6})$ yields that the approximation ratio is $\epsilon$. 
\end{proof}

\textbf{Theorem~\ref{thm:lp}}
Let $f$ be the uniform choice model with match probability $p$. 
Then 
\begin{equation}
    \mathrm{MQ}(\pi^{P}(\theta),\theta,f) \geq p \mathrm{MQ}(\pi^{*}(\theta),\theta,f) 
\end{equation}

\begin{proof}
     
    The pairwise policy constructs an assortment so each patient is only offered one provider, and no provider is offered to more than one patient. 
    Under this scenario, each patient offered a non-zero assortment matches with probability $p$; that is, if $\pi^{P}(\theta)_{i,j}=1$, then the expected match quality is $p \theta_{i,j}$. 
    The match quality is then $\frac{p}{N} \sum_{i=1}^{N} \pi^{p}(\theta)_{i,j} \theta_{i,j}$

    Next, we upper bound the performance of the optimal policy.
    The optimal policy achieves a match quality of 
    $\mathbb{E}_{\sigma}[\frac{1}{N} \sum_{t=1}^{N} f_{\sigma_{t}}\left(\pi^{*}(\theta)_{\sigma_{t}} \odot \mathbf{y}^{(t)}\right)  \cdot \theta_{\sigma_{t}}]$
    For any order $\sigma$, we have that no provider can be selected by two patients, and each patient can select at most one provider. 
    That is, for any order, we can bound the match quality as 
    \begin{align}
        \mathrm{MQ}(\pi^{*}(\theta),\theta,f) \leq \max\limits_{\mathbf{X}, \sum_{i=1}^{N} X_{i,j} \leq 1, \sum_{j=1}^{M} X_{i,j} \leq 1} \frac{1}{N} \sum_{i=1}^{N} \sum_{j=1}^{M} X_{i,j} \theta_{i,j} \\ = \frac{1}{N} \sum_{i=1}^{N} \sum_{j=1}^{M} \pi^{P}(\theta)_{i,j} \theta_{i,j}
    \end{align}
    The second inequality comes from the definition of $\pi^{P}(\theta)$. 
    Therefore: 
    \begin{equation}
        p\mathrm{MQ}(\pi^{*}(\theta),\theta,f)\leq \frac{p}{N} \sum_{i=1}^{N} \pi^{P}(\theta)_{i,j} \theta_{i,j} =  \mathrm{MQ}(\pi^{P}(\theta),\theta,f)
    \end{equation}
\end{proof}

\thmgrouping* 
\begin{proof}
     
    We first compute the expected match quality for the pairwise policy. 
    Each of the $M$ matched patients has a $p$ chance of selecting their assigned provider, and $\theta_{i,j} \leq 1$; therefore, $\mathbb{E}_{\sigma,\theta}[\sum_{t=1}^{N}  (f_{\sigma_{t}}\left(\pi^{P}(\theta)_{\sigma_{t}} \odot \mathbf{y}^{(t)}\right) \cdot \theta_{\sigma_{t}})] \leq pM$.

    Next, we compute the expected match quality for the greedy policy. 
    First, note that because $\theta_{i,j}$ are distributed i.i.d, each provider has an equal chance of being the most preferred and available provider for patient $i$. 
    Under a uniform choice model, the match probability and match quality are independent; by symmetry for each provider $j$, $N/M$ patients aim to match with provider $j$ ,and so there is a $(1-(1-p)^{N/M})$ chance that provider $j$ is matched. 
    Next, we need to compute $\mathbb{E}[\theta_{i,j} | \text{j is top avail. for i}]$. 
    Fix $i$, then by symmetry, each $j$ has an equal chance of being the top provider. 
    Moreover, because $\theta_{i,j}$ is uniformly distributed, each provider has an equal selection probability for each timestep. 
    Therefore, $\mathbb{E}[\mathbf{y}^{(t)}_{j}] = \beta_{t}$ for some coefficients $\beta_{j}$. 
    Next, let $\mathbb{E}[(\theta_{i})_{(j)}]$ be the $j$th largest value among $\theta_{i,*}$. 
    Then
    \begin{equation}
        \mathbb{E}[\theta_{i,j} | \text{j is top avail. for i}] = \frac{\sum_{j=1}^{M} \beta_{t}(1-\beta_{t})^{j-1} \mathbb{E}[(\theta_{i})_{(j)}]}{\sum_{j=1}^{M} \beta_{t}(1-\beta_{t})^{j-1}}
    \end{equation}

    By Chebyshev's sum inequality, we have that 
    \begin{equation}
        \frac{\sum_{j=1}^{M} \beta_{t}(1-\beta_{t})^{j-1} \mathbb{E}[(\theta_{i})_{(j)}]}{\sum_{j=1}^{M} \beta_{t}(1-\beta_{t})^{j-1}} \geq \frac{1}{M} \sum_{j=1}^{M} \mathbb{E}[(\theta_{i})_{(j)}] = \frac{1}{2}
    \end{equation}
    Summing across all providers gives the appropriate ratio.  
\end{proof}

\textbf{Theorem~\ref{thm:lower_bound}}
The following holds for any $\mathbf{X}$ when $f$ is the uniform choice model with probability $p$:
{\small \begin{equation}
    p \cdot \langle g(h(\mathbf{X})), \theta \rangle \leq \mathrm{MQ}(\mathbf{X},\theta,f)
\end{equation}}

\begin{proof}
     
    We compute a matrix, $\hat{\mathbf{Y}}$, such that $Y_{i,j}$ corresponds to the probability that patient $i$ is matched with $j$. 
    Then $\mathrm{MQ}(\mathbf{X},\theta,f) = \hat{\mathbf{Y}} \cdot \theta$. We then need to prove that
    $p\cdot \langle g(h(\mathbf{X})),\theta \rangle - \langle \hat{\mathbf{Y}},\theta \rangle \leq 0$
    To prove this, first let $\ell_{u_{i,j}} = (p \cdot g(h(\mathbf{X}))_{i,j} - \hat{Y}_{i,j})$. 
    We will first bound $\sum_{j=1}^{M} \ell_{u_{i,j}}$ for a fixed $i$. 
    Let $\theta_{i,u_{i,1}} \geq \theta_{i,u_{i,2}} \ldots \theta_{i,u_{i,M}}$ for some coefficients $u$. 
    For any $m$, $\sum_{j=1}^{m} \ell_{u_{i,j}} = \sum_{j=1}^{m} p \cdot g(h(\mathbf{X}))_{i,u_{i,j}}-\hat{\mathbf{Y}}_{i,u_{i,j}}$. 
    We can next compute $\sum_{j=1}^{m} p \cdot g(h(\mathbf{X}))_{i,u_{i,j}}$. 
    Recall from Section~\ref{sec:lower_bound} that we can explicitly compute $g(h(\mathbf{X}))_{i,j}$ through the function $d(n) = \frac{1}{N} \sum_{k=1}^{N} (1-p \frac{n-1}{N-1})^{k-1}$. 
    Then by the of definition $g(h(\mathbf{X})_{i,j})$, we have $p \sum_{j=1}^{m} g(h(\mathbf{X}))_{i,j} = p (\sum_{j=1}^{m} X_{i,u_{i,j}} d(\lVert \mathbf{X}_{*,u_{i,j}} \rVert_{1}) \prod_{l=1}^{j-1} (1-(\lVert \mathbf{X}_{*,u_{i,l}} \rVert_{1})))$. 
    Next, we note that $\sum_{j=1}^{m} \hat{\mathbf{Y}}_{i,u_{i,j}}$ represents the probability that any of the top-$m$ options are available and selected. 
    For any provider $j$, the probability that $j$ is available is at least $d(\lVert \mathbf{X}_{*,u_{i,j}} \rVert_{1})$ (see Section~\ref{sec:lower_bound}). 
    Therefore, 
    \begin{align}        
        \small 
        \sum_{j=1}^{m} \hat{\mathbf{Y}}_{i,u_{i,j}}  
        \geq p \mathrm{Pr}[\text{Provider 1 avail.} \vee \text{Provider 2 avail.} \cdots] \\ \geq p \sum_{j=1}^{m} X_{i,u_{i,j}} d(\lVert \mathbf{X}_{*,u_{i,j}} \rVert_{1}) \prod_{l=1}^{j-1} (1-(\lVert \mathbf{X}_{*,u_{i,j}} \rVert_{1})
    \end{align}
    Combining our statements for $\hat{\mathbf{Y}}$ and $p  \cdot g(h(\mathbf{X}))$ gives that $\sum_{j=1}^{m} X_{i,u_{i,j}} \hat{\mathbf{Y}}_{i,u_{i,j}} \geq \sum_{j=1}^{m} d(\lVert \mathbf{X}_{*,u_{i,m}} \rVert_{1}) \prod_{l=1}^{m-1} (1-(\lVert \mathbf{X}_{*,u_{i,m}} \rVert_{1}) = \sum_{j=1}^{m} p \cdot g(h(\mathbf{X}))_{i,u_{i,j}}$. 
    Therefore, we have that $\sum_{j=1}^{m} \ell_{u_{i,j}} \leq 0$. 
    Next, we will show that $p \cdot \langle g(h(\mathbf{X}))_{i}, \theta_{i} \rangle -  \langle \hat{\mathbf{Y}}_{i},\theta_{i}\rangle = \sum_{j=1}^{M} \ell_{u_{i,j}} \theta_{i,j} \leq 0$ for all $i$. 
    We will prove this inductively; first note that $l_{i,u_{i,1}} \theta_{i,u_{i,1}} \leq 0$
    Next, assume that $\sum_{j=1}^{M-1} \ell_{u_{i,j}} \theta_{i,u_{i,j}}$, then: 
    
    \begin{align}
        \small 
        \sum_{j=1}^{m} \ell_{u_{i,j}} \theta_{i,u_{i,j}}  = \sum_{j=1}^{m-1} \ell_{u_{i,j}} \theta_{i,u_{i,j}} + \ell_{m} \theta_{i,u_{i,m}} 
        \\ \leq \sum_{j=1}^{m-1} \ell_{u_{i,j}} \theta_{i,u_{i,j}} + \ell_{m} \theta_{i,u_{i,m-1}}  
        \\ \leq (\sum_{j=1}^{m-1} \ell_{u_{i,j}}) (\sum_{j=1}^{m-1} \theta_{i,u_{i,j}} - \theta_{i,u_{i,m-1}}) \leq 0
    \end{align}
Therefore, $p \cdot \langle g(h(\mathbf{X}))_{i}, \theta_{i} \rangle \leq 0$ for all $i$, and so summing across all $i$ gives $p \cdot \langle g(h(\mathbf{X})) ,\theta \rangle \leq 0$
\end{proof}

% \thmtightlowerbound*
% \begin{proof}
%     Let $\mathbf{X} = \pi^{P}(\theta)$ for some given $\theta$. 
%     Then because $\pi^{P}(\theta)$ is computed by solving a 1-1 bipartite matching problem, we know that $\sum_{i=1}^{N} X_{i,j} \leq 1$ for all $j$ and $\sum_{j=1}^{M} X_{i,j} \leq 1$ for all $i$. 
%     Next, we note that $0 \leq \lVert \mathbf{X}_{*,j} \rVert \leq 1$, and moreover, if $\mathbf{X}_{i,j} = 1$, then $\lVert \mathbf{X}_{*,j} \rVert = 1$. 
%     Because $d(1) = 1$, we have that $h(\mathbf{X})_{i,j} = \mathbf{X}_{i,j}$. 
%     Finally, because each row $i$ of $h(\mathbf{X})_{i,j}$ has at most one $j$ with $\mathbf{X}_{i,j} = 1$, and because $\bar{\mathbf{X}}_{i,u_{k}} = h(\mathbf{X})_{i,u_{k}} p \prod_{k'=1}^{k-1} (1-h(\mathbf{X})_{i,u_{k}})$, we have that $g(h(\mathbf{X})) = h(\mathbf{X}) = p \mathbf{X}$. 
%     Therefore, $\langle g(h(\mathbf{X})) ,\theta \rangle$ is exactly p times the value of the bipartite match. 
%     Finally, 
%     \begin{equation}
%         \mathbb{E}_{\sigma}[\sum_{t=1}^{N} (f_{\sigma_{t}}\left(\mathbf{X}_{\sigma_{t}} \odot \mathbf{y}^{(t)}\right)  \cdot \theta_{\sigma_{t}})]  = p \sum_{i=1}^{N} \sum_{j=1}^{M} \pi^{P}(\theta)_{i,j} \theta_{i,j} = p (f(g(\mathbf{X})) \cdot \theta).
%     \end{equation}
%     Therefore, the expected match quality for the pairwise policy $\pi^{P}(\theta)$ is the same as $p \langle \mathbf{X},\theta \rangle$.
% \end{proof}

\thmmnlmatch*
\begin{proof}
     We first prove the match rate equality, then prove the corresponding claim for match quality. 
    For match rate, we note that there are $\sum_{i} \mathbbm{1}[v(\theta)_{i} \geq 0]$ different patients who have a non-empty assortment (that is, patient $i$ has an assortment with $v(\theta)_{i}$). 
    Each patient matches with probability $\frac{\exp(\theta_{i,v(\theta)_{i}})}{\exp(\theta_{i,v(\theta)_{i}}) + \exp(\gamma)}$. 
    Summing over all $v(\theta)_{i} \geq 0$, and dividing by $N$ gives the average match rate. 
    We can bound the match quality as 
    {\small \begin{align}
        \mathbb{E}_{\sigma}[\frac{1}{N} \sum_{t=1}^{N} f_{\sigma_{t}}\left(\pi^{*}(\theta)_{\sigma_{t}} \odot \mathbf{y}^{(t)}\right)  \cdot \theta_{\sigma_{t}}] \\ \leq \max\limits_{\mathbf{X}, \sum_{i=1}^{N} X_{i,j} \leq 1, \sum_{j=1}^{M} X_{i,j} \leq 1} \frac{1}{N} \sum_{i=1}^{N} \sum_{j=1}^{M} X_{i,j} \theta_{i,j} \\ = \frac{1}{N} \sum_{i=1}^{N} \sum_{j=1}^{M} \pi^{P}(\theta)_{i,j} \theta_{i,j}
    \end{align}}
    Next, we note that each edge in the optimal linear program $v(\theta)_{i}$ exists with probability $\frac{\exp(\theta_{i,v(\theta)_{i}})}{\exp(\theta_{i,v(\theta)_{i}} + \exp(\gamma))}$ if $v(\theta)_{i} \geq 0$, and $0$ otherwise. 
    Therefore, we get the final claim by summing over all edges and taking the average. 
\end{proof}

\thmordering*
\begin{proof}
We first note that $G$ is acyclic; if $\theta_{i,v(\theta)_{i}} \leq \theta_{i,v_{i'}}$ and $\theta_{i',v_{i'}} \leq \theta_{i',v(\theta)_{i}}$, then swapping $v(\theta)_{i}$ and $v_{i'}$ improves the pairwise policy $\pi^{P}(\theta)$. 
We note that $\pi^{P}(\theta)$ is the optimal solution for $\sum_{i=1}^{N} \sum_{j=1}^{M} \pi^{P}(\theta) \theta_{i,j}$ by definition, and so such a swap would violate this property, implying that $G$ is acyclic. 

Next, suppose that $\sigma^{*}$ is not a reverse topological ordering. 
Then there exists nodes $i$ and $i'$, so that $i$ comes before $i'$ in $\sigma^{*}$ and there exists an edge from $i$ to $i'$. 
Consider the following cases: 
\begin{enumerate}
    \item \textbf{Case 1:} There exists no node $w$, between $i'$ and $i$ in $\sigma$ so that $w$ and $i'$ share an edge (in either direction). 
    Then $i$ and $i'$ can be placed in sequence ($i$ comes immediately before $i'$) without impacting the expected reward. 
    This is because all nodes $w$ between $i$ and $i'$ have no impact on the preferences of $i'$ because $\theta_{i',v_{i'}} \geq \theta_{i',v_{w}}$. 
    Moreover, swapping $i$ and $i'$ can only increase the expected match quality; placing $i$ before or after $i'$ has no impact on the match quality for patient $i'$ (as $\theta_{i',v_{i'}} \geq \theta_{i',v(\theta)_{i}}$), but placing $i$ after $i'$ can increase the expected match quality if $v(\theta)_{i}$ goes unmatched. 
    \item \textbf{Case 2:} Suppose there exists a node $w$, such that there is an edge from $w$ to $i'$ Then the pair $(w,i')$ violates the reverse topological order, and so we can recurse on this. Note that this is a smaller length pair of nodes within the ordering $\sigma^{*}$. Similarly, if there is an edge from $i$ to $w$, then $(i,w)$ violates the reverse topological order ,and so we can recurse on this
    \item \textbf{Case 3:} Suppose that there exists a node $w$ such that there are edges from $i'$ to $w$ and $w$ to $i$. Then this creates a cycle, breaking the acyclic property mentioned earlier. 
    \item \textbf{Case 4:} Suppose that there exists a node $w$ such that there is an edge from $i'$ to $w$. There exists no path from $i'$ to $i$ due to the acyclic nature of the graph. 
    Additionally, $w$ does not have an edge over $i$ or any of the ancestors of $i$. 
    Move $i'$ and all of its descendants before $i$ in the graph; there are no edges from the descendants of $i'$ to $i$. 
    Therefore, this flips the order of $i$ and $i'$ without impacting any of the descendants; because only $i$ and $i'$ are impacted by this change (as the descendants of $i'$ are not impacted by $i$), such a change can only increase the expected match quality, as $i$ can potentially match with $v_{i'}$. 
    \item \textbf{Case 5:} Suppose there exists a node $w$ such that there exists an edge from $w$ to $i$. 
    Due to the acyclic nature of the graph, there are no edges from $i'$ (or any of its ancestors) to $w$ or any of the ancestors of $i$. 
    Now consider moving $i$ and its ancestors after $i'$. 
    Because there are no edges from $i'$ to the ancestors of $i$, the result of the assortment from the ancestors of $i$ has no impact on $i'$. 
    Therefore, while patient $i$ can now potentially match with provider $v_{i'}$, none of the ancestors of $i$ are negatively impacted, improving aggregate match quality.  
\end{enumerate}
Our cases cover every scenario for nodes between $i$ and $i'$, and in all cases, we can make a minor change to the order $\sigma^{*}$ to bring it closer to a reverse topological ordering while improving match quality or recurse upon a subset of the ordering. 
\end{proof}

\optimalordering*
\begin{proof}
    Let $\sigma^{*}$ be the optimal ordering and let $\sigma$ be any ordering from batches $b_{1},b_{2},\ldots,b_{L}$. 
    Then we will prove that for any node $i$, the order of descendants of $i$ will be the same in $\sigma^{*}$ and $\sigma$. 

    To prove this, consider a node $i$. 
    Let the descendants of node $i$ in the optimal ordering be $d_{1},d_{2},\ldots,d_{i}$. 
    We note that in $\sigma$, each $d_{i'}$ is in a separate partition by properties of $b_{k}$. 
    Additionally, because $b_{1},b_{2},\ldots,b_{L}$ is a partition of $\sigma^{*}$, the order of $d_{1},d_{2},\ldots,d_{i}$ is maintained. 
    Therefore, for any node, the same set of ancestors is maintained through the partition $b_{k}$. 

    Next, we will show that the descendants of patient $i$ dictate what patient $i$ selects in $\pi^{A}$. 
    Let $Z_{1}, Z_{2},\ldots,Z_{N}$ be a set of Bernoulli random variables, each of which is 1 with probability $p$.
    If $Z_{t}=1$ then patient $\sigma_{t}$ will select the highest available provider in their assortment, and if $Z_{t}=0$, then patient $\sigma_{t}$ will select no provider. 
    Additionally, let $V_{1}, V_{2}, \ldots, V_{N}$ be a set of random variables, where $Z_{t}$ denotes the provider selected by patient $\sigma_{t}$ ($V_{t} = 0$ if $Z_{t}=0$). 
    Note that $V_{1}$ is only a function of $Z_{1}$; if $Z_{1}=0$, then $V_{1}=0$, and otherwise, $V_{1} = \argmax_{j} \theta_{\sigma_{1},j}$. 
    Next, let $D_{\sigma_{t}} \subseteq [N]$ be the set of descendants for patient $\sigma_{t}$; that is, $d \in D_{\sigma_{t}}$ means there exists a path from $\sigma_{t}$ to $d$ in $G$. 
    Suppose that $V_{\sigma_{t'}} = f(\{Z_{d}\}_{d \in D_{\sigma_{t'}}})$ for all $t' \leq t$. 
    Then patient $\sigma_{t+1}$ will only select some $j$ so $\theta_{\sigma_{t+1},j} \geq \theta_{\sigma_{t+1},v_{\sigma_{t+1}}}$. 
    By the construction of the graph $G$, this corresponds to edges connected to patient $\sigma_{t+1}$, and so $V_{t+1} = g(\{V_{d}\}_{d \in D_{\sigma_{t+1}})}$.
    Next, note that $V_{d} = f(\{Z_{d'}\}_{d' \in D_{d}})$. 
    Therefore, we can rewrite $V_{t+1}$ as 
    \begin{equation}
        V_{t+1} = g(\{f(\{Z_{d'}\}_{d' \in D_{d}})\}_{d \in D_{t+1})}.
    \end{equation}
    We finally collect like terms and note that the set of descendants of the children of $\sigma_{t+1}$ is the set of descendants of $\sigma_{t+1}$. 
    Therefore, we rewrite $V_{t+1}$ as 
    \begin{equation}
        V_{t+1} = g(\{Z_{d}\}_{d \in D_{t+1}})
    \end{equation}
    Therefore, $V_{t+1}$ only depends on the descendants of $t+1$.
    Therefore, if the ordering of descendants is fixed, and $Z_{1},Z_{2},\ldots,Z_{N}$ are decided a priori, then $V_{\sigma_{t}}$ is also fixed. 
    The ordering of descendants is the same between $\sigma \sim S(b_{1},b_{2},\ldots,b_{L})$ and $\sigma^{*}$, so both achieve the same match quality. 
\end{proof}

\end{document}